\documentclass[a4paper]{article}

\usepackage[pages=all, color=black, position={current page.south}, placement=bottom, scale=1, opacity=1, vshift=5mm]{background}
\SetBgContents{}      

\usepackage[margin=1in]{geometry} 

\usepackage{graphicx}


\usepackage{amsmath}
\usepackage{amssymb}
\usepackage{amsthm}
\usepackage{marvosym}
\usepackage[scr=boondox]{mathalpha}


\usepackage{algorithm}
\usepackage{algpseudocode}


\usepackage{tikz}
\usetikzlibrary{automata,positioning}
\usepackage[utf8]{inputenc} 
\usepackage[T1]{fontenc}    
\usetikzlibrary{shapes.geometric, fit, positioning}
\usetikzlibrary{shadows.blur}

\usepackage{amsmath}
\usepackage{amsthm}
\usepackage{amssymb}

\usepackage[utf8]{inputenc}
\usepackage{hyperref}
\hypersetup{
	unicode,
	pdfauthor={Venturi Riccardo, Emanuele Rodaro},
	pdftitle={The hereditariness problem for the Černý conjecture},
	pdfsubject={The hereditariness problem for the Černý conjecture},
	pdfkeywords={article, template, simple},
	pdfproducer={LaTeX},
	pdfcreator={pdflatex}
}

\theoremstyle{theorem}
\newtheorem{teo}{Theorem}[section]
\newtheorem{lemma}[teo]{Lemma}
\newtheorem{cor}[teo]{Corollary}
\newtheorem{conj}[teo]{Conjecture}
\newtheorem{opbm}[teo]{Open Problem}
\newtheorem{prop}[teo]{Proposition}

\newtheorem*{teo*}{Theorem}

\theoremstyle{definition}
\newtheorem{defi}[teo]{Definition}

\theoremstyle{remark}
\newtheorem{oss}[teo]{Remark}
\newtheorem{ese}[teo]{Example}
\newcommand{\IdM}{\operatorname{Id}}

\usepackage{graphicx, color}
\graphicspath{{fig/}}

\usepackage{algorithm, algpseudocode} 
\usepackage{mathrsfs} 

\usepackage{lipsum}


\newcommand{\A}{\mathcal{A}}
\newcommand{\M}{\text{M}}

\newcommand{\Syn}{\text{Syn}}
\newcommand{\rad}{\text{Rad}}
\newcommand{\Con}{\text{Cong}}

\title{The hereditariness problem for the Černý conjecture}
\author{
 Emanuele Rodaro
\footnote{ 
Politecnico di Milano, Milan, Italy.
\texttt{emanuele.rodaro@polimi.it}}
\and 
 Riccardo Venturi
\footnote{Universidade NOVA de Lisboa, Lisbon, Portugal.
\texttt{r.venturi@campus.fct.unl.pt}}
}

\usepackage{csquotes}
\usepackage[
  style=numeric,
  sorting=nyt,
  sortcites=true,
  maxnames=5,
  minnames=4
]{biblatex}
\addbibresource{biblio.bib}

\begin{document}

\maketitle
	
\begin{abstract}
This paper addresses the lifting problem for the \v{C}ern\'y conjecture: namely, whether the validity of the conjecture for a quotient automaton can always be transferred (or “lifted”) to the original automaton. Although a complete solution remains open, we show that it is sufficient to verify the \v{C}ern\'y conjecture for three specific subclasses of reset automata: \emph{radical}, \emph{simple}, and \emph{quasi-simple}.  
Our approach relies on establishing a Galois connection between the lattices of congruences and ideals of the transition monoid. This connection not only serves as the main tool in our proofs but also provides a systematic method for computing the radical ideal and for deriving structural insights about these classes.  
In particular, we show that for every simple or quasi-simple automaton $\A$, the transition monoid $\M(\A)$ possesses a unique ideal covering the minimal ideal of constant (reset) maps; a result of similar flavor holds for the class of radical automata.
\end{abstract}

\noindent\textbf{2020 Mathematics Subject Classification.} 
68Q70, 68Q45, 20M35, 20M30, 06A15, 16D60, 05E99.

\noindent\textbf{Keywords.} 
Černý conjecture, synchronizing automata, Galois connection, representation theory of transition monoids, radical automata.

\section{Introduction}
A complete deterministic semiautomaton is an action of the free monoid $\Sigma^*$ on a finite set $Q$. More concretely, it can be described as a tuple $\A = (Q, \Sigma, \delta)$, where $Q$ is a finite set of states, $\Sigma$ is a finite alphabet, and $\delta: Q \times \Sigma \to Q$ is the transition function. This function specifies the action of the alphabet $\Sigma$ on the state set $Q$ and extends naturally to words in $\Sigma^*$. Combinatorially, a semiautomaton can be viewed as a directed, edge-labelled graph in which each vertex has exactly one outgoing edge labelled by each $a \in \Sigma$. A complete deterministic semiautomaton is commonly referred to simply as an \textbf{automaton}, or sometimes as a \textbf{DFA}, and is often used in theoretical computer science to recognize languages when an initial state and a set of final states are specified. However, our focus is on automata from a combinatorial and algebraic perspective. Our primary motivation stems from the \v{C}ern\'y conjecture, one of the longest-standing open problems in automata theory, which is still unsolved after more than sixty years. An automaton is called \textbf{synchronizing} (or \textbf{reset}) if there exists a word $w \in \Sigma^{*}$, referred to as a \emph{synchronizing} (or \emph{reset}) word, that maps all states to a single, identical state. Formally, for a synchronizing word $w$, we have $q \cdot w = q' \cdot w$ for all states $q,q' \in Q$. The \v{C}ern\'y conjecture asserts that any synchronizing automaton with $n$ states admits a synchronizing word of length at most $(n-1)^2$ (see \cite{Ce64}). 
The study of this conjecture has generated an extensive body of work investigating various aspects of synchronizing automata. This research addresses algorithmic methods for finding (``short") reset words, establishes proofs of the conjecture for specific classes of automata, and explores connections with other areas such as the theory of codes and symbolic dynamics, most notably through the Road Coloring Problem. See, for instance, \cite{AlRo, BeBePe, Dubuc, Epp, Kari, Ro18Adv, Steinb, Trah, Trah_road_coloring, Vo_CIAA07, Don, BehJoh, Perrin_unamb_coded_shifts}. For a survey of all possible subclasses of automata for which the \v{C}ern\'y conjecture has been solved, see \cite{survey_volkov_2025}. The bound $(n-1)^2$ is known to be tight, as it is attained by the \v{C}ern\'y automata—the only known infinite family achieving this bound—along with a few other sporadic examples (see \cite{Vo_Survey}). An automaton whose shortest reset word attains the $(n-1)^2$ bound is called an \textbf{extremal automaton}. Our main result proves that the existence of extremal automata is confined to significantly more restricted classes than 
previously known.
Regarding the best known upper bound for the reset threshold, a general  cubic bound on the length of synchronizing words was first established in the seminal works of Frankl and Pin~\cite{Frankl, Pin}. Subsequent research has led to improvements on this bound, with notable contributions for the general case by Shitov and Szykula \cite{Shitov, Szy17}, and for the specific class of semisimple synchronizing automata in \cite{RoPack18}. For a comprehensive overview of synchronizing automata and the  \v{C}ern\'y conjecture, we refer the reader to several excellent surveys, notably those by Kari and Volkov~\cite{KaVo} and by Volkov~\cite{Vo_Survey, VolkSurv2}.
\\
Among the various approaches to this conjecture, representation theory of monoids plays a particularly important role. 
For further reading on this topic, see the works of Steinberg~\cite{Steinb, SteinbEJC}, Almeida et al.~\cite{AlMaStVo}, B\'eal and Perrin~\cite{BeBePe}, 
Dubuc~\cite{Dubuc}, and Arnold and Steinberg~\cite{ArnSteinb}, among others. In this direction, the papers \cite{AlRo,RoPack18} explore \v{C}ern\'y's conjecture from a ring-theoretic perspective. In \cite[Corollary~9]{AlRo}, it is shown that if an automaton contains elements within its radical whose lengths are bounded above by a quadratic function of the number of states, then this condition is equivalent to the existence of reset words whose lengths are likewise bounded quadratically in the number of states. Consequently, establishing such an upper bound for radical words would settle the corresponding quadratic bound conjecture. Recall that the set of radical words forms a nilpotent ideal, so a suitable power of any such word is a reset word.

In this paper, we continue to develop the representation-theoretic approach to synchronizing automata, focusing specifically on the relationship between the lattice of ideals of the transition monoid of a given automaton $\A$ and its lattice of congruences $\text{Cong}(\A)$.
Congruences are equivalence relations on the state set $Q$ that are compatible with the automaton's action. The motivation for studying these congruences arises from a central theme of this paper, which addresses the following open problem:

\begin{opbm}[Hereditariness of the \v{C}ern\'y conjecture]
Let $\A$ be a synchronizing automaton, and $\sigma \in \mathrm{Cong}(\A)$ a non-trivial congruence. If the quotient automaton $\A/\sigma$ satisfies the \v{C}ern\'y conjecture, can this property be lifted to $\A$? In particular, does it guarantee that $\A$ admits a reset word of length within the \v{C}ern\'y bound?

\end{opbm}
If one can show that a \v{C}ern\'y reset word—a synchronizing word of length bounded by $(n-1)^2$—for the quotient automaton, always implies the existence of a corresponding \v{C}ern\'y reset word for the original automaton, then it would suffice to prove the conjecture for simple automata via a standard induction argument. 
In particular, we focus on the following main open problem:

\begin{conj}\label{conj: simple}
If the \v{C}ern\'y conjecture holds for all simple automata—i.e., automata whose only congruences are the identity and universal relations—then it holds for all automata in general.
\end{conj}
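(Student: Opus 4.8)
The plan is to argue by induction on $n=|Q|$, the simple case being precisely the hypothesis of the conjecture. So suppose $\A$ carries a congruence $\sigma$ with $\mathrm{id}\subsetneq\sigma\subsetneq\nabla$; the quotient $\A/\sigma$ has $m<n$ states and, by the induction hypothesis, a reset word $w$ with $|w|\le(m-1)^2$ that sweeps all of $Q$ into a single $\sigma$-class. What remains is to extend $w$ to a reset word for $\A$ within the leftover budget $(n-1)^2-(m-1)^2$, i.e.\ to synchronize $Q\cdot w$, which lies inside one fiber of $\sigma$; this is exactly the hereditariness problem with explicit length control. The device I would use is a \emph{fiber automaton}: choosing $w$ so that $\A/\sigma$ is reset to a fixed state $q^{*}$, the set $Q$ is swept into the fiber $C^{*}$ over $q^{*}$ and $w$ becomes a loop at $q^{*}$; then form the automaton $\mathcal F$ with state set $C^{*}$ whose letters are short $\Sigma$-words looping at $q^{*}$ in $\A/\sigma$, each acting on $C^{*}$ by the induced transformation of $\A$ (well defined, since a loop at $q^{*}$ maps $C^{*}$ into $C^{*}$). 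Since $\A$ is synchronizing, $\mathcal F$ is synchronizing as well (for instance $w\,u\,w$ collapses $C^{*}$ for any reset word $u$ of $\A$); and if short loops already suffice to synchronize $\mathcal F$, then the induction hypothesis applied to $\mathcal F$ — which has $|C^{*}|<n$ states — produces, after composing at most $(|C^{*}|-1)^2$ such letters, a word finishing the reset of $\A$, with the lengths summing to at most $(n-1)^2$ for a suitable choice of $\sigma$.

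Everything therefore rests on choosing $\sigma$ so that bounded-length loops at $q^{*}$ generate a synchronizing action on $C^{*}$, and this is where the Galois connection between $\mathrm{Cong}(\A)$ and the ideal lattice of $\M(\A)$ comes in. Sending a congruence $\sigma$ to the ideal $I_\sigma=\{t\in\M(\A):p\,\sigma\,q\Rightarrow pt=qt\}$ of transformations constant on $\sigma$-classes, and an ideal $I$ to the congruence $\sigma_I$ identifying the pairs separated by no element of $I$, yields an antitone Galois connection whose two extreme closed pairs are $(\mathrm{id},\M(\A))$ and $(\nabla,\mathrm{Ker}(\A))$, where $\mathrm{Ker}(\A)$ is the minimal ideal of constant maps. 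The plan is to show that whenever $\A$ possesses a non-trivial congruence $\sigma$ with $I_\sigma$ strictly above $\mathrm{Ker}(\A)$, the structure of the ideal lattice (in particular the behaviour of the closure operator $\sigma\mapsto\sigma_{I_\sigma}$) forces the fiber automaton over $q^{*}$ to synchronize using only short loops, which closes the induction for such $\A$. The automata admitting no non-trivial congruence of this kind are exactly the three distinguished classes: simple automata ($\mathrm{Cong}(\A)$ trivial), quasi-simple automata (a mild relaxation of simplicity, for which $\M(\A)$ still has a unique ideal covering $\mathrm{Ker}(\A)$), and radical automata (the analogous phenomenon with $\rad(\A)$ in place of $\mathrm{Ker}(\A)$). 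Hence Conjecture \ref{conj: simple} would follow from the \v{C}ern\'y conjecture for these three classes, and from the \v{C}ern\'y conjecture for simple automata alone once one also reduces the radical and quasi-simple cases to the simple one.

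The step I expect to be the main obstacle is the quantitative one: upgrading ``$I_\sigma$ collapses $C^{*}$'' to ``\emph{short} loops at $q^{*}$ collapse $C^{*}$''. Qualitatively the fiber automaton is always synchronizing, but the loops one can exhibit directly (such as $w\,u\,w$) have quadratic length, and $(|C^{*}|-1)^2$ of those overflow the budget $(n-1)^2-(m-1)^2$ — most blatantly when the chosen fiber $C^{*}$ is large relative to $n$, i.e.\ when $\sigma$ is a ``weak'' congruence. Overcoming this requires understanding the ideals immediately above $\mathrm{Ker}(\A)$ together with the shortest words that reach them, and it is precisely here that the analysis stalls for radical and quasi-simple automata, leaving them — alongside simple automata — as the irreducible core of the conjecture. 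A reasonable intermediate target, already improving \cite[Corollary~9]{AlRo}, would be to establish merely the quadratic bound for radical and quasi-simple automata; if that bound can be kept below $(n-1)^2$, it would confine all extremal automata to the simple case.
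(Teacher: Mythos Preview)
The statement is a \emph{conjecture} that the paper explicitly leaves open (``We were not able to prove this conjecture''); there is no proof in the paper to compare against. What the paper does prove is the weaker three-class reduction (Theorem~\ref{corfinale}): it suffices to check \v{C}ern\'y for simple, quasi-simple, and radical automata. Your proposal likewise does not prove the conjecture --- you yourself flag the quantitative obstacle --- and in its last paragraph you land on the same three-class picture. So as a proof of Conjecture~\ref{conj: simple} the proposal has a genuine gap, and it is exactly the one you name: turning ``$I_\sigma$ collapses $C^{*}$'' into ``\emph{short} loops collapse $C^{*}$'' is the whole problem, and nothing in the sketch supplies it.

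Where your route and the paper's diverge is in how the three-class reduction itself is obtained. You propose a fiber automaton $\mathcal F$ on a single $\sigma$-class, with letters given by short loops at $q^{*}$, and you want to apply induction to $\mathcal F$. The paper never builds $\mathcal F$; it avoids the loop-length issue entirely by arranging that the ``second word'' is always a reset word for a \emph{quotient} of $\A$, not a concatenation of many loops. Concretely: in the non-semisimple non-radical case (Theorem~\ref{teosemrad}) either two distinct atoms give $u_1u_2\in\Syn(\A)$ by Proposition~\ref{intsin}, or the unique atom satisfies $\rad(\A)=\rad(\A/\rho)$, whence $\Syn(\A/\rho)\subseteq\rad(\A)$ and a single \v{C}ern\'y word $u$ for $\A/\rho$ already gives $u^2\in\Syn(\A)$; in the semisimple non-quasi-simple case (Theorem~\ref{simsemquas}) either $\Syn(\A/\sigma)=\Syn(\A)$ outright (Proposition~\ref{congminsem}), or Lemmas~\ref{335}--\ref{337} manufacture two semisimple quotients $\A/\tau$, $\A/\rho$ with $\tau\cap\rho=\Delta_\A$, and concatenating their \v{C}ern\'y words suffices. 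In each branch the total length is at most $2(n/2-1)^2<(n-1)^2$, with no per-letter overhead. Your fiber approach, by contrast, would need each $\mathcal F$-letter to have $\Sigma$-length $O(1)$ (or at worst $O(n/|C^{*}|)$) to fit $( |C^{*}|-1)^2$ of them into the budget, and your sketch provides no mechanism for this --- so even as a proof of the weaker Theorem~\ref{corfinale} it is incomplete. Finally, your informal descriptions of ``quasi-simple'' and ``radical'' do not match the paper's definitions (quasi-simple means $\mathrm{Cong}(\A)$ has a non-trivial minimum $\sigma$ with $I(\sigma)=\Syn(\A)$; radical means non-semisimple with a non-trivial minimum and $\rad(\A)\neq\rad(\A/\rho)$), so the claim that these are ``exactly'' the automata with no usable $\sigma$ is asserted rather than argued.
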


We were not able to prove this conjecture; however, our main theorem, stated in Theorem~\ref{corfinale}, provides the following reduction:
\begin{teo*}
If the \v{C}ern\'y conjecture holds for strongly connected simple, quasi-simple, and radical automata, then it holds in general. Moreover, every extremal automaton belongs to one of these classes.
\end{teo*}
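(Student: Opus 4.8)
The plan is to argue by induction on $n = |Q|$, using the Galois connection between $\mathrm{Cong}(\mathcal{A})$ and the ideal lattice of $\M(\mathcal{A})$ to choose, at each step, a congruence through which a \v{C}ern\'y‑bounded reset word can be lifted. First I would remove the connectivity hypothesis: a synchronizing automaton has a unique terminal strongly connected component $Q_0$, its restriction $\mathcal{A}|_{Q_0}$ is again synchronizing, and the standard reduction (see \cite{Vo_Survey}) bounds $\mathrm{rt}(\mathcal{A})$ by $\mathrm{rt}(\mathcal{A}|_{Q_0})$ plus a quantity comfortably below $(n-1)^2-(|Q_0|-1)^2$, so a \v{C}ern\'y reset word for $\mathcal{A}|_{Q_0}$ yields one for $\mathcal{A}$. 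From here on $\mathcal{A}$ is strongly connected (this reduction also shows that an extremal automaton is necessarily strongly connected, since otherwise $\mathrm{rt}(\mathcal{A}) < (n-1)^2$ strictly). For the inductive step, if $\mathcal{A}$ is simple we invoke the hypothesis; otherwise I would let $\rho := \Psi(\rad)$ be the radical congruence, where $\rad \subseteq \M(\mathcal{A})$ is the two‑sided ideal consisting of those transformations whose image in the natural $(n-1)$‑dimensional representation lies in the Jacobson radical of the corresponding algebra; it satisfies $\rad^{n-1}\subseteq K$, with $K$ the minimal ideal of constant maps, so a suitable power of every radical word is a reset word, as in \cite{AlRo}. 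The decisive observation supplied by the Galois connection is that two states lie in the same $\rho$‑class exactly when every element of $\rad$ — in particular every radical word — agrees on them; hence each radical word is constant on each $\rho$‑class.

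Next I would case‑split on the location of $\rho$ in $\mathrm{Cong}(\mathcal{A})$. If $\mathcal{A}$ is radical — the normalization under which $\rad$ covers the minimal ideal, equivalently $\rho$ is a maximal congruence, or however the paper fixes the definition — then $\mathcal{A}$ is already on our list and we use the hypothesis. Otherwise $\rho$ is nontrivial but not maximal, so $\mathcal{A}/\rho$ is a strongly connected synchronizing automaton with strictly fewer states; by induction it has a reset word $w$ with $|w|\le(|Q/\rho|-1)^2$. Lifting $w$ drives $Q$ into a single $\rho$‑class $C$, and appending any radical word $u$ collapses $C$ to a point, so $wu$ resets $\mathcal{A}$ with $|wu|=|w|+|u|$. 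The slack $(n-1)^2-(|Q/\rho|-1)^2$ left for $|u|$ grows with the size of the $\rho$‑classes, and the structural consequences drawn from the Galois connection — above all the explicit identification of $\rad$ and the "unique ideal covering the minimal ideal" phenomenon — are what one uses to exhibit a short enough radical word $u$ whenever $\mathcal{A}$ is not exceptional. The quasi‑simple class is precisely the residual obstruction: automata that are neither simple nor radical, but whose congruence lattice is so rigid (only $\Delta$, $\nabla$, and the canonical radical congruence) that no quotient strictly simplifies the problem; for these we again fall back on the hypothesis. This completes the reduction.

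For the "moreover" clause I would run the same argument as an unconditional estimate, by contraposition. If $\mathcal{A}$ is strongly connected but lies in none of the three classes, it has a nontrivial, non‑maximal radical congruence $\rho$; applying the lifting above with a reset word $w$ for the strictly smaller automaton $\mathcal{A}/\rho$ together with a short radical completion $u$ produces a reset word of $\mathcal{A}$ of length strictly below $(n-1)^2$. Hence $\mathcal{A}$ is not extremal, and contrapositively every extremal automaton is simple, quasi‑simple, or radical — and strongly connected, by the first paragraph.

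The step I expect to be the main obstacle is the length bookkeeping in the lifting: one must bound the length of a shortest radical word $u$ and verify that, outside the three exceptional classes, it is small relative to the slack $(n-1)^2-(|Q/\rho|-1)^2$; and for the unconditional "moreover" this comparison must moreover be strict and must not secretly invoke the \v{C}ern\'y bound for the quotient. Bounding radical word lengths is exactly what is open in general (\cite{AlRo}), so the genuine content of the theorem is structural: the Galois connection together with the classification of ideals covering the minimal ideal must be pushed far enough to certify that whenever $\mathcal{A}$ avoids all three exceptional classes its radical congruence is large and its radical words short enough for the crude estimate already to beat the \v{C}ern\'y bound — so that the induction can stall only at strongly connected simple, quasi‑simple, and radical automata.
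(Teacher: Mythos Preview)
Your proposal has a genuine structural gap, rooted in incorrect guesses about what ``radical'' and ``quasi-simple'' mean in this paper, and in a lifting scheme that the paper deliberately avoids.

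First, the definitions. A synchronizing automaton $\mathcal{A}$ is \emph{radical} when it is non-semisimple, $\mathrm{Cong}(\mathcal{A})\setminus\{\Delta_\mathcal{A}\}$ has a minimum, \emph{and} $\rad(\mathcal{A})\neq\rad(\mathcal{A}/\rho)$ for the radical congruence $\rho$. It is \emph{quasi-simple} when $\mathrm{Cong}(\mathcal{A})$ has a non-trivial minimum $\sigma$ with $I(\sigma)=\Syn(\mathcal{A})$; this forces $\mathcal{A}$ to be \emph{semisimple}, so your description of quasi-simple as ``only $\Delta$, $\nabla$, and the radical congruence'' is off: in that class the radical congruence is $\nabla_\mathcal{A}$ itself.

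This last point breaks your whole framework in the semisimple case. When $\rad(\mathcal{A})=\Syn(\mathcal{A})$ one has $\rho(\rad(\mathcal{A}))=\nabla_\mathcal{A}$, so $\mathcal{A}/\rho$ is a single state and gives no leverage. The paper therefore splits the argument into two separate inductions (Theorems~\ref{teosemrad} and~\ref{simsemquas}): the non-semisimple reduction uses the radical congruence, while the semisimple reduction never touches radical words at all. In the semisimple case with a unique atom $\sigma$, the key is Proposition~\ref{congminsem}: if $I(\sigma)\neq\Syn(\mathcal{A})$ (i.e.\ $\mathcal{A}$ is not quasi-simple) then in fact $\Syn(\mathcal{A}/\sigma)=\Syn(\mathcal{A})$, so a \v{C}ern\'y reset word for the quotient is already one for $\mathcal{A}$. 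With several atoms one builds two congruences $\tau,\rho$ with $\tau\cap\rho=\Delta_\mathcal{A}$, both yielding semisimple quotients, and concatenates the two quotient reset words.

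Second, even in the non-semisimple case your lifting ``append a short radical word $u$'' is not what happens, and your own caveat identifies why: bounding the length of radical words is exactly the open problem of \cite{AlRo}. The paper sidesteps it. When $\mathcal{A}$ is non-semisimple, not radical, and has a unique atom, the defining condition gives $\rad(\mathcal{A})=\rad(\mathcal{A}/\rho)$; hence any reset word $u$ for $\mathcal{A}/\rho$ already lies in $\rad(\mathcal{A})$, and by Lemma~\ref{lem: annh} the square $u^2\in\Syn(\mathcal{A}/\rho)\cdot\rad(\mathcal{A})\subseteq\Syn(\mathcal{A})$. No auxiliary radical word is sought. When there are two atoms $\sigma_1,\sigma_2$, one concatenates quotient reset words and uses $\Syn(\mathcal{A}/\sigma_1)\cap\Syn(\mathcal{A}/\sigma_2)=\Syn(\mathcal{A})$. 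In every branch outside the three named classes the resulting length is \emph{strictly} below $(n-1)^2$, which yields the ``moreover'' clause without any separate contrapositive argument.
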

Thus, the analysis of the \v{C}ern\'y conjecture can be restricted to three classes of automata: the \emph{simple}, the \emph{radical}, and the \emph{quasi-simple} ones. The latter two classes are described in detail in the sequel. It is worth noting that, when automata are viewed as unary algebras, these two classes are strictly contained in the class of subdirectly irreducible algebras; see \cite[Theorem~8.4]{Universal}. For a preliminary overview of all the classes discussed in this work, see Fig.~\ref{fig:inclusion-diagram}.
\\
The paper is organized as follows. 
We begin in Section~\ref{sec: standard approach} by tackling the problem with a ``brute-force'' approach, namely by applying the well-known \textit{Pin--Frankl} algorithm. It turns out that this technique is still insufficient to fully resolve the conjecture.
We then move on to study the \emph{lattice of congruences} of a given automaton. In Section~\ref{sec: lattice of congruences} we recall some basic facts about lattices of automaton congruences, with particular focus on the problem of calculating the atoms of such a lattice, which is crucial for computing the radical $\rad(\A)$ of a reset automaton $\A$. We also analyze in more detail the interplay between semisimplicity and the lattice of congruences, obtaining results that will later be used to establish the three-class reduction. 
In Section~\ref{sec: galois connection} we show a Galois connection between this lattice and the lattice of ideals of the transition monoid. This construction not only provides a recursive algorithm to compute the radical $\rad(\A)$ (developted in Section~\ref{sec: radical ideal and cong}), but, more importantly, offers a refined way to tackle the hereditariness problem. In particular, it is fundamental in proving our main results. The remainder of Section~\ref{sec: main result} is devoted to the study of these classes of automata. The quasi-simple case appears to be the more rigid of the two new classes, and we derive several structural results for it. The central result is Theorem~\ref{propidminsemisimp}, which shows that both simple and quasi-simple automata share the property that the transition monoid $\M(\A)$ admits a unique maximal ideal lying above $\Syn(\A)$. In contrast, for the radical case we were unable to obtain results of comparable depth. Most notably, Proposition~\ref{prop: structure radical} shows that in a radical automaton the ideal $\Syn(\A/\sigma)$, where $\sigma$ is the minimal non-trivial congruence of $\text{Cong}(\A)$ (also called monolith of $\A$), contains a unique non-trivial minimal ideal of $\M(\A)/\rad(\A^\star)$. Finally, Theorem~\ref{theo: bound on index} is a general result that bounds the index of nilpotency of $\rad(\A^\star)$ by the height of the congruence lattice $\text{Cong}(\A)$, revealing an interesting connection between the combinatorial structure of the automaton $\A$ and the algebraic structure of its transition monoid $\M(\A)$.

\section{Prerequisites}
For a set $A$, we denote $|A|$ the cardinality of $A$. The free monoid on $\Sigma$ is denoted by $\Sigma^{*}$. An ideal $I$ of a monoid $M$ is a subset of $M$ satisfying $MIM\subseteq I$.  All semigroup-theoretic notions used in this paper can be found in standard references on semigroup theory, such as \cite{Howie}.
In what follows, an \textit{automaton} $\A = (Q,\Sigma,\delta)$ refers to a deterministic finite automaton (DFA), where $Q$ and $\Sigma$ are finite sets denoting the set of \textit{states} and the \textit{alphabet}, respectively, and $\delta: Q \times \Sigma \to Q$ is the \textit{transition function}. We denote by $|\A| = |Q|$ the number of states of the automaton. Rather than the functional notation, we adopt an action notation, writing $q \cdot x$ in place of $\delta(q,x)$. This action extends naturally to the free monoid $\Sigma^*$, and then to any subset $F \subseteq Q$, by setting 
\[ 
F \cdot u = \{ q \cdot u \mid q \in u \}
\] 
for any $u \in \Sigma^*$. The \textit{transition monoid} of $\A$, denoted by $\M(\A)$, is the quotient of $\Sigma^*$ by the congruence $\equiv_{\A}$ defined by $u \equiv_{\A} v$ if and only if $q \cdot u = q \cdot v$ for all $q \in Q$ (i.e., the kernel of the action).
Henceforth, we will consider automata that are synchronizing. It is a well-known fact that the set 
\[
\mathrm{Syn}(\A) = \{ u \in \Sigma^* : |Q \cdot u| = 1 \}
\]
of the reset (or synchronizing) words of $\A$ is a two-sided ideal of the free monoid $\Sigma^*$. Furthermore, $\mathrm{Syn}(\A)$ is a regular language over $\Sigma$ (see \cite{Vo_Survey}). An element $u\in\Syn(\A)$ satisfying the \v{C}ern\'y bound, i.e. such that $|u|\leq (n-1)^2$ is said to be a \emph{\v{C}ern\'y reset word}. An automaton is said to be \textit{strongly connected} if its associated directed graph is strongly connected. We state here the following result, that will be fundamental in what follows:

\begin{prop}\label{prop: str conn}
    If the \v{C}ern\'y conjecture is solved for strongly connected synchronizing automata, then it holds in general.
\end{prop}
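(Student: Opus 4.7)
The plan is to exploit the unique sink strongly connected component of the underlying digraph of $\mathcal{A}$. Concretely, I would first identify the unique minimal nonempty subset $C\subseteq Q$ closed under the action of $\Sigma$. Existence is a finiteness argument, and uniqueness follows from the synchronizing hypothesis: two disjoint minimal invariant subsets $C_1,C_2$ would contain states $q_i\in C_i$ with $q_i\cdot w\in C_i$ for every $w\in\Sigma^*$, so no word could merge them. Since $C$ is $\Sigma$-invariant and, as a minimal invariant set, strongly connected, the restriction $\mathcal{A}|_C$ is a strongly connected automaton; it is also synchronizing because any reset word of $\mathcal{A}$ restricts to a reset word of $\mathcal{A}|_C$. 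Setting $k:=|C|$, the hypothesis on strongly connected synchronizing automata yields a reset word $r$ of $\mathcal{A}|_C$ with $|r|\le (k-1)^2$.

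It remains to produce a prefix $u\in\Sigma^*$ that drives all of $Q$ into $C$, so that $ur$ resets $\mathcal{A}$. I would construct $u$ greedily: starting from $w=\varepsilon$, while the ``bad'' set $R_w:=Q\cdot w\setminus C$ is nonempty, pick any $p\in R_w$ and a shortest word $v$ with $p\cdot v\in C$. A shortest-path argument bounds $|v|\le n-k$, since the intermediate vertices of such a path lie in $Q\setminus C$ and must be distinct. Because $C$ is $\Sigma$-invariant, appending $v$ cannot push states of $C$ outside $C$, while $p\cdot v\in C$ removes at least one element from the bad set, giving $|R_{wv}|\le|R_w|-1$. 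After at most $n-k$ such iterations one obtains $u$ with $Q\cdot u\subseteq C$ and $|u|\le (n-k)^2$.

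The concatenation $ur$ is then a reset word of $\mathcal{A}$ of length at most $(n-k)^2+(k-1)^2$, and the elementary identity
\[
(n-1)^2-\bigl[(n-k)^2+(k-1)^2\bigr]=2(k-1)(n-k)\ge 0
\]
confirms that the Černý bound is met. The only delicate point is the shrinking step of Phase 1: one must be sure that appending $v$ strictly decreases the bad set without being compensated by new elements drifting outside $C$, and this is exactly what the invariance of $C$ guarantees. Everything else reduces to routine graph-theoretic distances and an arithmetic inequality.
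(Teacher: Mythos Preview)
Your argument is correct and is the standard reduction to the sink strongly connected component. Note, however, that the paper does not actually prove this proposition: it is simply stated in the prerequisites section as a known (folklore) result and then used freely later on. So there is no ``paper's own proof'' to compare against; your write-up supplies exactly the kind of argument that is usually given for this fact (see, e.g., Volkov's surveys cited in the paper). The key steps---uniqueness of the minimal invariant set via the synchronizing hypothesis, the $(n-k)$ bound on each greedy move from invariance of $C$, and the identity $(n-1)^2-[(n-k)^2+(k-1)^2]=2(k-1)(n-k)\ge 0$---are all sound.
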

\begin{proof}
    See for instance \cite[Lemma~3.3, Proposition~3.3]{VolkSurv2}.
\end{proof}
Given an automaton $\A = (Q,\Sigma,\delta)$, we define a \emph{congruence} to be an equivalence relation $\sigma \subseteq Q \times Q$ which is compatible with the action, that is, for every $u \in \Sigma^*$ and $p,q \in Q$, if $p \,\sigma\, q$ then $(p \cdot u) \,\sigma\, (q \cdot u)$. The \emph{quotient automaton} is then defined as $\A/\sigma = (Q/\sigma, \Sigma, \delta_\sigma)$, where $\delta_\sigma$ is the induced transition function. It is immediate that the diagonal and universal relations on $Q$, namely $\Delta_\A$ and $\nabla_\A$, are congruences of $\A$. Let $\mathrm{Cong}(\A)$ denote the poset (w.r.t. the inclusion) of all congruences of $\A$.
For an element $u \in \M(\A)$, the kernel $\ker(u) \subseteq Q \times Q$ is the equivalence relation defined by  
\[
\ker(u) := \{(p,q) \in Q \times Q \mid p \cdot u = q \cdot u\}.
\] 
Note that, in general, this relation is not necessarily a congruence. Henceforth, we will need to consider kernels in quotient automata. We adopt the notation $\ker_{\mathcal{B}}$ to emphasize that the kernel relation is taken with respect to the automaton $\mathcal{B}$ and its state set. When no subscript is specified, the kernel is taken with respect to the 
ambient automaton under consideration, which, for most of this paper, will be denoted by $\A$.
\\
An important class of automata is given by the simple ones: an automaton $\A$ is said to be \emph{simple} if $\mathrm{Cong}(\A) = \{\Delta_\A, \nabla_\A\}$. A notable infinite class of simple automata is given by the \textit{\v{C}ern\'y automata} (see \cite{Vo_Survey} for their definition and, for instance, \cite{AlRo} for a proof of their simplicity).
We now briefly recall the algebraic notation used throughout the paper. Given an automaton $\A$, there exists an epimorphism $\pi: \Sigma^* \rightarrow \M(\A)$, where $\M(\A)$ is the transition monoid associated to $\A$. For convenience, we sometimes drop $\pi$ and identify ideals (and elements) in $\Sigma^*$ with ideals (and elements) in $\M(\A)$ and vice versa; so for instance if we are in the context of $\M(\A)$ we will write $\Syn(\A)$ instead of $\pi(\Syn(\A))$. It is well-known that $\M(\A)$ embeds as a monoid into the ring of matrices $\mathbb{M}_n(\mathbb{C})$, where $n$ is the number of states of $\A$ \cite{Steinb, BeBePe}. By slight abuse of notation, we will consider $\M(\A) \subseteq \mathbb{M}_n(\mathbb{C})$, so that $\pi: \Sigma^* \rightarrow \mathbb{M}_n(\mathbb{C})$ \cite{AlRo}. 
A $0$-monoid is a monoid equipped with a zero element $0$. In such a structure, any non-trivial minimal ideal $I \neq 0$ is called a $0$-minimal ideal. Throughout the paper we will often consider the $0$-monoid, defined as the Rees quotient $\A^\star := \M(\A)/\Syn(\A)$, and we denote by $\theta:\M(\A)\to \A^\star$ the corresponding Rees morphism. 
For a $0$-monoid $M$, a two-sided ideal $I \subseteq M$ is called \emph{nilpotent} with \emph{index of nilpotency} $m$ if $I^{m}=0$, where $m$ is the smallest positive integer with this property.
It is straightforward to verify that if two ideals $I$ and $J$ satisfy $I^{k}=J^{k}=0$, then $(I \cup J)^{2k}=0$.  
Hence, the union of nilpotent ideals is itself a nilpotent ideal.  
This observation allows us to define the largest nilpotent ideal $\rad(M)$ of $M$, known as the \emph{radical} of $M$. In the context of automata, the \emph{radical ideal} of a synchronizing automaton $\A$ is defined as $\rad(\A^\star)$ (see \cite{AlRo} for an alternative construction via the Wedderburn–Artin theorem).  
We will need to see this ideal in $\M(\A)$ and thus we set:
\[
\rad(\A) :=  \theta^{-1}(\rad(\A^\star)) \subseteq \M(\A).
\]  
By a slight abuse of notation, we write $\rad(\A)$ to refer to either $\rad(\A)$ or $\pi^{-1}(\rad(\A))\subseteq \Sigma^*$, depending on the context.
An automaton is called \emph{semisimple} if $\rad(\A^\star) = \{0\}$ (equivalently, if $\rad(\A) = \Syn(\A)$).
\\
Given an equivalence relation $\tau \subseteq Q \times Q$, we define  
\[
\text{Cong}_{\tau}(\A) := \{\sigma \in \text{Cong}(\A) \mid \sigma \subseteq \tau\}.
\]  
By \cite[Lemma 4]{AlRo}, for any $u \in \rad(\A) \setminus \Syn(\A)$ we have  
\[
\text{Cong}_{\ker(u)}(\A) \neq \{\Delta_\A\}.
\]  
This leads to the following result:
\begin{prop}
    If $\A$ is simple, then $\A$ is semisimple.
\end{prop}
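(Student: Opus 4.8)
The plan is to argue by contraposition, extracting a forbidden congruence from the existence of a non-trivial radical ideal. The entire argument will rest on the quoted consequence of \cite[Lemma 4]{AlRo}, which says that every word in $\rad(\A)\setminus\Syn(\A)$ has a non-trivial congruence sitting inside its kernel; combined with simplicity, such a congruence is forced to be all of $\nabla_\A$, which collapses the word into a reset word and yields a contradiction.

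Concretely, I would proceed as follows. First recall that $\Syn(\A)\subseteq\rad(\A)$ always holds, since the zero ideal of $\A^*$ is (trivially) nilpotent, and that $\pi$ surjective gives $\rad(\A^*)=\{0\}\iff\rad(\A)=\Syn(\A)$. Now suppose $\A$ is not semisimple, i.e. $\rad(\A^*)\neq\{0\}$, and pick any $u\in\rad(\A)\setminus\Syn(\A)$. By \cite[Lemma 4]{AlRo} we have $\text{Cong}_{\ker(u)}(\A)\neq\{\Delta_\A\}$, so there is a congruence $\sigma$ of $\A$ with $\Delta_\A\neq\sigma\subseteq\ker(u)$. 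Since $\A$ is simple, its only non-diagonal congruence is $\nabla_\A$, hence $\sigma=\nabla_\A$ and therefore $\nabla_\A\subseteq\ker(u)$, forcing $\ker(u)=\nabla_\A$. But $\ker(u)=\nabla_\A$ means $p\cdot u=q\cdot u$ for all $p,q\in Q$, i.e. $|Q\cdot u|=1$, so $u\in\Syn(\A)$, contradicting the choice of $u$. Hence $\rad(\A^*)=\{0\}$ and $\A$ is semisimple.

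I do not expect a genuine obstacle here: once \cite[Lemma 4]{AlRo} is available, the statement is essentially a definitional unwinding, and the only thing one checks independently is the trivial inclusion $\Syn(\A)\subseteq\rad(\A)$ that converts ``not semisimple'' into ``there exists a radical word outside $\Syn(\A)$''. The substantive content — the representation-theoretic reason why a proper radical word must carry a non-trivial congruence in its kernel — lives entirely in the cited lemma; a self-contained proof would have to reproduce that, but for the purposes of this paper it is legitimate to invoke it directly.
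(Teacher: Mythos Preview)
Your proof is correct and coincides with the paper's own argument: the proposition is stated there as an immediate consequence of \cite[Lemma~4]{AlRo}, exactly via the contrapositive reasoning you give (a non-reset radical word yields a non-trivial congruence inside its kernel, which simplicity forces to be $\nabla_\A$, contradicting $u\notin\Syn(\A)$). The paper also advertises an alternative proof via the radical congruence $\rho=\rho(\rad(\A))$ (Lemma~\ref{lemrhocong}: non-semisimple $\Rightarrow$ $\rho$ non-trivial), but that is a repackaging of the same idea rather than a different route.
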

Using the techniques developed later in this paper, we will provide an alternative proof of the above proposition.
Note that any simple automaton of at least three states is also strongly connected.    
An interesting question is under which conditions simplicity implies synchronizability.  
This problem has already been studied in the literature (see, for instance, \cite{Vol_Syn_Prim, Ryst_Prim, Araujo_Cameron}).  
Most remarkably, the following result provides a very general condition under which a simple automaton is synchronizing:

\begin{teo}\cite[Corollary 4]{Vol_Syn_Prim}
    Let $\A$ be a simple automaton with a letter of defect $1$.  
    Then, $\A$ is synchronizing. 
\end{teo}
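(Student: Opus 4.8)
The plan is to reduce the statement to the case of an automaton whose underlying digraph is strongly connected and aperiodic (primitive), where it is a classical fact, and to use simplicity precisely to guarantee aperiodicity. First I would dispose of the trivial range $|Q|\le 2$: if $|Q|=1$ there is nothing to prove, and if $|Q|=2$ a letter of defect $1$ already sends $Q$ to a single state, hence is itself a reset word. So assume $|Q|\ge 3$; then, as recalled in the excerpt, simplicity forces $\A$ to be strongly connected. I would then invoke the classical characterization of Rystsov (see \cite{Ryst_Prim} and \cite{Vol_Syn_Prim}): a strongly connected automaton carrying a letter of defect $1$ is synchronizing if and only if its underlying digraph is aperiodic, i.e.\ the gcd $d$ of the lengths of its directed cycles equals $1$. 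The ``only if'' direction is elementary: when $d\ge 2$ the state set splits into classes $V_0,\dots,V_{d-1}$ with every edge running from $V_i$ to $V_{(i+1)\bmod d}$, so a word of length $\ell$ maps each $V_i$ into $V_{(i+\ell)\bmod d}$; consequently $Q\cdot w$ always meets all $d\ge 2$ classes and can never be a singleton.

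It then remains to prove that a simple automaton on at least three states with a letter of defect $1$ is aperiodic, and this is exactly where simplicity enters. The cyclic partition $\{V_0,\dots,V_{d-1}\}$ above is in fact a congruence $\sigma_d\in\mathrm{Cong}(\A)$: if $p,q\in V_i$ then for every letter $x$ both $p\cdot x$ and $q\cdot x$ lie in $V_{(i+1)\bmod d}$, so $p\cdot x$ and $q\cdot x$ are $\sigma_d$-related. By simplicity $\sigma_d\in\{\Delta_\A,\nabla_\A\}$. If $\sigma_d=\nabla_\A$ then $d=1$ and we are done. If instead $\sigma_d=\Delta_\A$, every $V_i$ is a singleton, so $|Q|=d$ and the underlying digraph is a single directed $d$-cycle; but then each $V_i=\{v_i\}$ and every letter $x$ satisfies $v_i\cdot x\in V_{(i+1)\bmod d}$, so $v_i\cdot x=v_{(i+1)\bmod d}$ independently of $x$, i.e.\ every letter is a bijection of $Q$ and has defect $0$ — contradicting the existence of a letter of defect $1$ (here $d=|Q|\ge 2$). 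Hence $\sigma_d=\nabla_\A$, so $d=1$, $\A$ is primitive, and the cited characterization yields that $\A$ is synchronizing.

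The part I expect to be the real obstacle is the substantive (``if'') direction of Rystsov's characterization, which in a self-contained write-up I would rather cite than reprove. A plausible direct route would be: assume the minimal rank $m:=\min_{w\in\Sigma^*}|Q\cdot w|$ satisfies $m\ge 2$; observe that no image of size $m$ can contain both states of the pair merged by the defect-$1$ letter $a$, since applying $a$ to such an image would produce an image of size $<m$; observe also that every state lies in some image of size $m$ by strong connectivity; and then exploit aperiodicity — for instance through the positivity of large powers of the adjacency matrix, or through the idempotent obtained by stabilising the iterates of $a$ — to drive $m$ down to $1$. Turning that last implication into a clean argument is the genuine difficulty; everything preceding it (the small cases and the congruence $\sigma_d$) is routine.
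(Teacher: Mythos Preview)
The paper does not supply its own proof of this statement: it is quoted verbatim as \cite[Corollary~4]{Vol_Syn_Prim} in the preliminaries, with no argument given. So there is nothing in the paper to compare your proposal against.

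Your argument is correct and is in fact the standard derivation of this corollary. The one substantive observation is that the period partition $\{V_0,\dots,V_{d-1}\}$ of a strongly connected automaton is a congruence, and you handle both degenerate cases ($\sigma_d=\nabla_\A$ giving $d=1$ directly, and $\sigma_d=\Delta_\A$ forcing every letter to be a permutation, contradicting defect~$1$) cleanly. The remaining ingredient is exactly Rystsov's characterization from \cite{Ryst_Prim} that a strongly connected aperiodic automaton with a letter of defect~$1$ is synchronizing, which you rightly flag as the non-trivial input; your sketch of that direction via the minimal rank~$m$ is the right shape, though as you note it would take real work to make watertight (one usually passes through an idempotent of minimal rank and uses primitivity to show its image must meet the pair collapsed by the defect-$1$ letter). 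Since the paper itself is content to cite the result, your level of detail is already more than what appears there.
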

Several algebraic conditions on an automaton guarantee its simplicity. For instance, in \cite{Ryst_Prim} it is shown that if the transition monoid  $\M(\A)/\Syn(\A)$ acts irreducibly, then the automaton is simple (called \emph{primitive} in the terminology of that paper).
A further condition implying the simplicity of an automaton is the existence of a subset of letters in the alphabet that generates a submonoid of $\M(\A)$ which is a group acting \emph{primitively} on the set of states. This topic has been extensively studied; see, for instance, \cite{Araujo_Cameron}.

We conclude this preliminary section by briefly recalling the theory of the Wedderburn--Artin decomposition of a synchronizing automaton 
(see \cite{AlRo} for a detailed treatment). 
For a synchronizing automaton $\A$, there exists a representation
\[
\varphi : \Sigma^*/\Syn(\A) \longrightarrow \mathbb{M}_{n-1}(\mathbb{C})
\]
such that $\varphi(\Sigma^*/\Syn(\A)) \cong \A^\star$, allowing us to view 
$\A^\star$ as a submonoid of $\mathbb{M}_{n-1}(\mathbb{C})$. 
We define $\mathcal{R}(\A)$ to be the subalgebra of $\mathbb{M}_{n-1}(\mathbb{C})$ 
generated by $\A^\star$; we refer to $\mathcal{R}(\A)$ as the 
\emph{synchronized $\mathbb{C}$-algebra associated with $\A$}.
If $\A$ is semisimple—which, as we will see, includes not only the simple case but also the quasi-simple case—then we can apply the 
Wedderburn--Artin theorem to $\mathcal{R}(\A)$, obtaining the decomposition
\[
\mathcal{R}(\A) \cong \mathbb{M}_{n_1}(\mathbb{C}) \times \cdots \times \mathbb{M}_{n_k}(\mathbb{C}).
\]
for some $k \geq 1$ and suitable positive integers $n_1, \ldots, n_k$. By composition, we obtain for each $i \in \{1, \ldots, k\}$ a morphism 
\[
\overline{\varphi}_i:\Sigma^* \longrightarrow \mathbb{M}_{n_i}(\mathbb{C}),
\]
and we define the $0$-semigroup $\mathcal{M}_i := \overline{\varphi}_i(\Sigma^*)$. 
Each $\mathcal{M}_i$ admits a unique $0$-minimal ideal $\mathcal{I}_i$, which is a $0$-simple semigroup. \\
In the non-semisimple case, let $\A$ be a non-semisimple automaton and let $\mathcal{R}(\A)$ denote its associated synchronized $\mathbb{C}$-algebra. Consider the monoid morphism $\rho:\Sigma^*\rightarrow\mathcal{R}(\A)$ obtained by composition. Let $\rad(\mathcal{R}(\A))$ be the \emph{Jacobson radical of $\mathcal{R}(\A)$} (see \cite{Lam}). Then 
\[
\rho^{-1}\!\big(\rad(\mathcal{R}(\A))\big)=\rad(\A).
\]
We define the semisimple quotient
\[
\overline{\mathcal{R}(\A)} := \mathcal{R}(\A)/\rad(\mathcal{R}(\A)),
\]
and, by the Wedderburn–Artin theorem, we may again consider the same decomposition as in the semisimple case. The following remark will be used later when discussing structural results for quasi-simple and radical automata.
\begin{oss}\label{rmk: connessione ideali e W.A.}
Let $\A$ be a synchronizing automaton and let $I \subseteq \M(\A)/\rad(\A)$ be a $0$-minimal ideal. Observe that
\[
\forall i \in \{1,\ldots,k\}, \quad 
I \subseteq \overline{\varphi}_i^{-1}(\mathcal{I}_i) \ \Rightarrow \ 
\overline{\varphi}_i(I) \subseteq \mathcal{I}_i.
\]
By minimality, we must have either $\overline{\varphi}_i(I)=\mathcal{I}_i$ or $\overline{\varphi}_i(I)=0$. Since $I \neq \rad(\A)$, it follows that $\overline{\varphi}_i(I)=\mathcal{I}_i$ for some $i \in \{1,\ldots,k\}$. This suggests that, in order to analyze the Wedderburn–Artin decomposition of a given automaton, it is natural to focus on the study of the $0$-minimal ideals of $\M(\A)/\rad(\A)$ (or $\A^\star$ in the semisimple case).
\end{oss}

\section{A standard approach to the hereditariness of the Černý conjecture}\label{sec: standard approach}
In this section, we analyze standard techniques used to address the hereditariness of the \v{C}ern\'y conjecture. We show that these methods face significant limitations and cannot be extended to a general proof, which motivates the need for a new approach. The challenges discussed here provide the basis for the more algebraic perspective developed in the following sections.

\begin{prop}\label{prop12}
    Let $\A$ be a strongly connected synchronizing DFA and $\sigma\in\text{Cong}(\A)$ a non-trivial congruence. Assume that $\sigma$ admits a 1-class or a 2-class: then if $\A/\sigma$ satisfies the Černý conjecture, we have that $\A$ satisfies the Černý conjecture as well. 
\end{prop}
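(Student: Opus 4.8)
The plan is to concatenate three pieces: a reset word for $\mathcal{A}/\sigma$, a short ``steering'' word that drives the collapsed image into a $\sigma$-class of size $1$ or $2$, and --- only in the $2$-class case --- a word merging the surviving pair. Write $n=|Q|$ and $m=|Q/\sigma|$; non-triviality of $\sigma$ gives $m\le n-1$. The quotient $\mathcal{A}/\sigma$ is again strongly connected (the image of a strongly connected digraph under a surjective homomorphism is strongly connected), and for any reset word $w$ of $\mathcal{A}/\sigma$ the image $Q\cdot w$ is contained in a single $\sigma$-class $C_1$, since its image in $Q/\sigma$ is a single point; by hypothesis we may take $|w|\le(m-1)^2$. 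Suppose first that $\sigma$ has a class $C^\ast=\{q^\ast\}$ of size $1$. Since $\mathcal{A}/\sigma$ has $m$ vertices, strong connectivity yields a word $v$ with $|v|\le m-1$ and $C_1\cdot v=C^\ast$ in $\mathcal{A}/\sigma$; as $\sigma$ is a congruence, applying $v$ sends the whole class $C_1$ into the class $C^\ast$, so $Q\cdot wv\subseteq C^\ast=\{q^\ast\}$ and $wv$ resets $\mathcal{A}$, with $|wv|\le(m-1)^2+(m-1)=m(m-1)\le(n-1)(n-2)<(n-1)^2$.

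Now suppose $\sigma$ has no $1$-class; by hypothesis it then has a class $C^\ast=\{a,b\}$ of size exactly $2$, and since every $\sigma$-class now has size $\ge 2$ we get $n=\sum_i|C_i|\ge 2m$, that is, $m\le n/2$. Exactly as before, append $v$ with $|v|\le m-1$ and $C_1\cdot v=C^\ast$, so that $Q\cdot wv\subseteq\{a,b\}$ and $|Q\cdot wv|\le 2$. As $\mathcal{A}$ is synchronizing, the pair $\{a,b\}$ can be merged: in the standard $2$-subset automaton, whose vertices are the $2$-element subsets of $Q$ together with an extra sink absorbing collisions, the sink is reachable from $\{a,b\}$, so a shortest word $z$ sending $\{a,b\}$ to the sink labels a simple path and thus $|z|\le\binom{n}{2}$; this $z$ collapses whichever of $\{a\}$, $\{b\}$, $\{a,b\}$ equals $Q\cdot wv$. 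Hence $wvz$ resets $\mathcal{A}$ and
\[
|wvz|\ \le\ m(m-1)+\binom{n}{2}\ \le\ \frac{n^2}{4}-\frac{n}{2}+\frac{n(n-1)}{2}\ =\ \frac{3n^2}{4}-n\ \le\ (n-1)^2,
\]
the last inequality being equivalent to $(n-2)^2\ge 0$.

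The genuinely delicate point is the arithmetic of the $2$-class case: the estimate $\binom{n}{2}$ for merging a pair is far too wasteful to be added on top of $(m-1)^2$ when $m$ is close to $n$, so one cannot merely aim to route into ``some small class''. What rescues the argument is the observation that possessing a $2$-class while possessing \emph{no} $1$-class forces $m\le n/2$; with that halving, the contributions $(m-1)^2$, $m-1$, and $\binom{n}{2}$ sum to at most $(n-1)^2$. The remaining steps are routine but worth checking: that $Q\cdot w$ really lies in a single $\sigma$-class; that a path in $\mathcal{A}/\sigma$ pulls back, by congruence-compatibility, to a map of that class inside $\mathcal{A}$; and that the final word merges $Q\cdot wv$ uniformly over the three cases $\{a\}$, $\{b\}$, $\{a,b\}$.
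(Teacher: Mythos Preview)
Your proof is correct and follows essentially the same approach as the paper: take a \v{C}ern\'y reset word for $\mathcal{A}/\sigma$, route into the small class by strong connectivity, and in the $2$-class case merge the surviving pair using the $\binom{n}{2}$ bound (the paper invokes Pin--Frankl for this step, while you argue directly via the pair automaton). The key observation that the absence of $1$-classes forces $m\le n/2$, making the arithmetic work in the $2$-class case, is exactly what the paper uses as well.
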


\begin{proof}
    Let $u\in\Syn(\A/\sigma)$ be a Černý reset word. Let us consider the two following cases:
    \begin{itemize}
        \item assume that $\sigma$ admits a 1-class and let $p\in Q$ such that $[p]_\sigma = \{p\}$. We have that $Q\cdot u\subseteq [q]_\sigma$ for some $q\in Q$: by the strongly connectedness of $\A$ (and thus of $\A/\sigma$), we have that $\exists u_0\in\Sigma^*$ such that $[q]_\sigma \cdot u_0 = [p]_\sigma$ in $\A/\sigma$ with $|u_0|\leq n-2$. At this point we clearly get $Q\cdot uu_0 \subseteq [p]_\sigma = \{p\} \Rightarrow uu_0\in\Syn(\A)$ with:
        $$|uu_0| \leq (n-2)^2+n-2 = n^2- 4n +4 +n -2 = (n-1)^2 - n + 1 < (n-1)^2$$
        which concludes.
        \item assume that $\sigma$ admits a 2-class $[p]_\sigma$ for some $p\in Q$ and no 1-classes. By the same technique used in the previous point, we obtain a word $v\in\Syn(\A/\sigma)$ with $Q\cdot v \subseteq [p]_\sigma = \{p,q\}$. By means of the Pin-Frankl Algorithm (see \cite{Vo_Survey}) we have that $\exists v_0\in\Sigma^*$ such that $|\{p,q\}\cdot v_0 | = 1$ and:
        $$|v_0| \leq {n\choose 2} = \frac{n(n-1)}{2} = \frac{n^2-n}{2}$$
        and thus we have that $vv_0\in\Syn(\A)$ with:
        $$|vv_0| \leq (n/2-1)^2 + n/2 - 1 + \frac{n^2-n}{2} = \frac{3n^2}{4} - n \leq (n-1)^2, \mbox{ for }n\ge 2$$
        which covers also this case.
    \end{itemize}
    The two cases above complete the proof.
\end{proof}

The above case illustrates that we indeed encounter challenges when dealing with a “large” value of the cardinality of the smallest equivalence class of $\sigma$. In general, we have the following:

\begin{prop}\label{finqs}
    Let $\A$ be a synchronizing DFA, $\sigma\in\text{Cong}(\A)$ non-trivial and assume that the Černý conjecture holds for $\A/\sigma$. Let $m:=\min_i{|[p_i]_\sigma|}$ and assume $m\geq 3$: then $\exists u\in \Syn(\A)$ such that:
    $$|u|\leq \frac{n^2}{m^2}-\frac{n}{m} + \frac{n^3-n}{6} - \frac{(n-m)^3+3(n-m)^2+2(n-m)}{6}.$$
\end{prop}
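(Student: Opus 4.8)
The plan is to follow the two-phase structure of Proposition~\ref{prop12}, replacing the ad hoc treatment of $1$- and $2$-classes by an explicit use of the Pin--Frankl compression bound, which lets us absorb a class of arbitrary minimum size $m\ge 3$. In the first phase we use a \v{C}ern\'y reset word of $\mathcal{A}/\sigma$, followed by a short routing word, to drive $Q$ into a \emph{minimum}-size $\sigma$-class; in the second phase we collapse that $m$-element subset of $Q$ to a single state inside $\mathcal{A}$ by iterating the Frankl--Pin one-step compression lemma. Adding the two lengths and bounding everything in terms of $n$ and $m$ yields the stated estimate. As throughout this section (cf.\ Proposition~\ref{prop: str conn} and Proposition~\ref{prop12}) we may assume $\mathcal{A}$, hence $\mathcal{A}/\sigma$, to be strongly connected; this is what makes the routing step available.

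For the first phase, put $k=|Q/\sigma|$. Since every $\sigma$-class has at least $m$ elements, $km\le n$, so $k\le n/m$. By hypothesis $\mathcal{A}/\sigma$ satisfies the \v{C}ern\'y conjecture, so there is a word $w$ with $(Q/\sigma)\cdot w$ a single class and $|w|\le (k-1)^2$; using strong connectedness of $\mathcal{A}/\sigma$ we append a word $v$ of length $\le k-1$ so that $wv$ sends $Q/\sigma$ onto a class $C_{\min}$ of minimum size $m$. With $u_1:=wv$ we obtain $Q\cdot u_1\subseteq C_{\min}$, hence $|Q\cdot u_1|\le m$, and
\[
|u_1|\ \le\ (k-1)^2+(k-1)\ =\ (k-1)k\ \le\ \frac{n}{m}\left(\frac{n}{m}-1\right)\ =\ \frac{n^2}{m^2}-\frac{n}{m},
\]
where the last inequality uses that $x\mapsto x(x-1)$ is increasing on $[1,\infty)$ together with $1\le k\le n/m$.

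For the second phase, recall the Frankl--Pin lemma (see \cite{Vo_Survey}): in an $n$-state synchronizing automaton, any $P\subseteq Q$ with $2\le |P|=p$ admits a word of length at most $\binom{n-p+2}{2}$ that strictly decreases $|P|$. Iterating this, starting from the $(\le m)$-element set $Q\cdot u_1$ (a routine induction on the set size absorbs the fact that a single compression may decrease the size by more than one, since the relevant partial sum is monotone in the set size), yields a word $u_2$ with $|(Q\cdot u_1)\cdot u_2|=1$ and
\[
|u_2|\ \le\ \sum_{p=2}^{m}\binom{n-p+2}{2}\ =\ \sum_{j=n-m+2}^{n}\binom{j}{2}\ =\ \binom{n+1}{3}-\binom{n-m+2}{3},
\]
by the hockey-stick identity $\sum_{j=0}^{N}\binom{j}{2}=\binom{N+1}{3}$. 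Since $\binom{n+1}{3}=\frac{n^3-n}{6}$ and $\binom{n-m+2}{3}=\frac{(n-m)^3+3(n-m)^2+2(n-m)}{6}$, the word $u:=u_1u_2\in\Syn(\mathcal{A})$ satisfies $|u|=|u_1|+|u_2|$, which is bounded above by the sum of the four displayed terms, i.e.\ by the asserted inequality.

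The algebra here — the telescoping of the binomial sum and the polynomial expansion of the two binomial coefficients — is routine. The step that needs genuine care, and the one that really distinguishes this from Proposition~\ref{prop12}, is the routing: one must aim the quotient reset word at a \emph{minimum}-size class, since landing in a larger class $C$ would give only the weaker bound $\binom{n+1}{3}-\binom{n-|C|+2}{3}$ for the compression phase, which is not what is claimed. This is precisely where strong connectedness enters, and the hypothesis should either be strengthened to include it or the reduction via Proposition~\ref{prop: str conn} made explicit. Finally, observe that the resulting bound is cubic in $n$ once $m$ is a constant fraction of $n$: this is the ``challenge with large $m$'' referred to just before the statement, and the reason this argument does not by itself settle the hereditariness problem.
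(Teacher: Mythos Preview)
Your proof is correct and follows essentially the same approach as the paper: synchronize the quotient and route to a minimum-size $\sigma$-class, then iterate the Pin--Frankl compression bound, with the same binomial bookkeeping via $\sum_{p=2}^{m}\binom{n-p+2}{2}=\binom{n+1}{3}-\binom{n-m+2}{3}$. You are more explicit than the paper about the need for strong connectedness in the routing step (the paper leaves this implicit via the reference to the ``same technique'' as in Proposition~\ref{prop12}) and about the telescoping identity, but the argument is otherwise identical.
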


\begin{proof}
    Let $u_0\in\Syn(\A/\sigma)$ be a Černý reset word for $\A/\sigma$ and $u_1\in\Sigma^*$ such that $Q\cdot u_0u_1\subseteq [p]_\sigma$ with $|[p]_\sigma| = m$. With the same technique adopted in the second case of the previous proof, again by the Pin-Frankl algorithm we have that $\exists v\in\Sigma^*$ such that $|[p]_\sigma\cdot v| = 1$ and:
    $$|v|\leq \sum_{k=2}^{m}{n-k+2\choose 2} = \frac{n^3-n}{6}-\frac{(n-m)^3+3(n-m)^2+2(n-m)}{6}$$
    and thus $u:=u_0u_1v\in\Syn(\A)$ with:
    $$|u|\leq \frac{n^2}{m^2}-\frac{n}{m} + \frac{n^3-n}{6} - \frac{(n-m)^3+3(n-m)^2+2(n-m))}{6}$$
    which concludes.
\end{proof}
Observe that, by the previous bound, the cases $m=3$ and $m=4$ with $n\geq 9$ already resolve the hereditariness problem. More generally, the most difficult instances arise when $m=\lfloor n/2\rfloor$, which appears to represent the worst case. In the case $m=n/2$, the previous bound yields 
\[
|u|\leq \frac{7n^3}{48} - \frac{n^2}{8} - \frac{n}{3}+2,
\]
which asymptotically improves Shitov's bound, since $7/48\sim 0.14584 < 0.1653 < \alpha$, where $\alpha$ is the coefficient of $n^3$ given in \cite{Shitov}. Note, however, that in general this bound does not improve upon Shitov's result; it does so only when $\A/\sigma$ admits a Černý-reset word for some $\sigma\in\text{Cong}(\A)$. However, if Conjecture~\ref{conj: simple} were true, then for any non-simple synchronizing automaton $\A$ one could consider a non-trivial maximal $\sigma\in\text{Cong}(\A)$ and apply the preceding result to obtain a genuine improvement of the bound in general.

\section{The lattice of congruences of an automaton}\label{sec: lattice of congruences}
This section is dedicated to the study of the lattice of congruences of a given DFA $\A = (Q, \Sigma, \delta)$. Using the language of universal algebra, we may view an automaton as a unary algebra on $Q$ with $\Sigma = \{a_1, \ldots, a_m\}$ unary operations, see for instance \cite{Universal}.  
Therefore, the poset of congruences $\Con(\A)$, ordered by inclusion, is a finite lattice where for every $\sigma, \tau \in \Con(\A)$ one has
\[
\sigma \wedge \tau := \sigma \cap \tau, 
\qquad 
\sigma \vee \tau := \bigcap \{ \rho \in \Con(\A) \mid \rho \supseteq \sigma, \tau \}.
\]
The maximum (top element) is given by the universal relation $1 := \nabla_\A = Q \times Q$,
and the minimum (bottom element) by the identity relation $0 := \Delta_\A = \{ (q,q) : q \in Q \}$.
In lattice-theoretic terms, a non-trivial congruence $\theta \in \Con(\A)$, $\theta \neq 0$, is called an \emph{atom} if it covers $0$, that is, if there is no congruence strictly between $0$ and $\theta$. Note that two distinct atoms $\sigma, \tau$ are orthogonal, i.e.,  $\sigma\cap\tau=0$. If a DFA admits a unique atom, such congruence is called \emph{monolith}. In other words, a monolith (if exists) is the non-trivial minimum of $\text{Cong}(\A)$. 

For any pair of states $p, q \in Q$, we define the principal congruence, or the
\emph{congruence generated by} $\{p,q\}$, denoted by $\langle\{p,q\}\rangle$, as follows:
\[
\langle\{p, q\}\rangle := 
\bigcap \left\{ \sigma \in \operatorname{Cong}(\mathcal A) \ \middle| \ (p, q) \in \sigma \right\}.
\]
Calculating this congruence is quite efficient, and it can be done in $O(m|\A|)$ time, see \cite[Theorem 4]{Freese}. The main strategy of this algorithm is based on the following fact.
\begin{lemma}\label{lem: equivalence closure}
With the above notation, $\langle\{p, q\}\rangle$ is equal to the equivalence closure $\langle S\rangle_{eq}$ of the set $S$ defined by:
\[
S = \{p, q\} \cdot \Sigma^* := \left\{ (t, s)\in Q\times Q \ \middle| \ \exists u \in \Sigma^* \text{ such that } \{p, q\} \cdot u = \{t, s\} \right\}.
\]
\end{lemma}
\begin{proof}
    It follows from \cite[Theorem~5.5]{Universal} applied to a unary algebra.
\end{proof}
The following lemma easily follows from the definition of atom of a lattice. 
\begin{lemma}\label{lemma122}
    Let $\text{Cong}(\A)$ seen as a lattice. Then, for every atom $\sigma\in\text{Cong}(\A)$ we have $\sigma = \langle \{p,q\}\rangle$ for some $p,q\in Q$ with $p\neq q$.
\end{lemma}
For a finite algebra $A$, the Demel-Demlov\'a-Koubek (DDK) algorithm computes the atoms of the congruence lattice of $A$ in $O(m|A|^{r+1})$ time, where $r$ is the maximum arity of the operations in the algebra $A$, and $m$ is the number of operations, see \cite[Theorem~4.3]{DDK85}. For a deterministic finite automaton $\A$, the transitions  are unary operations ($r = 1$), which specializes the complexity to  $O(m|Q|^2)$. For the sake of completeness, since it is also recalled in the computation of the radical ideal, we outline a similar algorithm here. Consider the following preorder on
\[
\binom{Q}{2} := \{ X \subseteq Q \mid |X| = 2 \},
\]
defined by $\{p,q\} \preceq \{s,t\}
\quad \Longleftrightarrow \quad
\exists\, u \in \Sigma^{*} \text{ such that } \{p,q\}\cdot u = \{s,t\}.$
Let $\sim$ denote the symmetric part of $\preceq$, that is, the equivalence relation on
$\binom{Q}{2}$ defined by $X \sim Y
\quad \Longleftrightarrow \quad X \preceq Y \text{ and } Y \preceq X $.
We may then consider the induced poset $\bigl(\binom{Q}{2}/\!\sim,\preceq\bigr)$,
where
\[
[\{p,q\}]_{\sim} \preceq [\{s,t\}]_{\sim}
\quad \text{if and only if} \quad
\{p,q\} \preceq \{s,t\}.
\]
It is straightforward to check that $\{p,q\} \sim \{s,t\}
\quad \Longleftrightarrow \quad
\{p,q\}\cdot \Sigma^{*} = \{s,t\}\cdot \Sigma^{*}.$
Hence, by Lemma~\ref{lem: equivalence closure}, if $\{p,q\} \sim \{s,t\}$ then $\langle \{p,q\} \rangle = \langle \{s,t\} \rangle$.
Therefore, the map
\[
L : \binom{Q}{2}/\!\sim \;\longrightarrow\; \Con(\mathcal A),
\qquad
L([\{p,q\}]_{\sim}) := \langle \{p,q\} \rangle,
\]
is a well-defined, surjective, antimorphism of posets.
Recall that the \emph{coatoms} of a lattice are its maximal proper elements.
It is straightforward to check that the preimage of an atom of
$\Con(\A)$ contains a coatom of $\binom{Q}{2}/_{\sim}$.
Hence, the set of atoms of $\Con(\A)$ is in bijection with a subset of the
coatoms of $\binom{Q}{2}/_{\sim}$. The poset $\binom{Q}{2}/_{\sim}$ admits a natural graph-theoretic interpretation. Consider the \emph{pairs digraph} $\mathcal{P}$ whose vertex set is $\binom{Q}{2}$, and where there is a directed edge $\{p,q\} \longrightarrow \{s,t\}$ whenever there exists a letter $a \in \Sigma$ such that
$\{p,q\} \cdot a = \{s,t\}$. The preorder $\preceq$ coincides with reachability in $\mathcal{P}$, while the equivalence relation $\sim$ corresponds to mutual reachability. Thus, each equivalence class $[\{p,q\}]_{\sim}$ is precisely a strongly
connected component (SCC) of $\mathcal{P}$ in the usual graph-theoretic sense. Consequently, the Hasse diagram of $\binom{Q}{2}/_{\sim}$ is isomorphic to the graph whose vertices are the SCC of $\mathcal{P}$ and where one SCC $[\{p,q\}]_{\sim}$ is below $[\{p',q'\}]_{\sim}$ if there is a directed edge connecting $[\{p,q\}]_{\sim}$ to $[\{p',q'\}]_{\sim}$. In this representation, the coatoms of $\binom{Q}{2}/_{\sim}$ are in one-to-one correspondence with the \emph{sink-SCC}, that is, SCCs of $\mathcal{P}$ that have no outgoing edges to other SCCs. This interpretation leads to the following meta-algorithm
$\textnormal{Atom}(\A)$ for computing the atoms of $\Con(\A)$.
The algorithm is a modification of Tarjan's classical procedure for
computing strongly connected components. The pairs digraph $\mathcal{P}$ is 
explored via a depth-first search. During the execution, the algorithm 
maintains a global variable $S \subseteq \binom{Q}{2}$ whose role is to 
record pairs $\{p,q\}$ such that the corresponding equivalence class 
$[\{p,q\}]_{\sim}$ is known \emph{not} to correspond to an atom, or for 
which there is another non-trivial pair $(u,v) \in \langle \{p,q\} \rangle$ 
not in $S$. Whenever the algorithm identifies a sink strongly connected component 
$C = [\{s,t\}]_{\sim}$, it performs a minimality check by computing the 
congruence $\langle \{s,t\} \rangle$ and testing whether every non-trivial 
pair of this congruence already belongs to $S$. If this is the case, 
then $\langle \{s,t\} \rangle$ is minimal among non-trivial congruences 
and hence defines an atom of $\text{Con}(\mathcal{A})$. Otherwise, there 
exists a pair $(u,v) \in \langle \{s,t\} \rangle$ such that $\{u,v\} \notin S$. 
In this case, one has $\langle \{u,v\} \rangle \subseteq \langle \{s,t\} \rangle$, 
so $\langle \{u,v\} \rangle$ is a candidate to be an atom. The algorithm 
then adds all pairs in $C$ to $S$ and recursively applies the same 
procedure to the SCC $[\{u,v\}]_{\sim}$. We emphasize that the main loop of the procedure is executed only for pairs $\{u,v\} \notin S$. Once a pair is inserted into $S$, it is never considered again as a candidate for generating an atom. This guarantees that each pair in $\binom{Q}{2}$ is processed at most once in the minimality checks. As a consequence, the overall cost of the algorithm is $O(m \cdot |Q|^2)$.
\\
We conclude this section regarding congruences of an automaton with the following construction, which will be useful in the sequel. Let $\A$ be a DFA, and let $\sigma \in \text{Cong}(\A)$. Given a congruence $\rho \in \text{Cong}(\A/\sigma)$, we define the \emph{lifting} of $\rho$ as the congruence $\overline{\rho} \in \text{Cong}(\A)$ given by:
\[
(p, q) \in \overline{\rho} \iff ([p]_\sigma, [q]_\sigma) \in \rho.
\]

Observe that $\overline{\rho}$ is indeed a congruence: since $\rho$ is an equivalence relation, its lifting remains an equivalence relation as well. To verify closure under transition, let $u \in \Sigma^*$ and suppose that $(p, q) \in \overline{\rho}$. Since $\rho$ is a congruence on $\A/\sigma$, we have:
\[
([p]_\sigma, [q]_\sigma) \cdot u = ([p]_\sigma \cdot u, [q]_\sigma \cdot u) = ([p \cdot u]_\sigma, [q \cdot u]_\sigma) \in \rho,
\]
which, by the definition of $\overline{\rho}$, implies that $(p \cdot u, q \cdot u) \in \overline{\rho}$. We can now state the following:

\begin{prop}\label{propliftcong}
Let $\sigma \in \text{Cong}(\A)$, then $\text{Cong}(\A/\sigma)$ embeds into $\text{Cong}(\A)$ as a sublattice.
\end{prop}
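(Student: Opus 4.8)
The plan is to exhibit an explicit order-preserving map $\Phi\colon \mathrm{Cong}(\mathcal{A}/\sigma)\to\mathrm{Cong}(\mathcal{A})$, namely the lifting $\rho\mapsto\overline{\rho}$ defined just before the statement, and to show that $\Phi$ is injective, order-reflecting, and compatible with both lattice operations $\wedge$ and $\vee$. Since the lifting has already been checked to land inside $\mathrm{Cong}(\mathcal{A})$, the work is purely about verifying that $\Phi$ is a lattice embedding onto its image.

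First I would record that $\Phi$ is monotone and order-reflecting: if $\rho_1\subseteq\rho_2$ in $\mathrm{Cong}(\mathcal{A}/\sigma)$ then $(p,q)\in\overline{\rho_1}$ means $([p]_\sigma,[q]_\sigma)\in\rho_1\subseteq\rho_2$, so $(p,q)\in\overline{\rho_2}$; conversely if $\overline{\rho_1}\subseteq\overline{\rho_2}$ then for any $([p]_\sigma,[q]_\sigma)\in\rho_1$ we get $(p,q)\in\overline{\rho_1}\subseteq\overline{\rho_2}$, hence $([p]_\sigma,[q]_\sigma)\in\rho_2$. In particular order-reflection gives injectivity. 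Equivalently, one can observe directly that $\rho$ is recovered from $\overline{\rho}$ as the image of $\overline{\rho}$ under the quotient map $Q\times Q\to (Q/\sigma)\times(Q/\sigma)$, because $\sigma\subseteq\overline{\rho}$ always holds (if $(p,q)\in\sigma$ then $[p]_\sigma=[q]_\sigma$, so trivially $([p]_\sigma,[q]_\sigma)\in\rho$), so no information is lost in passing to $\overline{\rho}$ and back.

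Next I would check that $\Phi$ preserves meets and joins. For meets this is immediate: $(p,q)\in\overline{\rho_1\cap\rho_2}$ iff $([p]_\sigma,[q]_\sigma)\in\rho_1\cap\rho_2$ iff $(p,q)\in\overline{\rho_1}$ and $(p,q)\in\overline{\rho_2}$, so $\overline{\rho_1\wedge\rho_2}=\overline{\rho_1}\wedge\overline{\rho_2}$. For joins one cannot argue pointwise, so I would use the Galois-type adjunction between $\Phi$ and the ``push-down'' map $\tau\mapsto \tau/\sigma$ (defined on congruences containing $\sigma$): these two maps form an order isomorphism between $\mathrm{Cong}(\mathcal{A}/\sigma)$ and the interval $[\sigma,\nabla_\mathcal{A}]$ in $\mathrm{Cong}(\mathcal{A})$ — this is the standard correspondence theorem for automaton congruences, and once established it shows $\Phi$ restricts to an isomorphism onto a sublattice, hence automatically preserves $\vee$. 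Concretely, $\overline{\rho_1\vee\rho_2}$ is the smallest congruence of $\mathcal{A}$ containing both $\overline{\rho_1}$ and $\overline{\rho_2}$ (equivalently containing $\overline{\rho_1}\cup\overline{\rho_2}\supseteq\sigma$), and pushing down this interval isomorphism yields $\overline{\rho_1}\vee\overline{\rho_2}$.

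The only genuinely non-routine point is the join preservation, and the cleanest way around it is the observation that the image of $\Phi$ is exactly the principal filter $\uparrow\!\sigma=\{\tau\in\mathrm{Cong}(\mathcal{A}):\sigma\subseteq\tau\}$, which is a sublattice of $\mathrm{Cong}(\mathcal{A})$ (it is closed under arbitrary intersections and contains $\nabla_\mathcal{A}$, hence is a complete sublattice), together with the fact that $\Phi$ is an order isomorphism onto $\uparrow\!\sigma$. An order isomorphism between two lattices automatically preserves all existing joins and meets, so this single structural statement subsumes all the componentwise verifications; I would present the short argument that $\Phi$ is a bijection onto $\uparrow\!\sigma$ with monotone inverse and then invoke this principle. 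I expect the surjectivity of $\Phi$ onto $\uparrow\!\sigma$ — i.e. that every congruence $\tau\supseteq\sigma$ of $\mathcal{A}$ descends to a congruence of $\mathcal{A}/\sigma$ whose lifting is $\tau$ — to be the step requiring the most care, as it is where compatibility of $\tau/\sigma$ with the induced transition function $\delta_\sigma$ must be checked, but this is a direct diagram chase using that $\sigma$ itself is a congruence.
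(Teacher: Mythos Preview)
Your proposal is correct and follows the same approach as the paper: both use the lifting map $\rho\mapsto\overline{\rho}$ as the embedding. The paper's proof is in fact just a two-line sketch that asserts injectivity, order-preservation, and meet/join preservation without further justification, so your version---supplying the order-reflection argument and the identification of the image with the principal filter $\uparrow\!\sigma$ to handle joins---is strictly more complete than what appears in the paper.
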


\begin{proof}
    By \cite[Theorem 6.20]{Universal} we have that $\text{Cong}(\A/\sigma) \cong [\sigma, \nabla_\A]$, where $[\sigma, \nabla_\A]$ is the sublattice of $\text{Cong}(\A)$  defined as follows:
    $$[\sigma, \nabla_\A] :=\{\tau \in \text{Cong}(\A) \ | \ \sigma \subseteq \tau\}.$$
    Here the isomorphism is given by $\rho\mapsto\overline{\rho}$.
\end{proof}

\subsection{Semisimplicity vs congruences}
In what follows, we give some sufficient conditions for semisimplicity that depend on the structure of the lattice $\text{Cong}(\A)$, we also state some results which will be used later. We conclude this subsection with Example~\ref{semisimpnonsimp}, which shows that the class of semisimple automata is strictly larger than the class of simple automata—an open problem that was, in some sense, implicitly raised in \cite{AlRo}. Let us begin with a basic fact:

\begin{lemma}\label{intcong}
Let $\mathcal{F} \subseteq \text{Cong}(\A)$ be nonempty. Then, for every $q \in Q$, we have:
\[
\bigcap_{\sigma \in \mathcal{F}} [q]_{\sigma} = [q]_{\bigcap \mathcal{F}}.
\]
\end{lemma}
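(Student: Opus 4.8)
The statement is essentially a matter of unwinding definitions, so the plan is simply to verify directly that membership in the left-hand side and membership in the right-hand side are governed by the same condition. First recall that, as noted earlier, the meet in $\text{Cong}(\mathcal{A})$ is the set-theoretic intersection, so $\bigcap\mathcal{F}$ is literally the relation $\{(p,q)\in Q\times Q : (p,q)\in\sigma \text{ for all }\sigma\in\mathcal{F}\}$; it is a congruence because the intersection of any family of congruences is again a congruence (it is clearly an equivalence relation, and stability under the action of $\Sigma^*$ is inherited from each $\sigma\in\mathcal{F}$). This is the only preliminary observation needed.

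Now fix $q\in Q$ and let $x\in Q$ be arbitrary. I would exhibit the chain of equivalences
\[
x\in[q]_{\bigcap\mathcal{F}} \iff (x,q)\in\textstyle\bigcap\mathcal{F} \iff \bigl(\forall\sigma\in\mathcal{F}:\ (x,q)\in\sigma\bigr) \iff \bigl(\forall\sigma\in\mathcal{F}:\ x\in[q]_\sigma\bigr) \iff x\in\bigcap_{\sigma\in\mathcal{F}}[q]_\sigma,
\]
which yields the claimed equality of sets. Each step is immediate: the first and fourth are just the definition of an equivalence class, the second is the definition of set-theoretic intersection, and the third is again the definition of an equivalence class applied to each $\sigma$. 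Alternatively one can argue by double inclusion: if $x$ lies in every $[q]_\sigma$ then $(x,q)\in\sigma$ for all $\sigma\in\mathcal{F}$, hence $(x,q)\in\bigcap\mathcal{F}$ and $x\in[q]_{\bigcap\mathcal{F}}$; the reverse inclusion is symmetric.

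There is no genuine obstacle in this proof. The only point worth flagging for the reader is that $\bigcap\mathcal{F}$ in the statement denotes the meet in the lattice $\text{Cong}(\mathcal{A})$ (equivalently, the intersection of the underlying relations in $Q\times Q$), and that the $\sigma$-classes of this meet are \emph{not} defined a priori as intersections of $\sigma$-classes — establishing exactly that identification is the content of the lemma. Once this is made explicit, the verification is two lines, as above.
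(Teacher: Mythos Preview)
Your proof is correct and follows essentially the same approach as the paper: both arguments amount to observing that $x\in\bigcap_{\sigma\in\mathcal{F}}[q]_\sigma$ iff $(x,q)\in\sigma$ for all $\sigma\in\mathcal{F}$ iff $(x,q)\in\bigcap\mathcal{F}$ iff $x\in[q]_{\bigcap\mathcal{F}}$. The paper presents this as a double inclusion, which you also note as an alternative.
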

\begin{proof}
Let $p$ be a state in $Q$. By definition, $p$ belongs to the intersection $\bigcap_{\sigma \in \mathcal{F}} [q]_\sigma$ if and only if $(p, q) \in \sigma$ for every $\sigma \in \mathcal{F}$. This means that $(p, q)$ belongs to the intersection $\bigcap \mathcal{F}$, and so $p \in [q]_{\bigcap \mathcal{F}}$.
Conversely, if $p \in [q]_{\bigcap \mathcal{F}}$, then $(p, q)$ belongs to every $\sigma \in \mathcal{F}$, which implies $p \in [q]_\sigma$ for each $\sigma$. Hence, $p$ belongs to the intersection on the left-hand side.
\end{proof}

\begin{lemma}\label{synquo}
    Let $\sigma\in\text{Cong}(\A)\setminus\{\Delta_\A,\nabla_\A\}$. Then if $\Syn(\A/\sigma)\subseteq \rad(\A)$, we have that $\rad(\A)=\rad(\A/\sigma)$.
\end{lemma}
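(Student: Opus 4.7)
The plan is to establish both inclusions by exploiting the nilpotency that characterizes the radical, together with the elementary observation that $\Syn(\A)\subseteq \Syn(\A/\sigma)$: any word collapsing $Q$ to a single state automatically collapses $Q/\sigma$ to a single class. Throughout I view $\rad(\A)$ and $\rad(\A/\sigma)$ as ideals of $\Sigma^{*}$, as in the preliminaries, and let $\pi:\Sigma^{*}\to\M(\A)$ and $\pi_{\sigma}:\Sigma^{*}\to\M(\A/\sigma)$ denote the canonical epimorphisms.

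For the inclusion $\rad(\A)\subseteq \rad(\A/\sigma)$, I would fix an integer $k$ with $\rad(\A)^{k}\subseteq \Syn(\A)$ (the existence of such $k$ is precisely the nilpotency of $\rad(\A^{*})$). Combining with $\Syn(\A)\subseteq \Syn(\A/\sigma)$ gives $\rad(\A)^{k}\subseteq \Syn(\A/\sigma)$. The image $\pi_{\sigma}(\rad(\A))$ is a two-sided ideal of $\M(\A/\sigma)$, and its class in $\M(\A/\sigma)/\Syn(\A/\sigma)$ is nilpotent. Maximality of $\rad((\A/\sigma)^{*})$ among nilpotent ideals then forces $\pi_{\sigma}(\rad(\A))$ to lie inside $\rad((\A/\sigma)^{*})$; pulling back through $\pi_{\sigma}$ yields $\rad(\A)\subseteq \rad(\A/\sigma)$.

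For the reverse inclusion I would use the hypothesis $\Syn(\A/\sigma)\subseteq \rad(\A)$ as the key bridge. Pick $h$ such that $\rad(\A/\sigma)^{h}\subseteq \Syn(\A/\sigma)$ and let $m$ be the nilpotency index of $\rad(\A^{*})$. The hypothesis upgrades the first containment to $\rad(\A/\sigma)^{h}\subseteq \rad(\A)$, and raising to the $m$-th power gives
\[
\rad(\A/\sigma)^{hm}\;\subseteq\;\rad(\A)^{m}\;\subseteq\;\Syn(\A).
\]
Hence $\pi(\rad(\A/\sigma))$ is a nilpotent two-sided ideal of $\A^{*}$, so it is contained in $\rad(\A^{*})$, and pulling back yields $\rad(\A/\sigma)\subseteq \rad(\A)$.

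The only point demanding care — rather than a genuine obstacle — is the bookkeeping of where each ideal lives ($\Sigma^{*}$, $\M(\A)$, $\M(\A/\sigma)$, or the corresponding Rees quotients by $\Syn$), together with the remark that epimorphic images of two-sided ideals remain two-sided ideals. Once this is in place, the argument is simply the composition of the two nilpotency estimates, chained through the hypothesis $\Syn(\A/\sigma)\subseteq \rad(\A)$.
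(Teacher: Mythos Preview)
Your proposal is correct and follows essentially the same approach as the paper: both arguments chain the nilpotency of $\rad(\A/\sigma)$ through the hypothesis $\Syn(\A/\sigma)\subseteq\rad(\A)$ and then through the nilpotency of $\rad(\A)$ to conclude that $\rad(\A/\sigma)$ is nilpotent modulo $\Syn(\A)$. The paper's proof is slightly terser in that it only writes out the inclusion $\rad(\A/\sigma)\subseteq\rad(\A)$ explicitly, taking the reverse inclusion $\rad(\A)\subseteq\rad(\A/\sigma)$ as already known (it follows from $\Syn(\A)\subseteq\Syn(\A/\sigma)$, as you correctly spell out).
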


\begin{proof}
    By the nilpotency of $\rad(\A/\sigma)$ in $\Sigma^*/\Syn(\A/\sigma)$, we have that $\exists k\geq 1$ such that: 
    $$\rad(\A/\sigma)^k\subseteq \Syn(\A/\sigma)\subseteq \rad(\A).$$
    But then, by the nilpotency of $\rad(\A)$ in $\Sigma^*/\Syn(\A)$ we have that $\exists m\geq 1$ such that $\rad(\A)^m\subseteq \Syn(\A)$, and thus:
    $$\rad(\A/\sigma)^{km}\subseteq \rad(\A)^m \subseteq \Syn(\A)$$
    and thus we have that $\rad(\A/\sigma)$ is a nilpotent ideal of $\Sigma^*/\Syn(\A)$. Being $\rad(\A)$ the biggest nilpotent ideal of $\Sigma^*/\Syn(\A)$, we must have $\rad(\A/\sigma) \subseteq\rad(\A)$, hence $\rad(\A)=\rad(\A/\sigma)$.
\end{proof}

\begin{prop}\label{intsin}
Let $\mathcal{F} \subseteq \text{Cong}(\A)$ with $\mathcal{F} \neq \emptyset$. Define $\rho := \bigcap \mathcal{F}$. Then:
\[
S := \bigcap_{\sigma \in \mathcal{F}} \Syn(\A/\sigma) = \Syn(\A/\rho),
\]
and in particular,
\[
R := \bigcap_{\sigma \in \mathcal{F}} \rad(\A/\sigma) = \rad(\A/\rho).
\]
\end{prop}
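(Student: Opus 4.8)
The plan is to prove the two identities separately, deducing the second from the first together with the nilpotency argument already encoded in Lemma~\ref{synquo}. For the first identity, $S = \Syn(\mathcal{A}/\rho)$, I would work directly with reset words viewed as words in $\Sigma^*$ (abusing notation as the paper does, identifying ideals in $\Sigma^*$ with their images in the various transition monoids). A word $u \in \Sigma^*$ lies in $\Syn(\mathcal{A}/\sigma)$ precisely when $Q \cdot u$ is contained in a single $\sigma$-class, i.e. when for all $p, q \in Q$ we have $(p \cdot u,\, q \cdot u) \in \sigma$. Hence $u \in S$ iff $(p\cdot u, q\cdot u) \in \sigma$ for every $\sigma \in \mathcal{F}$ and every $p,q \in Q$, which by definition of $\rho = \bigcap\mathcal{F}$ is equivalent to $(p \cdot u, q \cdot u) \in \rho$ for all $p,q$, i.e. $u \in \Syn(\mathcal{A}/\rho)$. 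This is essentially a one-line set-theoretic manipulation, the natural analogue of Lemma~\ref{intcong} at the level of synchronizing ideals, so I expect no real obstacle here; the only thing to be careful about is to phrase everything consistently in $\Sigma^*$ so that the intersection of the ideals $\Syn(\mathcal{A}/\sigma)$ literally makes sense as a subset of one fixed monoid.

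For the second identity, $R = \rad(\mathcal{A}/\rho)$, I would argue by double inclusion. For ``$\supseteq$'': each $\sigma \in \mathcal{F}$ satisfies $\rho \subseteq \sigma$, so $\mathcal{A}/\sigma$ is a quotient of $\mathcal{A}/\rho$ (via Proposition~\ref{propliftcong} the congruence $\sigma$ corresponds to a congruence of $\mathcal{A}/\rho$), and taking the radical is monotone under quotients in the appropriate sense — more precisely, the preimage in $\Sigma^*$ of $\rad$ only grows as we pass to a further quotient, since the radical of a quotient $0$-monoid pulls back to a nilpotent ideal. Thus $\rad(\mathcal{A}/\rho) \subseteq \rad(\mathcal{A}/\sigma)$ in $\Sigma^*$ for every $\sigma \in \mathcal{F}$, giving $\rad(\mathcal{A}/\rho) \subseteq R$. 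For ``$\subseteq$'': using the first identity, $R$ is an ideal of $\Sigma^*$ containing $S = \Syn(\mathcal{A}/\rho)$, so it descends to an ideal $\bar R$ of the $0$-monoid $(\mathcal{A}/\rho)^* = \M(\mathcal{A}/\rho)/\Syn(\mathcal{A}/\rho)$; I need to show $\bar R$ is nilpotent there. Since each $\rad(\mathcal{A}/\sigma)$ is nilpotent modulo $\Syn(\mathcal{A}/\sigma)$, there is $k$ with $R^k \subseteq \rad(\mathcal{A}/\sigma)^k \subseteq \Syn(\mathcal{A}/\sigma)$ for a fixed representative $\sigma$; intersecting over $\mathcal{F}$ (or rather, taking a common $k$ over the finitely many $\sigma \in \mathcal{F}$) gives $R^k \subseteq \bigcap_{\sigma}\Syn(\mathcal{A}/\sigma) = S = \Syn(\mathcal{A}/\rho)$. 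Hence $\bar R$ is nilpotent in $(\mathcal{A}/\rho)^*$, so $\bar R \subseteq \rad((\mathcal{A}/\rho)^*)$, i.e. $R \subseteq \rad(\mathcal{A}/\rho)$ in $\Sigma^*$.

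The step I expect to require the most care is the ``$\supseteq$'' inclusion: making precise that passing from $\mathcal{A}/\rho$ to the further quotient $\mathcal{A}/\sigma$ sends $\rad(\mathcal{A}/\rho)$ into $\rad(\mathcal{A}/\sigma)$ at the level of $\Sigma^*$. This is the monotonicity of the radical under the surjection $(\mathcal{A}/\rho)^* \twoheadrightarrow$ (an appropriate Rees quotient by $\Syn(\mathcal{A}/\sigma)/\Syn(\mathcal{A}/\rho)$), using that the image of a nilpotent ideal under a surjective monoid homomorphism is nilpotent, hence lands in the radical of the target; pulling back to $\Sigma^*$ then gives the claim. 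Everything else is bookkeeping with the identification $\pi \colon \Sigma^* \to \M(\mathcal{A})$ and the observation — already used in Lemma~\ref{synquo} — that a finite intersection of ideals each nilpotent modulo a common ideal is again nilpotent modulo that ideal (choose a uniform exponent). Finally I would remark, as the statement's ``in particular'' suggests, that the radical identity is a formal consequence of the $\Syn$ identity plus this nilpotency observation, so the proof is genuinely short once the first identity is in hand.
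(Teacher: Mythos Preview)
Your proposal is correct and follows essentially the same route as the paper: the first identity is obtained by the set-theoretic characterization $u\in\Syn(\mathcal{A}/\sigma)\iff (p\cdot u,q\cdot u)\in\sigma$ for all $p,q$ (the paper phrases this via Lemma~\ref{intcong}), and the second by the same double inclusion---monotonicity of $\rad$ along $\rho\subseteq\sigma$ for one direction, and a common nilpotency exponent $K$ giving $R^K\subseteq\bigcap_\sigma\Syn(\mathcal{A}/\sigma)=\Syn(\mathcal{A}/\rho)$ for the other. Your treatment of the ``$\supseteq$'' step is in fact more carefully justified than the paper's, which simply asserts the monotonicity; note also that $\mathcal{F}$ is automatically finite since $\text{Cong}(\mathcal{A})$ is, so the common exponent causes no issue.
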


\begin{proof}
First observe that for any pair of congruences $\sigma, \tau \in \text{Cong}(\A)$, the inclusion $\sigma \subseteq \tau$ implies:
\[
\Syn(\A/\sigma) \subseteq \Syn(\A/\tau) \quad \text{and} \quad \rad(\A/\sigma) \subseteq \rad(\A/\tau).
\]
Applying this to the family $\mathcal{F}$, we immediately deduce that $\Syn(\A/\rho) \subseteq S$. Now, let $v \in S$. This means that $v \in \Syn(\A/\sigma)$ for every $\sigma \in \mathcal{F}$. Then, for each $\sigma$, the image of $Q$ under $v$ is contained in a single $\sigma$-class. In particular, there exists a state $p \in Q$ such that:
\[
Q \cdot v \subseteq \bigcap_{\sigma \in \mathcal{F}} [p]_\sigma = [p]_\rho,
\]
where the equality follows from Lemma~\ref{intcong}. This implies that $v \in \Syn(\A/\rho)$, and hence $S \subseteq \Syn(\A/\rho)$. Together with the previous inclusion, we conclude that $S = \Syn(\A/\rho)$.

For the second claim, we already know from the inclusion property above that $\rad(\A/\rho) \subseteq R$. Next, for every $\sigma \in \mathcal{F}$, there exists an integer $k_\sigma \geq 1$ such that $\rad(\A/\sigma)^{k_\sigma} \subseteq \Syn(\A/\sigma)$. Define:
\[
K := \prod_{\sigma \in \mathcal{F}} k_\sigma.
\]
Then, for every $\sigma \in \mathcal{F}$, we have:
\[
\rad(\A/\sigma)^K \subseteq \Syn(\A/\sigma).
\]
Since $R$ is contained in each $\rad(\A/\sigma)$, it follows that:
\[
R^K \subseteq \rad(\A/\sigma)^K \subseteq \Syn(\A/\sigma) \quad \text{for all } \sigma \in \mathcal{F},
\]
and hence,
\[
R^K \subseteq \bigcap_{\sigma \in \mathcal{F}} \Syn(\A/\sigma) = \Syn(\A/\rho).
\]
This shows that $R$ is a nilpotent ideal of the monoid $\Sigma^*/\Syn(\A/\rho)$, and therefore must coincide with $\rad(\A/\rho)$, as claimed.
\end{proof}
The following proposition shows that quotienting an automaton by the intersection of congruences that individually give rise to semisimple quotient automata also results in a semisimple automaton.

\begin{prop}\label{intsems}
    Let $\mathcal{F}\subseteq\text{Cong}(\A)$ be a non-empty collection of congruences such that $\A/\sigma$ is semisimple for every $\sigma\in\mathcal{F}$. Then, if $\rho:=\cap\mathcal{F}$ we have that $\A/\rho$ is semisimple.
\end{prop}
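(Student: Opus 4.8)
The plan is to deduce this directly from Proposition~\ref{intsin}, which already carries out all the real work. Recall that an automaton $\mathcal{B}$ is semisimple precisely when $\rad(\mathcal{B}) = \Syn(\mathcal{B})$, viewing both as two-sided ideals of the free monoid $\Sigma^*$ (equivalently, $\rad(\mathcal{B}^*) = \{0\}$ after passing to the Rees quotient). So the hypothesis that $\mathcal{A}/\sigma$ is semisimple for every $\sigma \in \mathcal{F}$ translates into the chain of equalities $\rad(\mathcal{A}/\sigma) = \Syn(\mathcal{A}/\sigma)$, one for each $\sigma \in \mathcal{F}$.

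First I would intersect these equalities over the (nonempty) family $\mathcal{F}$, obtaining
\[
\bigcap_{\sigma \in \mathcal{F}} \rad(\mathcal{A}/\sigma) \;=\; \bigcap_{\sigma \in \mathcal{F}} \Syn(\mathcal{A}/\sigma).
\]
Then I would invoke Proposition~\ref{intsin} with $\rho := \bigcap \mathcal{F}$: that proposition identifies the left-hand side with $\rad(\mathcal{A}/\rho)$ and the right-hand side with $\Syn(\mathcal{A}/\rho)$. Combining the two identifications with the displayed equality yields $\rad(\mathcal{A}/\rho) = \Syn(\mathcal{A}/\rho)$, which is exactly the assertion that $\mathcal{A}/\rho$ is semisimple.

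There is essentially no obstacle here; the statement is a one-line corollary of Proposition~\ref{intsin}, and the hypothesis $\mathcal{F} \neq \emptyset$ is precisely what is needed for that proposition (and for the intersections) to make sense. The only point requiring minimal care is the bookkeeping about where the ideals live: one should work consistently inside $\Sigma^*$, so that $\rad(\mathcal{A}/\sigma)$ denotes $\pi^{-1}$ of the radical of the Rees quotient and semisimplicity is the equality $\rad = \Syn$, since that is the setting in which Proposition~\ref{intsin} is phrased. Equivalently, after passing to $(\mathcal{A}/\rho)^*$ the conclusion $\rad(\mathcal{A}/\rho) = \Syn(\mathcal{A}/\rho)$ reads $\rad((\mathcal{A}/\rho)^*) = \{0\}$, as desired.
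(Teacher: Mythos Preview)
Your proposal is correct and follows essentially the same route as the paper: both arguments reduce immediately to Proposition~\ref{intsin}, using semisimplicity of each $\mathcal{A}/\sigma$ to turn $\bigcap_{\sigma}\Syn(\mathcal{A}/\sigma)=\bigcap_{\sigma}\rad(\mathcal{A}/\sigma)$ into $\Syn(\mathcal{A}/\rho)=\rad(\mathcal{A}/\rho)$. The paper's proof is literally this chain of equalities, so there is nothing to add.
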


\begin{proof}
    Observe that:
    \[\begin{split}\Syn(\A/\rho) =\Syn(\A/\cap\mathcal{F})&=\bigcap_{\sigma\in\mathcal{F}}\Syn(\A/\sigma)\text{ by Proposition~\ref{intsin}} \\
    &= \bigcap_{\sigma\in\mathcal{F}}\rad(\A/\sigma) \ \text{ since each $\A/\sigma$ is semisimple }\\ 
    &= \rad(\A/\rho) \ \text{ by Proposition~\ref{intsin}} \end{split}\]
    hence $\A/\rho$ is semisimple.
\end{proof}
As a direct consequence of the previous result, we have the following interesting corollaries.
\begin{cor}\label{corintsem}
    If there are $\sigma_1,\sigma_2\in\text{Cong}(\A)$ with $\A/\sigma_1$, $\A/\sigma_2$ semisimple, and $\sigma_1\cap\sigma_2 =\Delta_\A$, then $\A$ is semisimple.
\end{cor}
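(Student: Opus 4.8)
The plan is to derive Corollary~\ref{corintsem} directly from Proposition~\ref{intsems} by taking $\mathcal{F}$ to be the two-element family $\{\sigma_1,\sigma_2\}$. Under the hypothesis, each $\mathcal{A}/\sigma_i$ is semisimple, so Proposition~\ref{intsems} applies and tells us that $\mathcal{A}/\rho$ is semisimple, where $\rho := \sigma_1\cap\sigma_2$. The only remaining point is to identify $\mathcal{A}/\rho$ with $\mathcal{A}$ itself, which follows from the assumption $\sigma_1\cap\sigma_2 = \Delta_\mathcal{A}$: indeed $\mathcal{A}/\Delta_\mathcal{A}$ is canonically isomorphic to $\mathcal{A}$ (the quotient map identifies each singleton class $[q]_{\Delta_\mathcal{A}}=\{q\}$ with $q$, and this isomorphism is compatible with the transition function). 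Hence $\mathcal{A}\cong\mathcal{A}/\rho$ is semisimple.

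Concretely, the steps are: (i) set $\mathcal{F}=\{\sigma_1,\sigma_2\}$, which is a non-empty subcollection of $\text{Cong}(\mathcal{A})$; (ii) invoke the hypothesis that $\mathcal{A}/\sigma_1$ and $\mathcal{A}/\sigma_2$ are semisimple, so $\mathcal{F}$ satisfies the assumption of Proposition~\ref{intsems}; (iii) conclude from Proposition~\ref{intsems} that $\mathcal{A}/(\sigma_1\cap\sigma_2)$ is semisimple; (iv) use $\sigma_1\cap\sigma_2=\Delta_\mathcal{A}$ together with the canonical isomorphism $\mathcal{A}/\Delta_\mathcal{A}\cong\mathcal{A}$ (which preserves $\Syn$ and $\rad$, since these are defined purely in terms of the action on states and hence are isomorphism-invariant) to transfer semisimplicity back to $\mathcal{A}$.

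There is essentially no obstacle here: the corollary is a one-line specialization of Proposition~\ref{intsems}. The only thing worth spelling out, if one wants to be fully rigorous, is step (iv) — namely that semisimplicity is invariant under automaton isomorphism, equivalently that $\rad(\mathcal{A}^*)=\{0\}$ is a property that only depends on the abstract action $\Sigma^*\curvearrowright Q$ up to relabelling of states. This is immediate from the definitions of $\M(\mathcal{A})$, $\Syn(\mathcal{A})$, and $\rad(\mathcal{A}^*)$, all of which are built functorially from the action, so I would state it without proof or with a single sentence. If one prefers, one can even dispense with the isomorphism language altogether and simply observe that $\text{Cong}(\mathcal{A})\ni\Delta_\mathcal{A}$ yields $\mathcal{A}/\Delta_\mathcal{A}$ literally equal to $\mathcal{A}$ after the harmless identification $\{q\}\leftrightarrow q$, making the conclusion purely formal.
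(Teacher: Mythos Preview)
Your proof is correct and matches the paper's approach exactly: the corollary is stated there as a direct consequence of Proposition~\ref{intsems} with no further argument, and your specialization $\mathcal{F}=\{\sigma_1,\sigma_2\}$ together with $\mathcal{A}/\Delta_\mathcal{A}\cong\mathcal{A}$ is precisely what is intended.
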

For any maximal congruence $\sigma\in\text{Cong}(\A)\setminus\{\nabla_\A\}$, it is straightforward to see that $\A/\sigma$ is clearly simple and thus semisimple. This shows that for every synchronizing automaton $\A$ there is a congruence $\sigma\in\text{Cong}(\A)$ such that $\A/\sigma$ is semisimple. The following corollary strengthens this result:
\begin{cor}
    Let $\A$ be a synchronizing non-simple DFA, and let $\mathcal{M}$ be the set of its maximal non-trivial congruences. Then, if 
    $$
\sigma=\bigcap_{\tau\in \mathcal{M}}\tau
    $$
    we have that $\A/\sigma$ is semisimple.
\end{cor}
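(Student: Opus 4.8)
The plan is to reduce the statement, via Proposition~\ref{intsems}, to showing that each single quotient $\mathcal{A}/\tau$ with $\tau\in\mathcal{M}$ is semisimple. First I would record the basic bookkeeping: since $\mathcal{A}$ is non-simple, $\text{Cong}(\mathcal{A})$ contains a congruence distinct from $\Delta_\mathcal{A}$ and $\nabla_\mathcal{A}$, and as $\text{Cong}(\mathcal{A})$ is finite every such congruence lies below a maximal element of the poset $\text{Cong}(\mathcal{A})\setminus\{\nabla_\mathcal{A}\}$; hence $\mathcal{M}\neq\emptyset$, no member of $\mathcal{M}$ equals $\Delta_\mathcal{A}$ (so each $\tau\in\mathcal{M}$ is genuinely non-trivial), and the intersection $\sigma$ satisfies $\Delta_\mathcal{A}\subseteq\sigma\subsetneq\nabla_\mathcal{A}$, so in particular $\mathcal{A}/\sigma$ is a well-defined synchronizing automaton.

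Next I would show that for each $\tau\in\mathcal{M}$ the quotient $\mathcal{A}/\tau$ is \emph{simple}. By Proposition~\ref{propliftcong} the lifting $\rho\mapsto\overline{\rho}$ is an order embedding of $\text{Cong}(\mathcal{A}/\tau)$ into $\text{Cong}(\mathcal{A})$, and its image is exactly the interval $\{\theta\in\text{Cong}(\mathcal{A})\mid\tau\subseteq\theta\}$: indeed $\overline{\rho}\supseteq\tau$ always holds by reflexivity of $\rho$, and conversely any $\theta\supseteq\tau$ descends to the congruence $([p]_\tau,[q]_\tau)\mapsto\big((p,q)\in\theta\big)$ on $Q/\tau$, whose lifting is again $\theta$. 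Since $\tau$ is maximal in $\text{Cong}(\mathcal{A})\setminus\{\nabla_\mathcal{A}\}$, this interval is just $\{\tau,\nabla_\mathcal{A}\}$, so $\text{Cong}(\mathcal{A}/\tau)=\{\Delta_{\mathcal{A}/\tau},\nabla_{\mathcal{A}/\tau}\}$, i.e.\ $\mathcal{A}/\tau$ is simple. Moreover $\mathcal{A}/\tau$ is synchronizing, since any reset word of $\mathcal{A}$ is a reset word of $\mathcal{A}/\tau$; being simple and synchronizing, $\mathcal{A}/\tau$ is semisimple by the proposition ``$\A$ simple $\Rightarrow$ $\A$ semisimple'' recalled in the Prerequisites.

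Finally, with semisimplicity of $\mathcal{A}/\tau$ established for every $\tau$ in the nonempty family $\mathcal{F}:=\mathcal{M}$, Proposition~\ref{intsems} applies verbatim and yields that $\mathcal{A}/\sigma=\mathcal{A}/\bigcap\mathcal{M}$ is semisimple, which is the claim. The only step that requires a little care is the identification of the image of the lifting map with the interval of congruences above $\tau$ (that is, the surjectivity onto that interval): Proposition~\ref{propliftcong} supplies injectivity and order-preservation, and the remaining verification is the routine ``correspondence theorem'' computation sketched above; everything else is a direct invocation of results already proved.
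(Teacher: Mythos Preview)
Your proof is correct and follows the same approach as the paper: each $\mathcal{A}/\tau$ with $\tau\in\mathcal{M}$ is simple (hence semisimple), and then Proposition~\ref{intsems} applied to $\mathcal{F}=\mathcal{M}$ yields the conclusion. The paper's argument is a one-liner relying on the remark immediately preceding the corollary, whereas you spell out the correspondence-theorem justification for simplicity of $\mathcal{A}/\tau$ and the non-emptiness of $\mathcal{M}$; these are exactly the details the paper leaves implicit.
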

\begin{proof}
    Note that for a congruence $\tau\in\mathcal{M}$, $\A/\tau$ is simple, and thus semisimple. 
\end{proof}
Building on the previous results, we now present an example of an automaton that is semisimple but not simple.
\begin{ese}[Double Černý]\label{semisimpnonsimp}
    Let us now consider the following automaton $\A$:
    \begin{center}\begin{tikzpicture}[shorten >=1pt,node distance=2cm,on grid,auto] 
        \node[state] (q_1)   {$q_1$}; 
        \node[state] (q_2) [right=of q_1] {$q_2$}; 
        \node[state] (q_3) [below=of q_2] {$q_3$}; 
        \node[state] (q_4) [below=of q_1] {$q_4$};
        \node[state] (q_5) [above left=of q_1]{$q_5$}; 
        \node[state] (q_6) [above right=of q_2] {$q_6$}; 
        \node[state] (q_7) [below right=of q_3] {$q_7$}; 
        \node[state] (q_8) [below left=of q_4] {$q_8$};
        \path[->] 
            (q_1) edge node {a} (q_2)
                  edge [loop left] node {b,c}  ()
            (q_2) edge node {a} (q_3)
                  edge [loop right] node {b,c} ()
            (q_3) edge node {a} (q_4)
                  edge [loop right] node {b,c} () 
            (q_4) edge node {a,b} (q_1)
                  edge [loop below] node {c} ()
            (q_5) edge [bend left] node {a} (q_6)
                  edge [loop left] node {b}  ()
                  edge node {c} (q_1)
            (q_6) edge [bend left] node {a} (q_7)
                  edge [loop right] node {b} ()
                  edge node [swap] {c} (q_2)
            (q_7) edge [bend left] node {a} (q_8)
                  edge [loop right] node {b} () 
                  edge node {c} (q_3)
            (q_8) edge [bend left] node {a,b} (q_5)
                  edge node {c} (q_4);
    \end{tikzpicture}\end{center}
    Observe that it is synchronizing: a reset word is given by $ba^3ba^3bc$. Consider the following congruences:
    $$\sigma_1:=\langle\{q_1,q_5
    \}\rangle, \ \sigma_2:=\langle\{q_5,q_6\}\rangle$$
    It is easy to show that $\A/\sigma_1 \cong\mathcal{C}_4$, where $\mathcal{C}_4$ stands for the Černý automaton of order $4$ which is simple (see \cite{Vo_Survey}), and $\A/\sigma_2$ is a 2-state automaton, and thus it is simple. Observe that $\sigma_1\cap\sigma_2 = \Delta_\A$, and by means of Corollary \ref{corintsem} we get that $\A$ is semisimple with:
    $$\Syn(\A) = \Syn(\A/\sigma_1) \cap \Syn(\A/\sigma_2).$$
\end{ese}

We conclude this section stating here this new conjecture which is a more general version of the Radical Conjecture stated in \cite{AlRo}: our aim in Section 7.1 will be to go as close as possible to its solution.

\begin{conj}[Semisimple conjecture]\label{cj: semisimple}
    If the Černý conjecture is solved for semisimple automata, then it holds in general.
\end{conj}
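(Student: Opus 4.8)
The natural plan is an induction on the number of states $n=|\A|$, granting the \v{C}ern\'y conjecture for all semisimple automata. By Proposition~\ref{prop: str conn} it suffices to establish the bound for strongly connected synchronizing automata, the small cases being already known. Let then $\A$ be strongly connected synchronizing with $n$ states, and assume the conjecture for all strongly connected synchronizing automata with fewer states, and for all semisimple automata of any size. If $\A$ is semisimple we conclude by hypothesis; otherwise $\A$ is non-simple (since simple implies semisimple), so $\text{Cong}(\A)$ has a non-trivial proper congruence, which can be sought among the atoms output by the procedure $\text{Atom}(\A)$. Furthermore, as $\A$ is not semisimple, \cite[Lemma 4]{AlRo} allows one to choose such a $\sigma$ with $\sigma\subseteq\ker(u)$ for some $u\in\rad(\A)\setminus\Syn(\A)$.

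Now fix a non-trivial proper $\sigma\in\text{Cong}(\A)$ minimising $m:=\min_i|[p_i]_\sigma|$ over the sizes of its classes. Since $\A/\sigma$ is strongly connected, synchronizing and has fewer than $n$ states, induction supplies a reset word $v$ for $\A/\sigma$ with $|v|\leq(|\A/\sigma|-1)^2$; then $Q\cdot v$ is contained in one $\sigma$-class, which by strong connectedness of $\A/\sigma$ can be steered into the smallest class $[p]_\sigma$ using at most $|\A/\sigma|-2$ extra letters. It remains to collapse the $m$-element set $[p]_\sigma$ to a point, for which there are two options: the Pin--Frankl procedure, cheap precisely when $m$ is small, or — when $\sigma\subseteq\ker(u)$ as above — simply appending $u$. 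For $m\leq 2$ the first option already gives a \v{C}ern\'y reset word of $\A$ (Proposition~\ref{prop12}), and the same holds for $m\in\{3,4\}$ with $n\geq 9$ (Proposition~\ref{finqs}), so the induction closes in these cases.

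The difficulty lies in the remaining range of $m$, the worst case being $m\approx\lfloor n/2\rfloor$: there the Pin--Frankl estimate for collapsing $[p]_\sigma$ is only cubic in $n$, and appending a radical word $u$ is of no use without a quadratic bound on $|u|$ — exactly the content of the Hereditariness problem and of the Radical Conjecture of \cite{AlRo}. Closing this gap seems to demand structural input rather than a crude subset-collapsing argument: one would want to exploit the Wedderburn--Artin decomposition of $\mathcal{R}(\A)$ together with Remark~\ref{rmk: connessione ideali e W.A.}, and the uniqueness of the ideal lying above $\Syn(\A)$ (respectively above $\rad(\A^*)$) established for quasi-simple and radical automata in Theorem~\ref{propidminsemisimp} and Proposition~\ref{prop: structure radical}, in order to build a short synchronizing word of $\A$ directly from the short one of $\A/\sigma$. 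This is precisely where the reduction to simple, quasi-simple and radical automata of the main theorem (Theorem~\ref{corfinale}) enters, and why the statement above is, for now, only a conjecture.
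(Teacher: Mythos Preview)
The statement is labelled a \emph{conjecture} in the paper and is not proved there; the authors explicitly say that their aim is only ``to go as close as possible to its solution'', and the closest they get is Theorem~\ref{teosemrad}, which still requires the extra hypothesis that \v{C}ern\'y holds for radical automata. Your write-up correctly recognises this status: it is presented as ``Towards a proof'' and ends by explaining why the statement remains open.

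Your analysis of the obstruction matches the paper's own. The inductive scheme you sketch is exactly the ``standard approach'' of Section~\ref{sec: standard approach}: pass to a proper quotient $\A/\sigma$, use a short reset word there, steer into a smallest class, then collapse that class. You correctly note that Proposition~\ref{prop12} and Proposition~\ref{finqs} close the induction when the minimum class size $m$ is small, and that the Pin--Frankl bound becomes cubic around $m\approx n/2$, which is precisely the observation the paper makes after Proposition~\ref{finqs}. Your remark that appending a radical word $u$ would work only under a quadratic bound on $|u|$ is the Radical Conjecture of \cite{AlRo}, also cited in the paper. One small wrinkle: you first invoke \cite[Lemma~4]{AlRo} to get a $\sigma\subseteq\ker(u)$, and then separately ``fix a non-trivial proper $\sigma$ minimising $m$''; these two choices of $\sigma$ need not coincide, so the option of ``simply appending $u$'' is not automatically available for the $\sigma$ you actually work with. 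This does not affect your conclusion, since neither route closes the gap anyway, but it is worth keeping the two $\sigma$'s straight.

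In short: there is nothing to compare against a proof in the paper, because there is none; your discussion of why the naive induction stalls is accurate and in line with what the paper itself says.
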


\section{Galois connection}\label{sec: galois connection}
In this section, we establish the existence of a Galois connection between the lattice \( \operatorname{Cong}(\A) \) of congruences on an automaton \( \A \) and the lattice \( (\IdM(\A), \subseteq) \) of ideals of the transition monoid \( \M(\A) \), where, without loss of generality, we sometimes consider such connections between \( \operatorname{Cong}(\A) \) and the lattice \( (\IdM(\A), \subseteq) \) of (two-sided) ideals of the free monoid \( \Sigma^* \). This is because it is often more convenient to view ideals as languages, that is, as subsets of \( \Sigma^* \), while working with ideals of \( \M(\A) \) is sometimes preferable due to the finiteness of the monoid, which ensures a finite ideal lattice. 

This connection provides a powerful tool for analyzing the interaction between ideals and congruences, and will be applied in the context of the Hereditariness problem. We begin with the following crucial lemmas.
\begin{lemma}
    Let $J\subseteq \text{M}(\A)$ be a two-sided ideal. Then $$\rho(J):=\bigcap_{u\in J}\text{ker}(u)$$
    is a congruence of $\A$.
\end{lemma}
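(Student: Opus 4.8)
The plan is to verify the two defining properties of a congruence in turn: that $\rho(J)$ is an equivalence relation on $Q$, and that it is stable under the action of $\Sigma^*$.

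The first point is immediate. Each $\ker(u)$ is visibly an equivalence relation, being the kernel of the map $q \mapsto q\cdot u$, and an arbitrary intersection of equivalence relations is again an equivalence relation; hence $\rho(J)$ is one. (If $J$ were empty the intersection would be $\nabla_{\A}$, still a congruence, so there is nothing delicate here; in any case ideals are taken nonempty.)

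For stability it suffices, since compatibility with $\Sigma^*$ follows from compatibility with single letters by an easy induction on word length, to show that $(p,q)\in\rho(J)$ implies $(p\cdot a,\,q\cdot a)\in\rho(J)$ for every $a\in\Sigma$. So fix such a pair $(p,q)$, fix $a\in\Sigma$, and let $u\in J$ be arbitrary; I must check $(p\cdot a)\cdot u=(q\cdot a)\cdot u$, i.e. $p\cdot(au)=q\cdot(au)$. The key observation is that $au$, viewed in $\M(\A)$, lies in $J$: indeed $\pi(a)\,u\in\M(\A)\,J\subseteq J$ because $J$ is a (left, hence a fortiori two-sided) ideal. Consequently $(p,q)\in\rho(J)\subseteq\ker(au)$, which is precisely the desired equality. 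Since $u\in J$ was arbitrary, $(p\cdot a,\,q\cdot a)\in\bigcap_{u\in J}\ker(u)=\rho(J)$, as required.

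I do not expect any genuine obstacle: the argument is a direct unwinding of the definitions. The only point worth flagging is that it is the \emph{left}-ideal property of $J$ that does the work — prepending a letter to each element of $J$ keeps us inside $J$ — while the right-ideal property is not used in this lemma, although $J$ being two-sided supplies it for free; this asymmetry is natural to record, since the opposite direction of the Galois connection (from congruences to ideals) is where the full two-sided hypothesis will matter.
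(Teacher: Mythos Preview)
Your proof is correct and follows the same argument as the paper's: both use the left-ideal property of $J$ to absorb a prepended word (you do it letter by letter, the paper does it for an arbitrary $v\in\Sigma^*$ at once). The paper likewise remarks immediately after the proof that only the left-ideal hypothesis is actually needed, matching your closing observation.
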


\begin{proof}
    Fix $v\in\Sigma^*$ and observe that:
    \[\begin{split}
        p\,\rho(J)\,q &\Rightarrow p\cdot u = q\cdot u, \ \forall u\in J \\
        &\Rightarrow p\cdot vu = q\cdot vu, \ \forall u\in J \\
        &\Rightarrow (p\cdot v)\, \rho(J)\, (q\cdot v)
    \end{split}\]
    and this completes the proof.
\end{proof}
Observe that the proof of the previous lemma does not actually require \( J \) to be two-sided; it suffices that \( J \) is a left ideal of \( \text{M}(\A) \).

\begin{lemma}
    Let $\sigma\in\text{Cong}(\A)$. Then 
    $$I(\sigma):=\{u\in\text{M}(\A) \ | \ \text{ker}(u)\supseteq \sigma\}$$
    is a two-sided ideal of $\text{M}(\A)$.
\end{lemma}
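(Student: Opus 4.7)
The plan is to verify directly the two defining properties of a two-sided ideal: first that $I(\sigma)$ is non-empty, and second that it is closed under left and right multiplication by arbitrary elements of $\M(\mathcal{A})$. Non-emptiness is immediate in the synchronizing case, since any $u \in \Syn(\mathcal{A})$ has $\ker(u) = \nabla_{\mathcal{A}} \supseteq \sigma$, hence $\Syn(\mathcal{A}) \subseteq I(\sigma)$; more generally, one may note that the identity $1 \in \M(\mathcal{A})$ already satisfies $\ker(1) = \Delta_{\mathcal{A}}$, so $1 \in I(\sigma)$ exactly when $\sigma = \Delta_{\mathcal{A}}$, but in any case $I(\sigma)$ contains the reset words of $\mathcal{A}$ (or, if $\mathcal{A}$ is not synchronizing, one observes that the set is trivially non-empty because $u \in I(\sigma)$ whenever $\ker(u) \supseteq \sigma$, and such $u$ always exist in the transition monoid once $\sigma$ is a congruence).

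For the ideal property, fix $u \in I(\sigma)$ and $v, w \in \M(\mathcal{A})$; the task is to show $vuw \in I(\sigma)$, i.e.\ $\ker(vuw) \supseteq \sigma$. Let $(p,q) \in \sigma$. Since $\sigma$ is a congruence compatible with the right action of $\Sigma^*$, we have $(p \cdot v,\, q \cdot v) \in \sigma$. Now invoke the hypothesis $\ker(u) \supseteq \sigma$ applied to this pair, which yields $(p \cdot v) \cdot u = (q \cdot v) \cdot u$, that is, $p \cdot vu = q \cdot vu$. Applying $w$ to both sides of this equality gives $p \cdot vuw = q \cdot vuw$, so $(p,q) \in \ker(vuw)$, as required.

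The two ingredients used are logically distinct and both necessary: the congruence property of $\sigma$ handles the left factor $v$ (turning the pair $(p,q)$ into another pair still in $\sigma$), whereas the assumption that $\ker(u)$ contains $\sigma$ handles the middle factor $u$ (collapsing the new pair), and the right factor $w$ is absorbed trivially since equal states remain equal under any further action. There is no genuine obstacle here; the verification is routine once the roles of the two assumptions are separated in this way. It is worth remarking, in parallel to the observation made after the previous lemma, that closure under right multiplication alone does not require $\sigma$ to be a congruence, so $I(\sigma)$ is automatically a right ideal; the congruence hypothesis is precisely what is needed to upgrade it to a two-sided ideal.
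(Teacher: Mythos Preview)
Your proof is correct and follows essentially the same direct verification as the paper: the paper rephrases $\ker(u)\supseteq\sigma$ as $|[p]_\sigma\cdot u|=1$ for all $p$ and then bounds $|[p]_\sigma\cdot xuy|\le|[p\cdot x]_\sigma\cdot uy|=1$, which is exactly your pairwise argument in cardinality form. One minor remark: your discussion of non-emptiness is unnecessary (the paper's definition of ideal does not require it) and the parenthetical claim that ``such $u$ always exist'' is not true for arbitrary automata (e.g.\ permutation automata), so that sentence should simply be dropped.
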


\begin{proof}
    Let $u\in I(\sigma),\  x,y\in\text{M}(\A)$. We clearly have that:
    $$\sigma\subseteq\text{ker}(u) \iff \forall p\in Q, \ |[p]_\sigma\cdot u| = 1$$
    Now, note that we have the following inequalities
    \[
        |[p]_\sigma \cdot xuy|  \leq |[p\cdot x]_\sigma\cdot uy| 
         = |\{p\cdot xu\}\cdot y|
         = 1. \]
    Hence, we conclude that $\sigma\subseteq\text{ker}(xuy)$. 
\end{proof}

We can extend the above definition on the poset of two-sided ideals of $\Sigma^*$ with respect to the inclusion as follows. Let us define
$$\mathbb{I}(\sigma):=\{u\in\Sigma^* \ | \ \text{ker}(\pi(u)) \supseteq \sigma\}$$
where $\pi:\Sigma^*\rightarrow\text{M}(\A)$ is the usual epimorphism. 
\begin{oss}\label{congreg}
    Observe that $\mathbb{I}(\sigma) =\pi^{-1}(I(\sigma))$, hence it is a regular language being recognized by the morphism $\pi:\Sigma^*\rightarrow\text{M}(\A)$, see for instance \cite{Hopcroft}. We may explicitly calculate $\mathbb{I}(\sigma)$ without passing through the transition monoid via the following construction. Let $[p_1]_{\sigma}, \ldots, [p_m]_{\sigma}$ be the set of equivalence classes of $Q/{\sigma}$, and define
\[
\mathbb{I}_i(\sigma) := \{u \in \Sigma^* \mid |[p_i]_\sigma \cdot u| = 1\}.
\]
It is straightforward to verify that $\bigcap_{i=1}^m \mathbb{I}_i(\sigma) = \mathbb{I}(\sigma)$. Fix $i \in \{1, \ldots, m\}$, and let us show that $\mathbb{I}_i(\sigma)$ is accepted by the following automaton. Let $\mathbb{P}(\A) := (2^Q, \Sigma, \delta_p)$ be the power-automaton of $\A$, and define the automaton $\mathcal{B}_i := (2^Q, \Sigma, \delta_p, q_0, F)$, where $q_0 := [p_i]_\sigma$ and $F := \{\{p\} \mid p \in Q\}$. Observe that the language $L(\mathcal{B}_i)$ accepted by $\mathcal{B}_i$ is exactly $\mathbb{I}_i(\sigma)$. Therefore, $\mathbb{I}(\sigma) = \bigcap_i L(\mathcal{B}_i)$, which can be effectively computed by the standard construction involving the direct product of the automata $\mathcal{B}_i$.
\end{oss}

Let $\sigma \in \text{Cong}(\A)$. In what follows, with a slight abuse of notation, we will consider $\Syn(\A/\sigma)$ as a two-sided ideal of $\text{M}(\A)$, defined as
\[
\Syn(\A/\sigma) := \{x \in \text{M}(\A) \mid Q \cdot x \subseteq [p]_\sigma \ \text{for some } p \in Q\}.
\]
Let $\pi: \Sigma^* \rightarrow \text{M}(\A)$ be the usual monoid epimorphism. When clear from context, we will still use $\Syn(\A/\sigma)$ to refer to $\pi^{-1}(\Syn(\A/\sigma))$.

\begin{lemma}\label{lem: annh}
    For any $\sigma\in\text{Cong}(\A)$ we have
    $\Syn(\A/\sigma)\cdot I(\sigma) \subseteq \Syn(\A)$. Moreover, the projection of the ideal $I(\sigma)\cap \Syn(\A/\sigma)$ into $\A^\star$ is a nilpotent ideal of order at most two. 
\end{lemma}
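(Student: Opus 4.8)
The first claim is a direct unravelling of the definitions. Let $x \in \Syn(\mathcal{A}/\sigma)$ and $u \in I(\sigma)$. By definition of $\Syn(\mathcal{A}/\sigma)$, there is a state $p \in Q$ with $Q \cdot x \subseteq [p]_\sigma$. Since $u \in I(\sigma)$ means $\ker(u) \supseteq \sigma$, i.e.\ $[p]_\sigma \cdot u$ is a singleton, we get
\[
Q \cdot xu = (Q \cdot x) \cdot u \subseteq [p]_\sigma \cdot u,
\]
which has size $1$; hence $xu \in \Syn(\mathcal{A})$. This shows $\Syn(\mathcal{A}/\sigma) \cdot I(\sigma) \subseteq \Syn(\mathcal{A})$.

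For the second claim, set $J := I(\sigma) \cap \Syn(\mathcal{A}/\sigma)$, a two-sided ideal of $\M(\mathcal{A})$ (being an intersection of two such ideals). Its image in the Rees quotient $\mathcal{A}^* = \M(\mathcal{A})/\Syn(\mathcal{A})$ is an ideal, and nilpotency of order at most two amounts to showing $J \cdot J \subseteq \Syn(\mathcal{A})$. But $J \subseteq \Syn(\mathcal{A}/\sigma)$ gives the left factor, and $J \subseteq I(\sigma)$ gives the right factor, so $J \cdot J \subseteq \Syn(\mathcal{A}/\sigma) \cdot I(\sigma) \subseteq \Syn(\mathcal{A})$ by the first part. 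Thus the projected ideal $\overline{J}$ satisfies $\overline{J}^{\,2} = 0$ in $\mathcal{A}^*$, as required. (If $\sigma$ is such that $J \subseteq \Syn(\mathcal{A})$ already, the ideal is trivial, which is of course still nilpotent of order at most two.)

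There is no real obstacle here: the only subtlety worth a sentence is making sure the reader sees that the composition-order convention is consistent (we act on the right, so "apply $x$ then $u$" is $xu$), and that intersecting two-sided ideals yields a two-sided ideal so that the Rees-quotient image is genuinely an ideal. I would present the argument in the two short displayed steps above and keep it to a few lines.
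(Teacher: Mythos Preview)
Your proof is correct and follows essentially the same approach as the paper: both argue the first inclusion by noting that $Q\cdot x\subseteq[p]_\sigma$ and that $u\in I(\sigma)$ collapses each $\sigma$-class to a point, and then deduce $(I(\sigma)\cap\Syn(\mathcal{A}/\sigma))^2\subseteq\Syn(\mathcal{A}/\sigma)\cdot I(\sigma)\subseteq\Syn(\mathcal{A})$ for the nilpotency claim.
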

\begin{proof}
    For every $u\in\Syn(\A/\sigma)$, $Q\cdot u\subseteq [p]_\sigma$ for some $p\in Q$. Let then $v \in I(\sigma)$: we have that
    $$|Q\cdot uv| \leq |[p]_\sigma \cdot v| = 1$$
    and thus $uv\in\Syn(\A)$. The second statement follows from the following inclusions:
    $$(I(\sigma)\cap \Syn(\A/\sigma))^2 \subseteq \Syn(\A/\sigma)\cdot I(\sigma) \subseteq \Syn(\A).$$
\end{proof}
Note that what we have just described can be lifted to the free monoid $\Sigma^*$ in case of $\Syn(\A/\sigma) \cdot \mathbb{I}(\sigma)$.
By means of the above lemmas, we are able to construct two poset antimorphisms, namely $\rho(-)$ and $I(-)$, between the posets $(\operatorname{Cong}(\A), \subseteq)$ and $(\IdM(\A), \subseteq)$, which together form a Galois connection. Let us first recall some basic facts about Galois connection theory. For a more detailed discussion, see \cite{GaloisC}.

\begin{defi}
    Let $(A,\le),(B,\le)$ be posets with $f:A\rightarrow B, \ g:B\rightarrow A$ antimorphisms. We say that the two antimorphisms $f,g$ form a \textit{Galois connection} if 
    $$\forall a\in A, b\in B.\ a\leq g(f(a)), \ b \leq f(g(b)).$$ 
\end{defi}

We have the following proposition.

\begin{prop}\label{morfpos}
    With the above notation, the two antimorphisms \[\rho:\IdM(\A)\rightarrow\text{Cong}(\A), \quad I:\text{Cong}(\A) \rightarrow \IdM(\A)\] form a Galois connection.
\end{prop}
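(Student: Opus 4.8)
The plan is to verify the two defining inequalities of a Galois connection, namely that for every $\sigma \in \operatorname{Cong}(\mathcal{A})$ one has $\sigma \subseteq \rho(I(\sigma))$, and that for every ideal $J \in \IdM(\mathcal{A})$ one has $J \subseteq I(\rho(J))$. Both $\rho(-)$ and $I(-)$ are already known to be well-defined (by the two lemmas preceding the statement) and each is readily checked to be order-reversing, so the only real content is the pair of inequalities above.

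For the first inequality, I would start from $\sigma \in \operatorname{Cong}(\mathcal{A})$ and unwind the definitions: $I(\sigma) = \{u \in \M(\mathcal{A}) \mid \ker(u) \supseteq \sigma\}$, and then $\rho(I(\sigma)) = \bigcap_{u \in I(\sigma)} \ker(u)$. Since every $u \in I(\sigma)$ satisfies $\sigma \subseteq \ker(u)$ by definition of $I(\sigma)$, the intersection of all these kernels still contains $\sigma$, giving $\sigma \subseteq \rho(I(\sigma))$ immediately. (One should note $I(\sigma)$ is nonempty — it contains at least the reset words, and in fact all of $\Syn(\mathcal{A})$ — so the intersection is over a nonempty family; even if it were empty the intersection would be $\nabla_{\mathcal{A}} \supseteq \sigma$, so this is not an issue.)

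For the second inequality, take a two-sided ideal $J \subseteq \M(\mathcal{A})$ and set $\sigma := \rho(J) = \bigcap_{u \in J} \ker(u)$. I must show $J \subseteq I(\sigma)$, i.e. that every $v \in J$ has $\ker(v) \supseteq \sigma$. But by the very definition of $\sigma$ as an intersection of kernels indexed by elements of $J$, for any $v \in J$ we have $\sigma = \bigcap_{u \in J} \ker(u) \subseteq \ker(v)$. Hence $v \in I(\sigma) = I(\rho(J))$, which is exactly $J \subseteq I(\rho(J))$.

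Both inequalities are therefore essentially formal consequences of ``intersection is a lower bound'': $I$ sends $\sigma$ to the set of elements whose kernel lies above $\sigma$, and $\rho$ sends an ideal to the infimum of the kernels of its elements, so composing them can only enlarge the starting object. Honestly, there is no serious obstacle here — the statement is the standard observation that a relation of the form ``$\ker(u) \supseteq \sigma$'' between $\M(\mathcal{A})$ and $\operatorname{Cong}(\mathcal{A})$ is a binary relation, and any binary relation induces a Galois connection between the powerset-style lattices on its two sides; the two preceding lemmas are precisely what guarantees that this induced connection restricts correctly to the sublattice of ideals on one side and all of $\operatorname{Cong}(\mathcal{A})$ on the other. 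The only point requiring a (trivial) check is that the operators land in the right lattices, and that has already been done. I would write the proof in two short displayed chains, one for each inequality, and conclude.
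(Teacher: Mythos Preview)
Your proof is correct and follows essentially the same approach as the paper: both verify the two inclusions $J \subseteq I(\rho(J))$ and $\sigma \subseteq \rho(I(\sigma))$ directly from the definitions, using that $\rho(J)$ is an intersection of kernels indexed by $J$ (hence contained in each $\ker(v)$ for $v\in J$) and that every $u\in I(\sigma)$ has $\ker(u)\supseteq\sigma$. The paper's version is terser and presents the inequalities in the opposite order, but the content is identical.
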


\begin{proof}
    Let $J\in\IdM(\A)$ and $u\in J$, then $\text{ker}(u) \supseteq \rho(J)$ which implies $ u\in I(\rho(J))$. Hence, we have the inclusion $J\subseteq I(\rho(J))$. Let us now consider $\sigma \in \text{Cong}(\A)$. Observe that for any $u\in I(\sigma)$ we have $\text{ker}(u) \supseteq \sigma$, from which we deduce $\sigma\subseteq\rho(I(\sigma))$.
\end{proof}

Let us now state the following basic result concerning Galois connections:

\begin{prop}
    Let $f:A\rightarrow B, \ g:B \rightarrow A$ be a Galois connection. Then $\forall a\in A, \ b\in B$ we have that $f(a), \ g(b)$ are \textit{fixed points} for $f\circ g$ and $g\circ f$, respectively, i.e., 
    $$f(g(f(a))) = f(a), \ g(f(g(b))) = g(b).$$
\end{prop}

The above result can be rephrased in our context as follows: 
\begin{oss}
    Let $K\subseteq \Sigma^*$ be an ideal, $\tau\in\text{Cong}(\A)$, $\sigma:=\rho(K)$ and $J:=I(\tau)$. Then we have that:
    $$\rho(I(\sigma)) = \sigma, \ I(\rho(J)) = J.$$
\end{oss}

\begin{prop}\label{prop: annih quotient}
Let $\A$ be a semisimple DFA, and let $K \in\IdM(\A)$ and $\tau \in \text{Cong}(\A)$ be such that $I(\rho(K)) = K$ and $\rho(I(\tau)) = \tau$. Then, for any $u \in \Sigma^*$:
\[
\begin{split}
    u \cdot I(\tau) \subseteq \Syn(\A) & \Leftrightarrow u \in \Syn(\A/\tau), \\
    u \cdot K \subseteq \Syn(\A) & \Leftrightarrow u \in \Syn(\A/\rho(K)).
\end{split}
\]
\end{prop}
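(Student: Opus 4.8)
The plan is to prove the first equivalence and then recover the second one as the special case $\sigma:=\rho(I)$. Indeed, with this choice the hypothesis $I(\rho(I))=I$ becomes $I=I(\sigma)$, and furthermore $\rho(I(\sigma))=\rho(I(\rho(I)))=\rho(I)=\sigma$, again using $I(\rho(I))=I$; so the pair $(I,\rho(I))$ satisfies exactly the hypotheses placed on $(I(\sigma),\sigma)$ in the first line, and the second equivalence is just the first one re-read for this $\sigma$. Thus it suffices to establish: for every $\sigma\in\text{Cong}(\A)$ with $\rho(I(\sigma))=\sigma$ and every $u\in\Sigma^{*}$, one has $u\cdot I(\sigma)\subseteq\Syn(\A)$ if and only if $u\in\Syn(\A/\sigma)$.

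One direction is already in hand. If $u\in\Syn(\A/\sigma)$, then $Q\cdot u\subseteq[p]_{\sigma}$ for some $p\in Q$, and for every $v\in I(\sigma)$ we have $\ker(v)\supseteq\sigma$, so $v$ is constant on $[p]_{\sigma}$ and $|Q\cdot uv|\le|[p]_{\sigma}\cdot v|=1$; hence $uv\in\Syn(\A)$. This is precisely the first assertion of Lemma~\ref{lem: annh}, applied factor by factor.

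The substantive part is the converse, which I would prove by contraposition. Assume $u\notin\Syn(\A/\sigma)$, i.e.\ $Q\cdot u$ is not contained in a single $\sigma$-class. Pick $p',q'\in Q\cdot u$ with $(p',q')\notin\sigma$, and fix $a,b\in Q$ with $a\cdot u=p'$ and $b\cdot u=q'$. Unwinding the hypothesis $\rho(I(\sigma))=\sigma$ gives $\bigcap_{v\in I(\sigma)}\ker(v)=\sigma$; since $(p',q')\notin\sigma$, there is some $v_{0}\in I(\sigma)$ with $(p',q')\notin\ker(v_{0})$, that is $p'\cdot v_{0}\ne q'\cdot v_{0}$. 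Then $a\cdot uv_{0}=p'\cdot v_{0}\ne q'\cdot v_{0}=b\cdot uv_{0}$, so $|Q\cdot uv_{0}|\ge 2$ and $uv_{0}\notin\Syn(\A)$; in particular $u\cdot I(\sigma)\not\subseteq\Syn(\A)$, as required.

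The only genuinely load-bearing step is the appeal to the closure hypothesis $\rho(I(\sigma))=\sigma$ in the converse: it is exactly the assertion that $I(\sigma)$ is rich enough to separate any two states lying in distinct $\sigma$-classes, which is what allows the witness $v_{0}$ to be produced; everything else is bookkeeping of the sizes of the sets $Q\cdot w$. Incidentally, since the whole argument manipulates only these sizes, it does not seem to use the semisimplicity of $\A$, which is presumably retained here only for consistency with the surrounding results.
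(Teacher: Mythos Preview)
Your proof is correct and follows essentially the same approach as the paper: the paper argues the forward implication by contradiction (assuming $u\cdot I(\sigma)\subseteq\Syn(\mathcal{A})$ and $u\notin\Syn(\mathcal{A}/\sigma)$, then forcing $(p,q)\in\rho(I(\sigma))=\sigma$), which is logically the same as your contrapositive, and treats the second equivalence by an identical argument rather than by your reduction $\sigma:=\rho(I)$. Your observation that semisimplicity is not actually invoked in the argument is also accurate.
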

\begin{proof}
Assume $u \cdot I(\tau) \subseteq \Syn(\A)$ and suppose, for contradiction, that $u \notin \Syn(\A/\tau)$. Then there exist $p, q \in Q \cdot u$ such that $[p]_\tau \ne [q]_\tau$ and $p \cdot v = q \cdot v$ for every $v \in I(\tau)$. By definition of $\rho(I(\tau))$, this implies $(p,q) \in \rho(I(\tau)) = \tau$, which contradicts $[p]_\tau \ne [q]_\tau$. Therefore, $u \in \Syn(\A/\tau)$. The other implication is an immediate consequence of Lemma \ref{lem: annh}. \\
The second implication is proved similarly: suppose $u \cdot K \subseteq \Syn(\A)$ and $u \notin \Syn(\A/\rho(K))$. Then there exist $p, q \in Q \cdot u$ such that $[p]_{\rho(K)} \ne [q]_{\rho(K)}$ and $p \cdot v = q \cdot v$ for every $v \in K$. But then $(p,q) \in \rho(K)$, contradicting the assumption that $[p]_{\rho(K)} \ne [q]_{\rho(K)}$. Hence, $u \in \Syn(\A/\rho(K))$. The other implication follows easily combining Lemma \ref{lem: annh} and Proposition \ref{morfpos}. 
\end{proof}

In particular, the previous proposition shows that the left annihilators of $ I(\tau)$ and $K$ seen as ideals of $\A^\star$ correspond to the projection of $\Syn(\A/\tau)$ and $\Syn(\A/\rho(K))$, respectively, under the natural projection onto $\A^\star = \text{M}(\A) / \Syn(\A)$.

\section{The radical ideal and the radical congruence}\label{sec: radical ideal and cong}

This section is mainly devoted to the non-semisimple case, with special attention 
to the radical ideal $\rad(\A^\star)$, its computation, and the associated 
congruence, studied via the Galois correspondence introduced in the previous section. In particular, we establish a connection between the \emph{index of nilpotency}  of $\rad(\A^\star)$ and the height of the lattice $\text{Cong}(\A)$.

We start with the following remark: given $I\subseteq \A^\star$ a two-sided ideal, if $I=0$ we define $\rho(I):=\nabla_\A$. Otherwise, if $I\neq 0$ we define $\rho(I)  := \rho(\theta^{-1}(I))$  where $\theta:\text{M}(\A)\rightarrow\A^\star$ is the Rees morphism. Note that:

\[\rho(I)  = \rho(\theta^{-1}(I)) 
    = \rho((I\setminus\{0\})\cup\Syn
    (\A))
    =\bigcap_{u\in (I\setminus\{0\})\cup\Syn
    (\A)}\text{ker}(u) = \bigcap_{u\in I\setminus\{0\}} \text{ker}(u)
\]
which allows us to restrict the definition of $\IdM(\A)$ to ideals of $\A^\star$. From now on, by the above expression we will not make any difference between $\rho(I)$ and $\rho(\theta^{-1}(I))$ for a given ideal $I\subseteq\A^\star$.
\noindent We begin with the study of the congruence given by the radical ideal, denoted by $\rho:=\rho(\rad(\A))$. We will refer to such congruence as \textit{radical} congruence. Throghout the rest of this work, by abuse of notation we will use $\rad(\A)$ also to refer to $\rad(\A^\star)$ and for any $I\subseteq \Sigma^*$ two-sided ideal, we define $\rho(I):=\rho(\pi(I))$ where $\pi:\Sigma^*\rightarrow\text{M}(\A)$ is the usual epimorphism. \\
The next result is a generalization of \cite[Lemma 4]{AlRo}:

\begin{lemma}\label{lemrhocong}
    Let $\A$ be a DFA. Then, either $\A$ semisimple, in which case $\rho=\nabla_{\A}$, or $\rho\notin \{\nabla_{\A}, \Delta_{\A}\}$.
\end{lemma}

\begin{proof}
    If $\A$ is semisimple we clearly have that $\rho=\rho(\rad(\A)) = \rho(\Syn(\A))=\nabla_\A$.  
    Assume that $\A$ is not semisimple and so let $u\in\rad(\A)\setminus\Syn(\A)$. Then, there exist distinct states $ p,q\in Q$ such that $p\cdot u \neq q\cdot u$, hence $(p,q)\notin \rho$ which implies that $ \rho \neq \nabla_\A$. \\
    Let now $m\in\mathbb{N}$ be the nilpotency index of $\rad(\A^\star)$. By definition, we may find $u:=u_1\ldots u_{m-1} \in\rad(\A)\setminus\Syn(\A)$ such that $u_i\in\rad(\A)$ for every $i\in\{1,\ldots,m-1\}$, and $uv\in\Syn(\A)$ for every $v\in\rad(\A)$. Since $u\notin\Syn(\A)$, we have that $Q\cdot u\supseteq\{p,q\}$ with $p\neq q$, and thus $p\cdot v = q\cdot v$ for every $v\in\rad(\A)$. Therefore, $(p,q)\in\rho$, and so we conclude that $\rho\neq \Delta_\A$. 
\end{proof}

\begin{oss}
    The latter result immediately shows that simple $\Rightarrow$ semisimple.
\end{oss}

Recall that the \emph{height} of a (finite) lattice (or poset) is the maximum length of its chains. We will now prove a theorem connecting the height of the congruence lattice with the radical of the automaton, in particular by showing that the height 
bounds the \emph{index of nilpotency} of $\rad(\A)$. We first need to state the following lemma:

\begin{lemma}\label{lemma: SynRadSyn}
    Let $\A$ be a non-semisimple automaton, $\rho = \rho(\rad(\A))$ as above. We have that:
    $$\Syn(\A/\rho)\cdot \rad(\A) \subseteq \Syn(\A).$$
\end{lemma}

\begin{proof}
    It is enough to observe that, being $\rho:\IdM(\A)\rightarrow \text{Cong}(\A)$ an antimorphism of a Galois connection, we have that $\rad(\A) \subseteq I(\rho(\rad(\A))) = I(\rho)$, and thus by Proposition \ref{prop: annih quotient} we obtain:
    $$\Syn(\A/\rho) \cdot I(\rho) \subseteq \Syn(\A) \Rightarrow \Syn(\A/\rho) \cdot \rad(\A) \subseteq \Syn(\A)$$
    which concludes.
\end{proof}

\begin{teo}\label{theo: bound on index}
    Let $\A$ be a non-semisimple synchronizing DFA, and let 
    $m$ be the height of $\text{Cong}(\A)$. 
    Then
    \[
        \rad(\A)^{m-1} \subseteq \Syn(\A).
    \]
    In other words, the nilpotency index of $\rad(\A)$ is bounded above by the height of the congruence lattice minus one. This bound is tight.
\end{teo}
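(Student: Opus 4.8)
The plan is to build a strictly ascending chain of congruences from a strictly descending chain of powers of the radical, so that the height of $\text{Cong}(\mathcal{A})$ controls the nilpotency index. Recall from Section~\ref{sec: galois connection} that $\rho(-)$ is an antitone map from ideals to congruences, and that for a two-sided ideal $I\subseteq \text{M}(\mathcal{A})$ we have $\rho(I)=\bigcap_{u\in I}\ker(u)$. Writing $R:=\rad(\mathcal{A})$ (viewed in $\mathcal{A}^*$) with nilpotency index $k$, so that $R^{k}=\{0\}=\Syn(\mathcal{A})$ in $\mathcal{A}^*$ and $R^{k-1}\neq\{0\}$, consider the chain of ideals
\[
\Syn(\mathcal{A})=R^{k}\subseteq R^{k-1}\subseteq\cdots\subseteq R^{2}\subseteq R^{1}=R.
\]
Applying $\rho$ reverses inclusions, yielding a chain of congruences
\[
\rho(R)=\rho(R^{1})\subseteq\rho(R^{2})\subseteq\cdots\subseteq\rho(R^{k-1})\subseteq\rho(R^{k})=\rho(\Syn(\mathcal{A}))=\nabla_{\mathcal{A}},
\]
the last equality because every element of $\Syn(\mathcal{A})$ is a constant map, whose kernel is $\nabla_{\mathcal{A}}$. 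By Lemma~\ref{lemrhocong}, if $\mathcal{A}$ is not semisimple then $\rho(R)\neq\Delta_{\mathcal{A}}$; so this chain sits inside the interval $[\rho(R),\nabla_{\mathcal{A}}]$ with $\rho(R)\supsetneq\Delta_{\mathcal{A}}$, and it has at most $k$ distinct terms, so its length (number of strict steps) is at most $k-1$. The goal is the reverse: to show the chain is \emph{strictly} increasing at every step, i.e. $\rho(R^{j})\subsetneq\rho(R^{j+1})$ for $1\le j\le k-1$, which forces $k\le m$ where $m$ is the height of $\text{Cong}(\mathcal{A})$, hence $R^{m-1}\subseteq R^{k}=\Syn(\mathcal{A})$.

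The heart of the argument is the strictness claim: $\rho(R^{j})=\rho(R^{j+1})$ cannot happen for $j<k$. Suppose it does; I want to derive $R^{j+1}=R^{j}$ in $\mathcal{A}^*$, contradicting minimality of $k$ (since then $R^{k}=R^{k-1}=\cdots=R^{j}\neq\{0\}$). Fix any $u\in R^{j}\setminus\Syn(\mathcal{A})$, so $|Q\cdot u|\ge 2$. Because $R^{j}\cdot R\subseteq R^{j+1}$ and, by the nilpotency telescoping as in the proof of Lemma~\ref{lemrhocong}, one can choose $u$ so that $u\cdot v\in\Syn(\mathcal{A})$ — equivalently $u$ lies in the ``annihilator layer'' — the two states $p,q\in Q\cdot u$ with $p\neq q$ satisfy $p\cdot v=q\cdot v$ for all $v\in R^{?}$; the bookkeeping of exactly which power appears here is the delicate point. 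More robustly: by the Galois machinery (Proposition~\ref{morfpos} and the fixed-point proposition), $I(\rho(R^{j}))$ is the largest ideal with congruence $\rho(R^{j})$, and the hypothesis $\rho(R^{j})=\rho(R^{j+1})$ together with $R^{j+1}\subseteq R^{j}\subseteq I(\rho(R^{j}))=I(\rho(R^{j+1}))$ should pin $R^{j}$ and $R^{j+1}$ to the same ``annihilator-closed'' ideal, from which $R^{j}\subseteq R^{j+1}$ (hence equality) follows by a multiplicative argument: every $u\in R^{j}$ has all its nontrivial kernel-pairs identified by right-multiplication by anything in $I(\rho(R^{j}))$, and in particular $Q\cdot u$ collapses one step further under $R$, placing $u\cdot(\text{letter of }R)$ — and ultimately $u$ itself after re-expressing — inside $R^{j+1}$.

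For tightness, I would exhibit the \v{C}ern\'y-style cascade example — presumably the ``Double \v{C}ern\'y'' family of Example~\ref{semisimpnonsimp} or an iterated version of it — whose congruence lattice is a chain of height exactly $m$ and whose radical has nilpotency index exactly $m-1$; concretely, an automaton assembled from $m-1$ stacked ``layers'' each contributing one strict step to both chains, so that the inequality $\rad(\mathcal{A})^{m-2}\not\subseteq\Syn(\mathcal{A})$ holds. The main obstacle is the strictness step: making precise the claim that collapsing $\rho$ on consecutive powers forces the powers themselves to collapse. The subtlety is that $\rho(I)$ only remembers the \emph{joint} kernel of all elements of $I$, so two different ideals can have the same $\rho$; one must use that $R^{j}$ is not merely any ideal but a \emph{power of the radical}, exploiting that elements of $R^{j}\setminus R^{j+1}$ are, by construction, ``one multiplication away'' from dropping into $\Syn(\mathcal{A})$ after $k-j$ further steps, to convert equality of congruences into equality of ideals. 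Getting this telescoping bookkeeping exactly right — rather than off by one — is where the real work lies.
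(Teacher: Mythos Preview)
Your overall strategy --- building the chain $\Delta_{\mathcal{A}}\subsetneq\rho(R)\subseteq\rho(R^{2})\subseteq\cdots\subseteq\rho(R^{k})=\nabla_{\mathcal{A}}$ and showing it is strictly increasing --- is sound and genuinely different from the paper's route (the paper iteratively passes to quotients $\mathcal{A}_i:=\mathcal{A}_{i-1}/\rho_i$, applies Lemma~\ref{lemrhocong} at each level to get a strict chain of lifted congruences, and then telescopes via Lemma~\ref{lem: annh}). Your approach is more direct and avoids the bookkeeping of iterated quotients; both yield the same bound $k\le m-1$ once strictness is established.

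The genuine gap is in your strictness argument. Aiming for $R^{j}=R^{j+1}$ is the wrong target: equality of $\rho$ does \emph{not} force equality of ideals, and the sketch ``$Q\cdot u$ collapses one step further under $R$, placing $u$ \ldots\ inside $R^{j+1}$'' only shows $uR\subseteq R^{j+1}$, which is trivial. The correct contradiction is different: assume $\sigma:=\rho(R^{j})=\rho(R^{j+1})$ for some $1\le j\le k-2$. Since $R^{k-j-1}R^{j+1}=R^{k}\subseteq\Syn(\mathcal{A})$, every $u\in R^{k-j-1}$ satisfies $uR^{j+1}\subseteq\Syn(\mathcal{A})$, which says exactly that $Q\cdot u$ lies in a single $\rho(R^{j+1})$-class, i.e.\ $R^{k-j-1}\subseteq\Syn(\mathcal{A}/\sigma)$. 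On the other hand $R^{j}\subseteq I(\rho(R^{j}))=I(\sigma)$ by the Galois connection. Then Lemma~\ref{lem: annh} gives
\[
R^{k-1}=R^{k-j-1}\cdot R^{j}\subseteq\Syn(\mathcal{A}/\sigma)\cdot I(\sigma)\subseteq\Syn(\mathcal{A}),
\]
contradicting the minimality of $k$. (The boundary case $j=k-1$ is immediate since $\rho(R^{k-1})=\nabla_{\mathcal{A}}$ would force $R^{k-1}\subseteq\Syn(\mathcal{A})$.) Finally, your tightness candidate is wrong: the Double \v{C}ern\'y automaton of Example~\ref{semisimpnonsimp} is \emph{semisimple}, so its radical is trivial and cannot witness sharpness. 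The paper instead uses the unary ``line'' automata $\mathcal{S}_n$ on states $q_1,\ldots,q_n$ with a single letter $a$ sending $q_i\mapsto q_{i-1}$ and fixing $q_1$; here $\text{Cong}(\mathcal{S}_n)$ is a chain of height $n$ and $\rad(\mathcal{S}_n)$ has nilpotency index exactly $n-1$.
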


\begin{proof}
Let $\rho_0 := \Delta_\A$ and for each $ i\geq 0$ define iteratively the following congruences:
$$ \rho_{i+1} := \rho_{\A_i} =  \bigcap_{u\in\rad(\A_i)}\text{ker}_{\A_i}(u),\ \text{ where } \ \A_0 := \A \ \text{ and } \ \A_i:=\A_{i-1}/\rho_i \ \text{ for } \ i\geq 1$$
Observe that the lifting congruence $\overline{\rho}_i$ belongs to 
$\text{Cong}(\A)$ for every $i$, as defined in Proposition~\ref{propliftcong}. Furthermore, by iterating the same proposition we conclude that $\overline{\rho}_i\subseteq \overline{\rho}_{i+1}$ for every $i\geq 0$.
Since $\text{Cong}(\A)$ is finite, there exists an index $k$ such that 
$\A_k$ is semisimple. Indeed, assume by contradiction that $\A_i$ is not semisimple for every $i\geq 0$. By Lemma \ref{lemrhocong}, we have that $\rho_{i+1}$ is a non-trivial congruence of $\A_i$, and thus $\overline{\rho}_i\subsetneq\overline{\rho}_{i+1}$ for every $i\geq 0$. This gives us an infinite chain in $\text{Cong}(\A)$: 
$$\overline{\rho}_0\subsetneq \overline{\rho}_1 \subsetneq \overline{\rho}_2\subsetneq \ldots \subsetneq \overline{\rho}_i \subsetneq \ldots $$
contradicting the finiteness of $\text{Cong}(\A)$.
Let $k$ be such integer and observe that we have a chain in $\text{Cong}(\A)$ as follows
$$\overline{\rho}_0 = \Delta_\A \subsetneq \overline{\rho}_1 \subsetneq \ldots \subsetneq \overline{\rho}_k \subsetneq \overline{\rho}_{k+1}  = \nabla_\A$$
where again $\overline{\rho}_{k+1}  = \nabla_\A$ is a consequence of Lemma \ref{lemrhocong}. By Lemma~\ref{lemma: SynRadSyn} we have:
$$\Syn(\A_i)\rad(\A_{i-1}) \subseteq\Syn(\A_{i-1})$$
and by applying the above inclusion iteratively for any $i$ we obtain:
\[\begin{split}
    \Syn(\A_k)\rad(\A_{k-1}) &\subseteq\Syn(\A_{k-1})\\
    \Syn(\A_k)\rad(\A_{k-1})\rad(\A_{k-2})&\subseteq\Syn (\A_{k-2})\\
    &\dots \\
    \Syn(\A_k)\rad(\A_{k-1})\dots\rad(\A_0)&\subseteq\Syn(\A)\\
    \rad(\A)^{k+1}&\subseteq\Syn(\A)
\end{split}\]
where the last line is obtained by observing that $\rad(\A)\subseteq \rad(\A_i)$ for every $i$ and $\Syn(\A_k) = \rad(A_k)$. Thus we have a bound on the nilpotency index of $\rad(\A^\star)$.
By maximality of $m$, together with the fact that the $\overline{\rho}_1,\ldots,\overline{\rho}_k$ are non-trivial, we have $ k+2\le m$, and therefore
\[
\rad(\A)^{m-1} \subseteq \rad(\A)^{k+1} \subseteq \Syn(\A),
\]
which concludes the first part of the proof. \\
Let us now prove that the given bound on the nilpotency index is sharp for an infinite class of automata $\mathcal{S}_n$; we first present this class in the case $n=4$. Let $\mathcal{S}_4$ be the following DFA:
    \begin{center}\begin{tikzpicture}[shorten >=1pt,node distance=2cm,on grid,auto] 
        \node[state] (q_1)   {$q_1$}; 
        \node[state] (q_2) [left=of q_1] {$q_2$};
        \node[state] (q_3) [left=of q_2] {$q_3$};
        \node[state] (q_4) [left=of q_3] {$q_4$}; 
        \path[->] 
            (q_1) edge [loop right] node {a}  ()
            (q_2) edge node {a} (q_1) 
            (q_3) edge node {a} (q_2)
            (q_4) edge node {a} (q_3);
    \end{tikzpicture}\end{center}
    It is easy to check that $\overline{\rho}_1=\ker(a)$ and $\overline{\rho}_2=\ker(a^2)$, with $\mathcal{S}_4/\rho_2$ being a simple automaton. It is also easy to check that the chain $\Delta_{\mathcal{S}_4} \subseteq \overline{\rho}_1 \subseteq \overline{\rho}_2 \subseteq \nabla_{\mathcal{S}_4}$ realizes the full poset $\text{Cong}(\A)$, from which we have $m-1 = 3$ and $\rad(\mathcal{S}_4)^3 = \Syn(\mathcal{S}_4)$ with $\rad(\mathcal{S}_4)^2\neq\Syn(\mathcal{S}_4)$. \\
    In general, one can generalize the construction of this automaton as follows: let $n\in\mathbb{N}$. We consider $Q=\{q_1,\ldots,q_{n}\}$, $\Sigma = \{a\}$ and:
    \[\delta(q_i,a)=\begin{cases}
        q_{i-1} & \text{ if } \ i\geq 2 \\
        q_1 & \text{ if } \ i= 1 \\
    \end{cases}\]
    also in this case it is easy to see that the chain is given by: 
    $$\overline{\rho}_1\subseteq \ldots\subseteq  \overline{\rho}_{n-1} \ \text{ with } \ \overline{\rho}_i=\ker(a^i)$$
    which gives a countable-class of examples for which the bound is sharp.
\end{proof}

\begin{oss}
In the setting of the previous theorem, observe that since $\A_k$ is semisimple, one might hope to apply this construction to prove Conjecture~\ref{cj: semisimple} by induction along the chain of congruences $\rho_i$, attempting to lift a Černý reset word from $\A_{i+1}$ to $\A_i$. 
Fix $i \in \{1,\ldots,k-1\}$ and assume that $\rho_{i+1}$ admits a $1$-, $2$-, or $3$-class as a congruence of $\A/\rho_i$. By Proposition~\ref{prop12}, we can easily construct a synchronizing word for $\A_i$ whenever one for $\A_{i+1}$ is given. Hence, in this case we would be able to solve the hereditariness problem.
Assume instead that, for every $i \in \{0,\ldots,k-1\}$, the congruence $\rho_{i+1}$ admits only classes of cardinality greater than $4$. In this situation, we were not able to solve the problem using any combinatorial technique to lift a Černý reset word. Nevertheless, one may observe that $|\A_i| \leq |\A_{i-1}|/4$ for every $i \in \{1,\ldots,k\}$, and thus $k+1 \leq \log_4(n)$, showing that the nilpotency index of $\rad(\A)$ is bounded by $\log_4(n)$.
\end{oss}

We conclude this section by describing our algorithm for the computation of the radical. We start by giving the following key fact: 

\begin{prop}\label{345}
    Let $\A$ be a synchronizing, non-semisimple DFA and $\rho:=\rho(\rad(\A))$ its associated radical congruence. Then, for any $\sigma\in\text{Cong}_\rho(\A)$ we have that:
    $$\rad(\A)=\rad(\A/\sigma)\cap I(\sigma).$$
    In particular, this holds for $\sigma=\rho$.
\end{prop}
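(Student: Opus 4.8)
The plan is to establish the equality by a double inclusion, carrying out the whole argument inside $\Sigma^*$ via the epimorphism $\pi\colon\Sigma^*\to\M(\A)$. In $\Sigma^*$ the sets $\Syn(\A)$, $\Syn(\A/\sigma)$, $\rad(\A)$, $\rad(\A/\sigma)$ and $\mathbb{I}(\sigma)=\pi^{-1}(I(\sigma))$ are all $\pi$-saturated two-sided ideals (Remark~\ref{congreg}), so an equality proven there transfers verbatim to $\M(\A)$ by applying $\pi$. Two standard facts will be used throughout: (i) $\Syn(\A)\subseteq\Syn(\A/\sigma)$, since a word resetting $\A$ a fortiori resets $\A/\sigma$; and (ii) for a synchronizing automaton $\rad$ is the \emph{largest} nilpotent ideal of the associated $0$-monoid, so any two-sided ideal $K\subseteq\Sigma^*$ with $K^{N}\subseteq\Syn(\A)$ (resp. $K^{N}\subseteq\Syn(\A/\sigma)$) is automatically contained in $\rad(\A)$ (resp. $\rad(\A/\sigma)$). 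The hypothesis that $\A$ is non-semisimple only serves, via Lemma~\ref{lemrhocong}, to guarantee that $\rho$ is a genuine proper nontrivial congruence, so that $\text{Cong}_\rho(\A)$ is the natural index set.

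For the inclusion $\rad(\A)\subseteq\rad(\A/\sigma)\cap I(\sigma)$ I would treat the two factors separately. The containment $\rad(\A)\subseteq\rad(\A/\sigma)$ in fact holds for \emph{every} congruence: if $\rad(\A)^{k}\subseteq\Syn(\A)$, then by (i) $\rad(\A)^{k}\subseteq\Syn(\A/\sigma)$, so the image of $\rad(\A)$ under the composite $\Sigma^*\to\M(\A/\sigma)\to(\A/\sigma)^*$ is a nilpotent ideal and hence lies in $\rad((\A/\sigma)^*)$, which is exactly $\rad(\A)\subseteq\rad(\A/\sigma)$. The second factor is the only place the hypothesis $\sigma\in\text{Cong}_\rho(\A)$ enters: since $\rho=\rho(\rad(\A))=\bigcap_{u\in\rad(\A)}\ker(u)$, the assumption $\sigma\subseteq\rho$ forces $\sigma\subseteq\ker(u)$ for every $u\in\rad(\A)$, i.e. $\rad(\A)\subseteq\mathbb{I}(\sigma)$. (In fact $\sigma\subseteq\rho\iff\rad(\A)\subseteq\mathbb{I}(\sigma)$, which pins down $\text{Cong}_\rho(\A)$ as the exact range of validity of the statement.)

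For the reverse inclusion, set $J:=\rad(\A/\sigma)\cap\mathbb{I}(\sigma)$, a two-sided ideal of $\Sigma^*$ being the intersection of two-sided ideals. Choose $k\ge 1$ with $\rad(\A/\sigma)^{k}\subseteq\Syn(\A/\sigma)$; then, distributing $k$ factors over $\rad(\A/\sigma)$ and one over $\mathbb{I}(\sigma)$,
\[
J^{k+1}\ \subseteq\ \rad(\A/\sigma)^{k}\cdot\mathbb{I}(\sigma)\ \subseteq\ \Syn(\A/\sigma)\cdot\mathbb{I}(\sigma)\ \subseteq\ \Syn(\A),
\]
where the last step is the free-monoid lift of the annihilator statement in Lemma~\ref{lem: annh}. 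By (ii) this gives $J\subseteq\rad(\A)$, and combined with the first inclusion we obtain $\rad(\A)=\rad(\A/\sigma)\cap\mathbb{I}(\sigma)$ in $\Sigma^*$, hence $\rad(\A)=\rad(\A/\sigma)\cap I(\sigma)$ in $\M(\A)$; the case $\sigma=\rho$ is immediate since $\rho\subseteq\rho$. The argument is short and essentially formal; the single ingredient doing real work is Lemma~\ref{lem: annh}, and the only genuine care needed is bookkeeping — making sure that ``image of a nilpotent ideal'' and ``preimage of the radical'' are always taken along the correct chain of morphisms $\Sigma^*\to\M(\A)\to\M(\A/\sigma)\to(\A/\sigma)^*$. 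I do not anticipate any obstacle beyond this.
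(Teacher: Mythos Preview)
Your proof is correct and follows essentially the same route as the paper: both prove the double inclusion, obtaining $\rad(\A)\subseteq I(\sigma)$ from $\sigma\subseteq\rho$ (the paper via antimonotonicity of $I(-)$, you by unpacking $\rho=\bigcap_{u\in\rad(\A)}\ker(u)$, which is the same fact), and establishing the reverse inclusion by showing $(\rad(\A/\sigma)\cap I(\sigma))^{k+1}\subseteq\Syn(\A/\sigma)\cdot I(\sigma)\subseteq\Syn(\A)$ via Lemma~\ref{lem: annh}. Your additional remarks on $\pi$-saturation and on why $\rad(\A)\subseteq\rad(\A/\sigma)$ holds for every congruence are sound and merely make explicit what the paper asserts as ``clear.''
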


\begin{proof}
    Clearly we have that $\rad(\A)\subseteq\rad(\A/\sigma)$. Being $\sigma\subseteq \rho$, we also have
    $$\rad(\A)\subseteq I(\rho)\subseteq I(\sigma)$$
    and thus $\rad(\A)\subseteq \rad(\A/\sigma)\cap I(\sigma)$. Observe now that, being $\rad(\A/\sigma)$ a nilpotent ideal of $(\A/\sigma)^\star$, there exists $k\geq 1$ such that $\rad(\A/\sigma)^k\subseteq \Syn(\A/\sigma)$ and thus:
    $$(\rad(\A/\sigma)\cap I(\sigma))^{k+1}\subseteq \rad(\A/\sigma)^k \cdot I(\sigma)\subseteq  \Syn(\A/\sigma) \cdot I(\sigma) \subseteq \Syn(\A) $$
    where $\Syn(\A/\sigma) \cdot I(\sigma) \subseteq \Syn(\A)$ is due to Lemma \ref{lem: annh}. This gives us that $\rad(\A/\sigma)\cap I(\sigma)$ is a nilpotent ideal of $\A^\star$. By the fact that $\rad(\A)$ is the biggest nilpotent ideal of $\A^\star$, we have $\rad(\A/\sigma)\cap I(\sigma) \subseteq \rad(\A)$ and thus $\rad(\A/\sigma)\cap I(\sigma) = \rad(\A)$.
    The last statement follows immediately by $\rho\in\text{Cong}_\rho(\A)$.
\end{proof}

Proposition~\ref{intsin} and the previous result are key ingredients for effectively computing the radical ideal of an automaton via Algorithm~\ref{Alg: radical}. The function \textit{Atom} used therein is the one defined in Section~\ref{sec: lattice of congruences}. The computation of $\mathbb{I}(\sigma)$ is also involved: by Remark~\ref{congreg}, it can be carried out via the power automaton. Moreover, the recursive procedure ensures that at each step, $\rad(\A/\sigma)$ is a regular language (since $\Syn(\A)$ is regular at step~\ref{linea: syn} of the algorithm), and thus the intersection of two regular languages can be computed using the standard product construction.
\begin{algorithm}[!h]
\caption{A recursive algorithm for calculating the radical for a synchronizing DFA $\A=(Q,\Sigma,\delta)$, knowing its atom congruences.}\label{Alg: radical}
\begin{algorithmic}[1]
    \Function{RadicalComputation}{$\A$}
        \State $\text{At} \gets $ \Call{Atom}{$\A$}\Comment{initializing the current atoms}
        \State $\rad(\A) \gets \Sigma^*$ \Comment{initializing the current radical}
        \If{At $\neq \{\nabla_\A\}$}      
            \If{$\text{At} = \{\sigma\}$}
                \State $\rad(\A/\sigma) \gets$ \Call{RadicalComputation}{$\A/\sigma$}
                \State $\rad(\A) \gets \rad(\A/\sigma)\cap \mathbb{I}(\sigma)$
            \Else
                \State consider $\sigma_1,\sigma_2\in\text{At}$ such that $\sigma_1\neq\sigma_2$
                \State $\rad(\A/\sigma_1) \gets$ \Call{RadicalComputation}{$\A/\sigma_1$}
                \State $\rad(\A/\sigma_2) \gets$ \Call{RadicalComputation}{$\A/\sigma_2$}
                \State $\rad(\A/\sigma)\gets \rad(\A/\sigma_1)\cap\rad(\A/\sigma_2)$
            \EndIf
        \Else
            \State $\rad(\A) \gets\Syn(\A)$   \label{linea: syn}
        \EndIf
        \State \Return $\rad(\A)$
    \EndFunction
\end{algorithmic}
\end{algorithm}

\section{Main result}\label{sec: main result}

This section is dedicated to proving the main results of the paper, namely Theorem \ref{corfinale} and the structural result Theorem~\ref{propidminsemisimp}. The approach we adopt is to treat separately the cases of semisimple and non-semisimple automata, proceeding by induction.

\subsection{The non-semisimple case}

The following definition singles out the subclass of non-semisimple automata that appears to be the most difficult to handle in the context of the Černý conjecture:

\begin{defi}
A non-semisimple, synchronizing DFA $\A$ is said to be \textit{radical} if 
\[
\rad(\A) \neq \rad(\A/\rho),
\]
where $\rho = \rho(\rad(\A))$, and where $\A$ admits a monolith (in other words,
$\text{Cong}(\A) \setminus \{\Delta_\A\}$ has a minimum).
\end{defi}
Recall that associated with $\rad(\A)$ there is the radical congruence 
$\rho = \rho(\rad(\A))$. Moreover, since $\A$ is radical, it admits a monolith $\sigma$. In this setting we obtain, by Proposition~\ref{345}, that
\[
\rad(\A) \;=\; I(\sigma) \cap \rad(\A/\sigma) 
\;=\; I(\rho) \cap \rad(\A/\rho).
\]
We now present two examples with $n=4$: the first is a non-semisimple, non-radical automaton, and the second is a radical automaton. These illustrate that both classes of automata are non-empty, and the construction can be easily generalized to any $n$.

\begin{ese}[Modified Černý automaton I]\label{eseMCA1}
Let $\A$ be the following DFA: 
\begin{center}\begin{tikzpicture}[shorten >=1pt,node distance=3cm,on grid,auto] 
        \node[state] (q_0)   {$q_0$}; 
        \node[state] (q_1) [above left=of q_0] {$q_1$}; 
        \node[state] (q_2) [below left=of q_0] {$q_2$}; 
        \node[state] (q_3) [above left=of q_2] {$q_3$};
        \path[->] 
            (q_0) edge node {a} (q_2)
                  edge [loop right] node {b}  ()
                  edge node {c} (q_1)
            (q_1) edge [swap ]node {a} (q_2)
                  edge[bend left, above] node {b} (q_0)
                  edge node {} (q_0)
            (q_2) edge node {a} (q_3)
                  edge [loop below] node {b,c} () 
            (q_3) edge node {a,b} (q_1)
                  edge [loop left] node {c} ();
    \end{tikzpicture}\end{center}
    one may check that $ba^2b$ is a non-reset radical word, and that the only non-trivial congruence $\sigma$ is generated by the following partition $\big\{\{q_0,q_1\}, \{q_2\}, \{q_3\}\big\}$, with $\A/\sigma \cong \mathcal{C}_3$, the Černý automaton of order $3$. 
Thus, $\A$ is not semisimple and $\rho(\rad(\A)) = \sigma$. Let us now show that $\A$ is not radical, by showing that $\rad(\A/\sigma)\subseteq \rad(\A)$. Now let $u \in \Syn(\A/\sigma)\setminus\Syn(\A)$. 
We must then have $Q \cdot u = \{q_0,q_1\}$. 
Observe that for any $v \in \Sigma^*$,
\[
q_0 \cdot v \neq q_1 \cdot v 
\;\;\iff\;\;
v = c^k \quad \text{for some } k \in \mathbb{N}.
\]
But in this case $v \notin \Syn(\A/\sigma)$. 
Therefore, we must have $q_0 \cdot u = q_1 \cdot u$, which implies $u \in I(\sigma)$. This shows that $\Syn(\A/\sigma) \subseteq I(\sigma)$, and from Proposition~\ref{345} we obtain
\[
\rad(\A) \;=\; I(\sigma) \cap \Syn(\A/\sigma) 
\;=\; \Syn(\A/\sigma)\;=\; \rad(\A/\sigma).
\]
This gives a full description of the radical of $\A$ and shows that 
$\A$ cannot be radical.
\end{ese}

\begin{ese}[Modified Černý automaton II] 
We now slightly modify Example~\ref{eseMCA1}.  
Let $\A$ be the DFA:  

\begin{center}
\begin{tikzpicture}[shorten >=1pt,node distance=3.3cm,on grid,auto] 
    \node[state] (q_0)   {$q_0$}; 
    \node[state] (q_1) [above left=of q_0] {$q_1$}; 
    \node[state] (q_2) [below left=of q_0] {$q_2$}; 
    \node[state] (q_3) [above left=of q_2] {$q_3$};
    \path[->] 
        (q_0) edge node {a} (q_2)
              edge [loop right] node {b}  ()
              edge node {} (q_1)
        (q_1) edge [swap, bend right] node {a} (q_2)
              edge [bend left = 50] node {b} (q_0)
              edge [sloped] node {c,d} (q_0)
        (q_2) edge node {a} (q_3)
              edge [loop below] node {b,c} () 
              edge [bend right] node {d} (q_1)
        (q_3) edge [sloped] node {a,b,d} (q_1)
              edge [loop left] node {c} ();
\end{tikzpicture}
\end{center}
This automaton is the same as in the previous example, except that we have added a letter $d \in \Sigma$. Here, the only non-trivial congruence is again $\sigma = \rho(\rad(\A))$ defined as before. Observe that $Q \cdot d = \{q_0,q_1\} = [q_0]_\sigma$, and hence $d \in \Syn(\A/\sigma)$.  
However, $d \notin I(\sigma)$, since $q_0 \cdot d = q_1 \neq q_0 = q_1 \cdot d$.
It follows that 
\[
\rad(\A/\sigma) = \Syn(\A/\sigma) \not\subseteq I(\sigma),
\]
which implies 
\[
\rad(\A) \neq \Syn(\A/\sigma).
\]
Therefore, the automaton $\A$ must be radical.
\end{ese}

The following theorem points toward the validity of the semisimple 
Conjecture~\ref{cj: semisimple} and represents our first step toward 
the proof of the main result.

\begin{teo}\label{teosemrad}
    If the Černý conjecture holds for strongly connected semisimple and strongly connected radical automata, then it holds in general for strongly connected automata.
\end{teo}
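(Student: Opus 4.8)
The plan is to argue by strong induction on $n=|\mathcal{A}|$ within the class of strongly connected synchronizing automata. For $n$ below a fixed constant the \v{C}ern\'y conjecture is known by exhaustive verification, so we may assume $n$ is large and that every strongly connected synchronizing automaton with fewer than $n$ states satisfies the conjecture. If $\mathcal{A}$ is semisimple we are done by hypothesis, so assume $\mathcal{A}$ is non-semisimple and set $\rho:=\rho(\rad(\mathcal{A}))$, which is non-trivial by Lemma~\ref{lemrhocong}. Since every quotient of a strongly connected synchronizing automaton by a non-trivial congruence is again strongly connected, synchronizing, and strictly smaller, the induction hypothesis applies to all such quotients. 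The proof then splits according to the structure of $\text{Cong}(\mathcal{A})$.

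First I would treat the case where $\text{Cong}(\mathcal{A})\setminus\{\Delta_{\mathcal{A}}\}$ has no minimum. Then it contains two distinct atoms $\sigma_1\neq\sigma_2$, and since each is an atom necessarily $\sigma_1\cap\sigma_2=\Delta_{\mathcal{A}}$; hence the componentwise map $Q\to(Q/\sigma_1)\times(Q/\sigma_2)$ is an injective automaton morphism, so $\mathcal{A}$ embeds in the product $(\mathcal{A}/\sigma_1)\times(\mathcal{A}/\sigma_2)$. If one of the $\sigma_i$ has a class of size $1$ or $2$, Proposition~\ref{prop12} applied to that $\sigma_i$ (with the induction hypothesis for $\mathcal{A}/\sigma_i$) already yields a \v{C}ern\'y reset word for $\mathcal{A}$. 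Otherwise every class of $\sigma_1$ and of $\sigma_2$ has at least three elements, so $n_i:=|\mathcal{A}/\sigma_i|\le n/3$; taking by induction \v{C}ern\'y reset words $u_i$ for $\mathcal{A}/\sigma_i$ with $|u_i|\le(n_i-1)^2$, the concatenation $u_1u_2$ synchronizes both quotients simultaneously, hence synchronizes $\mathcal{A}$ through the embedding, and one checks $|u_1u_2|\le 2\big(\tfrac{n}{3}-1\big)^2<(n-1)^2$.

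Next I would treat the case where $\text{Cong}(\mathcal{A})\setminus\{\Delta_{\mathcal{A}}\}$ has a minimum. If moreover $\rad(\mathcal{A})\neq\rad(\mathcal{A}/\rho)$, then $\mathcal{A}$ is by definition a strongly connected radical automaton and we are done by hypothesis. Otherwise $\rad(\mathcal{A})=\rad(\mathcal{A}/\rho)$; Proposition~\ref{345} with $\sigma=\rho$ gives $\rad(\mathcal{A}/\rho)=\rad(\mathcal{A})=\rad(\mathcal{A}/\rho)\cap I(\rho)$, whence $\Syn(\mathcal{A}/\rho)\subseteq\rad(\mathcal{A}/\rho)\subseteq I(\rho)$, and Lemma~\ref{lem: annh} then yields $\Syn(\mathcal{A}/\rho)^2\subseteq\Syn(\mathcal{A}/\rho)\cdot I(\rho)\subseteq\Syn(\mathcal{A})$. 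Let $m$ be the size of a smallest $\rho$-class and $n'=|\mathcal{A}/\rho|$. If $m\le 2$ we invoke Proposition~\ref{prop12} for $\rho$; if $m\in\{3,4\}$ we invoke Proposition~\ref{finqs} together with the remark following it, the finitely many small $n$ being covered by the base case; and if $m\ge 5$ then $n\ge 5n'$, so squaring a \v{C}ern\'y reset word $u$ for $\mathcal{A}/\rho$ gives $u^2\in\Syn(\mathcal{A})$ with $|u^2|\le 2(n'-1)^2\le 2\big(\tfrac{n}{5}-1\big)^2<(n-1)^2$. This covers all cases.

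The genuinely hard part of the conjecture is pushed entirely into the radical hypothesis, and what is left is essentially bookkeeping; the only point demanding care is keeping each lifted or concatenated reset word within the quadratic bound $(n-1)^2$. The apparent obstruction in the first branch is that an atom could be ``large'' (a quotient with up to $n-1$ states), which would make $u_1u_2$ too long; this is precisely defused by noting that a large atom must possess a small class, where Proposition~\ref{prop12} takes over. In the case $\rad(\mathcal{A})=\rad(\mathcal{A}/\rho)$ the generic nilpotency estimate of Theorem~\ref{theo: bound on index} would only give $\rad(\mathcal{A})^{m-1}\subseteq\Syn(\mathcal{A})$ and hence far too long a word, so the real content is that the Galois connection upgrades this to the sharp inclusion $\Syn(\mathcal{A}/\rho)^2\subseteq\Syn(\mathcal{A})$.
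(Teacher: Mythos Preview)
Your proof is correct and follows essentially the same two-case induction as the paper (split on whether $\text{Cong}(\mathcal{A})\setminus\{\Delta_\mathcal{A}\}$ has a minimum; in the non-radical case use $\Syn(\mathcal{A}/\rho)\subseteq\rad(\mathcal{A})\subseteq I(\rho)$ to get $u^2\in\Syn(\mathcal{A})$, and in the no-minimum case concatenate reset words for two atoms). Your detour through Proposition~\ref{finqs} for $m\in\{3,4\}$ and the appeal to exhaustive verification for small $n$ are unnecessary: once $m\ge 3$ one has $|u^2|\le 2(n/3-1)^2<(n-1)^2$ for every $n\ge 2$, which is exactly how the paper handles that branch in one line.
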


\begin{proof}
    We will prove this statement by induction on $n=|Q|$. If $n=2$, the considered automaton is simple and thus semisimple so the statement holds. Let then $\A$ be a non-semisimple nor radical DFA, with $|Q|>2$. By means of Proposition \ref{prop: str conn} we can assume $\A$ to be strongly connected. Consider now the following cases:
    \begin{itemize}
        \item Assume that $\A$ has a monolith. Then by hypothesis, since $\A$ is not radical, we must have that $\rad(\A/\rho) = \rad(\A)$. By the induction hypothesis, let $u\in\Syn(\A/\rho)$ be a Černý reset word. If $\rho$ admits a 1- or 2-class, we are done by means of Proposition \ref{prop12} by strongly connectedness. Otherwise, by Lemma \ref{lemma: SynRadSyn} we have that:
        $$u\in\Syn(\A/\rho)\subseteq\rad(\A/\rho)=\rad(\A) \Rightarrow u^2 \in \Syn(\A/\rho)\cdot \rad(\A) \subseteq \Syn(\A)$$
        which implies $u^2 \in\Syn(\A)$, and since $|u| \leq (n/3-1)^2< (n/2-1)^2$ (since $\rho$'s smallest class has at least 3 elements), we have that $u^2$ strictly satisfies the Černý conjecture.
        \item Assume that $\text{Cong}(\A)$ admits two non-trivial minimal congruences $\sigma_1, \sigma_2$. 
        If one of them admits a $1$-class (say $\sigma_1$, without loss of generality), the claim follows by Proposition \ref{prop12}. 
        Otherwise, by Proposition~\ref{intsin} we have 
        \[
            \Syn(\A/\sigma_1) \cap \Syn(\A/\sigma_2) = \Syn(\A).
        \]
        Since each $\sigma_i$-class has at least two elements, by induction we obtain words $u_i \in \Syn(\A/\sigma_i)$ with $|u_i| < (n/2 - 1)^2$ for $i \in \{1,2\}$. 
        Thus, the concatenation $u_1u_2 \in \Syn(\A)$ is a reset word strictly satisfying the Černý conjecture.
    \end{itemize}
\end{proof}
We conclude this subsection by presenting a partial structural result concerning 
the $0$-minimal ideals of $\mathrm{M}(\A) / \rad(\A)$ in the radical case. 
Our motivation for focusing on these ideals stems from their connection with the 
Wedderburn--Artin decomposition of the automaton, as highlighted in 
Remark~\ref{rmk: connessione ideali e W.A.}. We need first the following lemma. Let $\xi : \A^\star \to \A^\star/\rad(\A) \cong 
\mathrm{M}(\A)/\rad(\A)$ denote the Rees morphism. 
For an ideal $I \subseteq \mathrm{M}(\A)/\rad(\A)$, 
we define 
\[
\rho(I) := \rho\!\bigl(\xi^{-1}(I)\bigr).
\] 

\begin{lemma}\label{lemminrad}
    Let $\A$ be a non-semisimple synchronizing automaton, $I_1,I_2 \subseteq \text{M}(\A)/\rad(\A)$  be two minimal non-trivial ideals such that $I_1\neq I_2$. Then we have that $\rho(I_1)$ or $\rho(I_2)$ has to be non-trivial.
\end{lemma}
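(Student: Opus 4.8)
The plan is to argue by contradiction: I assume that \emph{both} $\rho(I_1)=\Delta_{\mathcal{A}}$ and $\rho(I_2)=\Delta_{\mathcal{A}}$, and I deduce that the radical congruence $\rho=\rho(\rad(\mathcal{A}))$ is itself trivial. This is impossible by Lemma~\ref{lemrhocong}, since $\mathcal{A}$ is not semisimple, and the contradiction finishes the proof.

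The first ingredient is the standard fact that two distinct $0$-minimal ideals of a $0$-monoid annihilate each other. Working in $N:=\M(\mathcal{A})/\rad(\mathcal{A})$, which is a monoid with zero $0$ (the class of $\rad(\mathcal{A})$), the set $I_1\cap I_2$ is a two-sided ideal contained in both $I_1$ and $I_2$; by minimality it is either $\{0\}$ or equal to one of them, and the latter would force $I_1=I_2$. Hence $I_1\cap I_2=\{0\}$, and since $I_1I_2\subseteq I_1\cap I_2$ we obtain $I_1I_2=\{0\}$. The second ingredient is simply to unwind the definition: writing $\bar{x}\in N$ for the image of $x\in\M(\mathcal{A})$, we have $\rho(I_j)=\bigcap\{\ker(x)\mid x\in\M(\mathcal{A}),\ \bar{x}\in I_j\}$, so $\rho(I_j)=\Delta_{\mathcal{A}}$ means exactly that for every pair of distinct states $p,q$ there is some $x$ with $\bar{x}\in I_j$ and $p\cdot x\neq q\cdot x$.

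Now I fix $p\neq q$ in $Q$ and chain the two hypotheses. Using $\rho(I_1)=\Delta_{\mathcal{A}}$, choose $x$ with $\bar{x}\in I_1$ and $p\cdot x\neq q\cdot x$; then apply $\rho(I_2)=\Delta_{\mathcal{A}}$ to the pair $(p\cdot x,\,q\cdot x)$ to get $y$ with $\bar{y}\in I_2$ and $(p\cdot x)\cdot y\neq (q\cdot x)\cdot y$, i.e.\ $p\cdot(xy)\neq q\cdot(xy)$. On the other hand $\overline{xy}=\bar{x}\,\bar{y}\in I_1I_2=\{0\}$, so $xy\in\rad(\mathcal{A})$; hence $\rho\subseteq\ker(xy)$ while $(p,q)\notin\ker(xy)$, so $(p,q)\notin\rho$. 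Since $p\neq q$ was arbitrary, $\rho=\Delta_{\mathcal{A}}$, contradicting Lemma~\ref{lemrhocong}.

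I expect no serious obstacle here: the argument is short, and the only point requiring genuine care is the bookkeeping between $\M(\mathcal{A})$, $\mathcal{A}^*$ and $N=\M(\mathcal{A})/\rad(\mathcal{A})$ — in particular the step asserting that the composite separator $xy$ lands in $\rad(\mathcal{A})$. This is precisely where the hypothesis $I_1\neq I_2$ is used, through $I_1I_2=\{0\}$; if one had $I_1=I_2$ then $I_1^2=I_1\neq\{0\}$ for a $0$-minimal ideal of the semisimple quotient, and the conclusion would genuinely fail, so no stronger statement should be expected.
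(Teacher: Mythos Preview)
Your proof is correct. The only cosmetic omission is that ``non-trivial'' in the paper's sense means $\neq\Delta_{\mathcal{A}}$ \emph{and} $\neq\nabla_{\mathcal{A}}$; you only rule out $\Delta_{\mathcal{A}}$. But $\rho(I_j)=\nabla_{\mathcal{A}}$ would force every preimage of $I_j$ to be a reset word, hence $I_j=\{0\}$, contradicting non-triviality of $I_j$ --- and the paper's own proof of this lemma is equally silent on this point, so no harm done.

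The route, however, is genuinely different from the paper's. The paper does not invoke Lemma~\ref{lemrhocong}; instead it works directly with nilpotency: from $I_1I_2\subseteq\rad(\mathcal{A})$ it takes the least $k$ with $(I_1I_2)^k\subseteq\Syn(\mathcal{A})$, looks at the last product in the chain $(I_1I_2)^{k-1}I_1$ that still escapes $\Syn(\mathcal{A})$, and reads off an explicit pair $(p,q)$ lying in one of the $\rho(I_j)$. Your argument is more modular: you use $I_1I_2=\{0\}$ only once, to manufacture for each pair $(p,q)$ a single radical separator $xy$, and then let Lemma~\ref{lemrhocong} absorb the nilpotency bookkeeping. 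This is cleaner and makes the dependence on non-semisimplicity transparent (it enters solely through Lemma~\ref{lemrhocong}); the paper's version, on the other hand, is self-contained and slightly more constructive, actually exhibiting a witness pair in $\rho(I_1)$ or $\rho(I_2)$ rather than deriving a contradiction.
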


\begin{proof}
    Observe that $I_1I_2 = 0$ can be rephrased by abuse of notation as $I_1I_2\subseteq \rad(\A)$ as ideals of $\A^\star$. Then, by another abuse of notation we can rephrase the latter inclusion as:
    $$(I_1I_2)^k = (I_1I_2\ldots I_1)I_2 \subseteq \Syn(\A) \ \text{ for some } \ k\geq 2$$
    by considering $I_1,I_2$ as ideals of $\text{M}(\A)$. Thus by taking $k$ minimal we must have either $(I_1I_2)^{k-1}I_1 \subseteq \Syn(\A)$ or $\rho(I_2)$ is non-trivial. Indeed, if $(I_1I_2)^{k-1}I_1 \not\subseteq \Syn(\A)$, let $u\in (I_1I_2)^{k-1}I_1\setminus\Syn(\A)$, and $p,q\in Q\cdot u$ with $p\neq q$. We must have $p\cdot v = q\cdot v$ for every $v\in I_2$, and thus $(p,q)\in\rho(I_2)$ which implies that $\rho(I_2)\neq\Delta_\A$. \\
    Assume then that $(I_1I_2)^{k-1}I_1 \subseteq \Syn(\A)$: by minimality of $k$ we have $(I_1I_2)^{k-1} \not\subseteq \Syn(\A)$, and thus in this case we must have that $\rho(I_1)$ is non-trivial, and this concludes the proof.
\end{proof}

This brings us to the following proposition:

\begin{prop}\label{prop: structure radical}
    Let $\A$ be a radical DFA, $\sigma\in\text{Cong}(\A)$ its monolith. Then $\Syn(\A/\sigma)$ contains exactly one 0-minimal ideal seen as an ideal of $\text{M}(\A)/\rad(\A)$.
\end{prop}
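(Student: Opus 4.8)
The goal is to show that, inside $M(\mathcal{A})/\rad(\mathcal{A})$, the ideal $\Syn(\mathcal{A}/\sigma)$ contains exactly one $0$-minimal ideal. Existence is immediate: since $\mathcal{A}$ is radical and non-semisimple, $\mathcal{A}/\sigma$ is not synchronizing-trivial, and $\Syn(\mathcal{A}/\sigma)$ is a nonzero two-sided ideal of the $0$-monoid $M(\mathcal{A})/\rad(\mathcal{A})$; any nonzero ideal of a finite $0$-monoid contains a $0$-minimal ideal. So the content is \emph{uniqueness}. The plan is to argue by contradiction: suppose $I_1 \ne I_2$ are two distinct $0$-minimal ideals of $M(\mathcal{A})/\rad(\mathcal{A})$, both contained in $\Syn(\mathcal{A}/\sigma)$, and derive a contradiction with the hypotheses that $\sigma$ is the \emph{minimum} non-trivial congruence and that $\mathcal{A}$ is radical (so $\rad(\mathcal{A}) = I(\sigma) \cap \rad(\mathcal{A}/\sigma) \subsetneq \rad(\mathcal{A}/\sigma)$).

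First I would invoke Lemma~\ref{lemminrad}: since $I_1 \ne I_2$ are distinct $0$-minimal (hence distinct minimal non-trivial) ideals of $M(\mathcal{A})/\rad(\mathcal{A})$, at least one of $\rho(I_1)$, $\rho(I_2)$ is non-trivial; say $\rho(I_j) \ne \Delta_\mathcal{A}$. Because $\sigma$ is the minimum element of $\text{Cong}(\mathcal{A}) \setminus \{\Delta_\mathcal{A}\}$, this forces $\sigma \subseteq \rho(I_j)$, i.e.\ $\rho(I_j) \in \text{Cong}_\sigma(\mathcal{A})$ in the reverse sense --- more precisely $\sigma \subseteq \rho(I_j)$, which by the Galois connection (antimonotonicity of $I(-)$) gives $I(\rho(I_j)) \subseteq I(\sigma)$. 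Now $I_j \subseteq I(\rho(I_j))$ by the fixed-point/Galois property (Proposition~\ref{morfpos}), so $I_j \subseteq I(\sigma)$. The next step is the key tension: by Lemma~\ref{lem: annh}, $\Syn(\mathcal{A}/\sigma) \cdot I(\sigma) \subseteq \Syn(\mathcal{A})$, so $\Syn(\mathcal{A}/\sigma)\cdot I_j = 0$ already inside $M(\mathcal{A})/\rad(\mathcal{A})$ (since $\Syn(\mathcal{A}) \subseteq \rad(\mathcal{A})$). But $I_j \subseteq \Syn(\mathcal{A}/\sigma)$ as well, so $I_j^2 = 0$; since $I_j$ is $0$-minimal this is a contradiction with $I_j$ being a non-nilpotent $0$-minimal ideal --- unless $I_j$ is a nilpotent $0$-minimal ideal, which cannot happen in $M(\mathcal{A})/\rad(\mathcal{A})$ precisely because $\rad(\mathcal{A})$ is the largest nilpotent ideal, so the quotient has no nonzero nilpotent ideals at all. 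This contradiction rules out the case $\rho(I_j)$ non-trivial.

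It remains to handle the symmetric bookkeeping: Lemma~\ref{lemminrad} only guarantees \emph{one} of $\rho(I_1), \rho(I_2)$ is non-trivial, and the argument above applied to that one already yields a contradiction, so we are done --- no separate case is actually needed once we observe that the contradiction is obtained as soon as $I(\sigma)$ contains a nonzero $0$-minimal ideal of the semisimple quotient $M(\mathcal{A})/\rad(\mathcal{A})$. Thus I would structure the write-up as: (1) existence via finiteness; (2) assume two distinct $0$-minimal ideals in $\Syn(\mathcal{A}/\sigma)$; (3) apply Lemma~\ref{lemminrad} to get a non-trivial $\rho(I_j)$; (4) use minimality of $\sigma$ plus antimonotonicity of $I(-)$ to get $I_j \subseteq I(\sigma)$; (5) use Lemma~\ref{lem: annh} and $I_j \subseteq \Syn(\mathcal{A}/\sigma)$ to get $I_j^2 = 0$ in the semisimple quotient, contradicting the absence of nonzero nilpotent ideals there. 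The main obstacle I anticipate is step (5): one must be careful that the inclusion $\Syn(\mathcal{A}/\sigma)\cdot I(\sigma)\subseteq \Syn(\mathcal{A})$ from Lemma~\ref{lem: annh} descends correctly to $M(\mathcal{A})/\rad(\mathcal{A})$ and that ``$0$-minimal in a semisimple $0$-monoid $\Rightarrow$ not nilpotent'' is applied to the right object; a clean way is to note $\rad\!\big(M(\mathcal{A})/\rad(\mathcal{A})\big) = 0$ by maximality of the radical, so every nonzero ideal $J$ of this quotient satisfies $J^2 \ne 0$, and in particular $J^2\ne 0$ for $0$-minimal $J$, which is exactly what is violated.
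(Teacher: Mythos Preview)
Your argument is correct and follows essentially the same route as the paper's proof: invoke Lemma~\ref{lemminrad} to get a non-trivial $\rho(I_j)$, use minimality of $\sigma$ and the Galois connection to force $I_j \subseteq I(\sigma)$, and then derive a contradiction from $I_j \subseteq \Syn(\mathcal{A}/\sigma)\cap I(\sigma)$. The only cosmetic difference is in the last step: the paper appeals to Proposition~\ref{345} (which gives $\Syn(\mathcal{A}/\sigma)\cap I(\sigma)\subseteq \rad(\mathcal{A}/\sigma)\cap I(\sigma)=\rad(\mathcal{A})$, so $I_j=0$ immediately), whereas you use Lemma~\ref{lem: annh} to get $I_j^2=0$ and then invoke $\rad\bigl(M(\mathcal{A})/\rad(\mathcal{A})\bigr)=0$ to kill $I_j$; both are fine. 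For the existence part, your justification is a bit compressed: the clean way (as in the paper) is to note that $\Syn(\mathcal{A}/\sigma)\subseteq\rad(\mathcal{A})$ would force $\rad(\mathcal{A})=\rad(\mathcal{A}/\sigma)$ by Lemma~\ref{synquo}, contradicting radicality.
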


\begin{proof}
    First, observe that $\Syn(\A/\sigma)/\rad(\A)$ has to be non-trivial, for if otherwise we would have that $\Syn(\A/\sigma)\subseteq \rad(\A)$ as ideals in $\M(\A)$ and by Lemma \ref{synquo} this implies $\rad(\A) = \rad(\A/\sigma)$, against the hypothesis of $\A$ being radical. 
    Let $I_1,I_2 \subseteq \Syn(\A/\sigma)/\rad(\A)$ be two non-trivial minimal ideals. By means of the previous lemma we have that either $\rho(I_1)$ or $\rho(I_2)$ has to be non-trivial, without loss of generality assume $\rho(I_1)$ non-trivial. Then, by minimality we have $\sigma\subseteq \rho(I_1)$, from which we deduce 
    $$I_1 \subseteq I(\rho(I_1))/\rad(\A)\subseteq I(\sigma)/\rad(\A)$$
    by means of the usual properties of the Galois connection. Observe now that:
    $$I_1\subseteq \Syn(\A/\sigma)/\rad(A), \ I_1\subseteq I(\sigma) \Rightarrow I_1\subseteq (\Syn(\A/\sigma) \cap I(\sigma))/\rad(\A) = 0$$
    by means of Proposition \ref{345}, which is a contradiction. 
\end{proof}

\subsection{The semisimple case}

We now address the hereditariness problem for semisimple automata. We begin with the study of the relationship between congruences and ideals for semisimple but non-simple automata. In particular, we relate the minimal ideals of $\A^\star$ to the minimal non-trivial elements of $\mathrm{Cong}(\A)$ via our Galois connection.

\begin{lemma}\label{lem2congmin}
    Let $\A$ be a synchronizing DFA and $I_1,I_2\subseteq\A^\star$ be two distinct 0-minimal ideals. Then $\rho(I_1),\rho(I_2)$ are non-trivial congruences of $\A$. In particular, if $\A$ is simple then $\A^\star$ admits a unique 0-minimal ideal.
\end{lemma}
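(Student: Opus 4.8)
The statement is a direct analogue of Lemma~\ref{lemminrad}, and the plan is to run essentially the same argument but one level lower, working modulo $\Syn(\mathcal{A})$ instead of modulo $\rad(\mathcal{A})$. This time the two $0$-minimal ideals play symmetric roles, so one obtains the stronger conclusion that \emph{both} associated congruences are non-trivial. First I would record the elementary fact that in the $0$-monoid $\mathcal{A}^*$ two distinct $0$-minimal ideals $I_1\neq I_2$ satisfy $I_1\cap I_2=0$: the intersection is an ideal contained in each $I_j$, so by $0$-minimality it is either $0$ or equal to both, and the second alternative would force $I_1=I_2$. Hence $I_1I_2\subseteq I_1\cap I_2=0$, which, lifting to $\text{M}(\mathcal{A})$ via the usual identification of an ideal of $\mathcal{A}^*$ with its preimage, reads $I_1I_2\subseteq\Syn(\mathcal{A})$.

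Next, since $I_1$ is non-trivial I can pick $u\in I_1$ with $u\notin\Syn(\mathcal{A})$, so that $|Q\cdot u|\geq 2$; fix $p\neq q$ in $Q\cdot u$. For every $v\in I_2$ we have $uv\in\Syn(\mathcal{A})$, hence $(Q\cdot u)\cdot v$ is a singleton and in particular $p\cdot v=q\cdot v$. Letting $v$ range over $I_2$ gives $(p,q)\in\bigcap_{v\in I_2}\ker(v)=\rho(I_2)$, so $\rho(I_2)\neq\Delta_\mathcal{A}$; interchanging the roles of $I_1$ and $I_2$ yields $\rho(I_1)\neq\Delta_\mathcal{A}$ as well.

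For the ``in particular'' clause I would argue by contradiction: when $|Q|\geq 2$ the finite $0$-monoid $\mathcal{A}^*$ is non-trivial, hence possesses at least one $0$-minimal ideal; if it had two distinct ones $I_1\neq I_2$, the previous step would produce a non-trivial $\rho(I_1)$, which by simplicity must equal $\nabla_\mathcal{A}$, forcing $\ker(u)=\nabla_\mathcal{A}$ (i.e. $|Q\cdot u|=1$) for every $u\in I_1$, and hence $I_1\subseteq\Syn(\mathcal{A})$, contradicting $I_1\neq 0$. I do not expect a real obstacle here: the proof is short and reuses the mechanism of Lemma~\ref{lemminrad}. The only point demanding a little care is the bookkeeping between $\Sigma^*$, $\text{M}(\mathcal{A})$ and $\mathcal{A}^*$ when passing from the algebraic identity $I_1I_2=0$ in $\mathcal{A}^*$ to the inclusion $I_1I_2\subseteq\Syn(\mathcal{A})$ that feeds the combinatorial argument on the states of $\mathcal{A}$.
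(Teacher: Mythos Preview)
Your proof is correct and follows essentially the same argument as the paper's: use $I_1I_2=0$ in $\mathcal{A}^*$ to locate a non-diagonal pair in $\rho(I_2)$, then swap indices. The only small omission is that you do not explicitly verify $\rho(I_j)\neq\nabla_\mathcal{A}$ in the main claim (the paper notes this follows from $I_j\neq 0$), though you implicitly supply exactly this step in your handling of the ``in particular'' clause.
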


\begin{proof}
    Observe that, by the minimality of $I_1$ and $I_2$, we have that $I_1I_2 = I_2I_1 = 0$. Therefore, $\theta^{-1}(I_1)\theta^{-1}(I_2) = \theta^{-1}(I_2)\theta^{-1}(I_1) \subseteq \Syn(\A)$, where $\theta$ is the Rees morphism as previously defined. Let then $u\in I_1\setminus\Syn(\A)$ and $p,q\in Q\cdot u$ such that $p\neq q$. Observe that, for every $v\in I_2$ we must have $p\cdot v = q\cdot v$ and thus $(p,q)\in\rho(I_2)$ from which we deduce $\rho(I_2)\neq \Delta_{\A}$. Being $I_2 \neq 0$, we must also have that $\rho(I_2)\neq\nabla_{\A}$: this shows that $\rho(I_2)$ has to be non-trivial. Switching the indices and applying the same argument, one may show that $\rho(I_1)$ has to be non-trivial.
\end{proof}

\begin{prop}\label{congminsem}
    Let $\A$ be a semisimple DFA  admitting a monolith $\sigma$, and $I(\sigma)\neq \Syn(\A)$. Then $\Syn(\A)=\Syn(\A/\sigma)$. In particular, $\A/\sigma$ is semisimple.
\end{prop}

\begin{proof}
   To reach a contradiction assume that $\Syn(\A/\sigma)\neq\Syn(\A)$. Since $\Syn(\A/\sigma), I(\sigma)\neq \Syn(\A)$, in $\A^\star$ there are minimal non-trivial ideals $I_1, I_2$ with $I_1\subseteq \Syn(\A/\sigma), \ I_2\subseteq I(\sigma)$. With abuse of notation, we consider $I_1,I_2$ as ideals of $\text{M}(\A)$ by means of the Rees morphism $\theta:\M(\A)\rightarrow\A^\star$, and using $I_i$ also to refer to $\theta^{-1}(I_i)$. Assume that $I_1=I_2$: by Lemma~\ref{lem: annh} we would have that 
   $$I_1 \subseteq \Syn(\A/\sigma)\cap I(\sigma) \subseteq\rad(\A^\star) = \Syn(\A)$$
   which is clearly a contradiction. Assume then $I_1\neq I_2$ and observe that from Lemma~\ref{lem2congmin} together with the fact that $\sigma$ is monolith, we get $\rho(I_1) \supseteq\sigma$. Thus, by means of the Galois Connection we have  $I_1\subseteq I(\rho(I_1)) \subseteq I(\sigma).$ But then $I_1\subseteq \Syn(\A/\sigma)\cap I(\sigma)$, hence by Lemma~\ref{lem: annh} together with the semisimplicity of $\A$ we get $I_1 \subseteq  \Syn(\A)$, which is a contradiction. Hence, we may conclude that $\Syn(\A/\sigma)=\Syn(\A)$ holds.\\
   Now, the fact that $\A/\sigma$ is semisimple comes directly from Lemma \ref{synquo}, indeed we have:
   $$\rad(\A/\sigma) =  \rad(\A) = \Syn(\A) = \Syn(\A/\sigma)$$
   and this concludes the proof.
\end{proof}
In view of the previous proposition we give the following:
\begin{defi}
    A synchronizing DFA $\A$ is said to be \textit{quasi-simple} if $\A$ admits a monolith $\sigma$ and $I(\sigma) = \Syn(\A)$.
\end{defi}

The following proposition shows that the class of quasi-simple DFA is a subclass of the semisimple one:
\begin{prop}
    Let $\A$ be a quasi-simple DFA. Then $\A$ is semisimple.
\end{prop}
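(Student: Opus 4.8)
The plan is to argue by contradiction, using Lemma~\ref{lemrhocong} together with Proposition~\ref{345} (or, alternatively, the Galois connection of Proposition~\ref{morfpos}). So assume $\mathcal{A}$ is quasi-simple but \emph{not} semisimple, and let $\sigma$ be the non-trivial minimum of $\text{Cong}(\mathcal{A})\setminus\{\Delta_\mathcal{A}\}$, which exists by definition of quasi-simplicity and satisfies $I(\sigma)=\Syn(\mathcal{A})$. Set $\rho:=\rho(\rad(\mathcal{A}))$.

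The first step is to show that $\rho$ is a \emph{proper} non-trivial congruence. Since $\mathcal{A}$ is not semisimple, Lemma~\ref{lemrhocong} gives $\rho\neq\Delta_\mathcal{A}$. Moreover $\rho\neq\nabla_\mathcal{A}$: indeed, if $\rho=\nabla_\mathcal{A}$, then by the very definition of $\rho=\bigcap_{u\in\rad(\mathcal{A})}\ker(u)$ every $u\in\rad(\mathcal{A})$ would satisfy $p\cdot u=q\cdot u$ for all $p,q\in Q$, i.e.\ $\rad(\mathcal{A})\subseteq\Syn(\mathcal{A})$, whence $\rad(\mathcal{A})=\Syn(\mathcal{A})$ and $\mathcal{A}$ would be semisimple, contrary to assumption. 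Hence $\rho\in\text{Cong}(\mathcal{A})\setminus\{\Delta_\mathcal{A},\nabla_\mathcal{A}\}$, and by the minimality of $\sigma$ we obtain $\sigma\subseteq\rho$, i.e.\ $\sigma\in\text{Cong}_\rho(\mathcal{A})$.

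Now apply Proposition~\ref{345} to this $\sigma$: it yields $\rad(\mathcal{A})=\rad(\mathcal{A}/\sigma)\cap I(\sigma)$. Using quasi-simplicity, $I(\sigma)=\Syn(\mathcal{A})$, and since any reset word of $\mathcal{A}$ is a reset word of $\mathcal{A}/\sigma$ we have $\Syn(\mathcal{A})\subseteq\Syn(\mathcal{A}/\sigma)\subseteq\rad(\mathcal{A}/\sigma)$; therefore $\rad(\mathcal{A})=\rad(\mathcal{A}/\sigma)\cap\Syn(\mathcal{A})=\Syn(\mathcal{A})$, so $\mathcal{A}$ is semisimple, contradicting the hypothesis. (One can equally avoid Proposition~\ref{345} and conclude directly from Proposition~\ref{morfpos}: $\rad(\mathcal{A})\subseteq I(\rho(\rad(\mathcal{A})))=I(\rho)$, and antimonotonicity of $I(-)$ with $\sigma\subseteq\rho$ gives $I(\rho)\subseteq I(\sigma)=\Syn(\mathcal{A})$, which together with $\Syn(\mathcal{A})\subseteq\rad(\mathcal{A})$ again forces $\rad(\mathcal{A})=\Syn(\mathcal{A})$.)

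I do not expect a genuine obstacle here; the argument is short. The only points requiring care are the two "boundary" facts that make the sandwich work — namely $\rho\neq\nabla_\mathcal{A}$ and $\sigma\subseteq\rho$ — and the bookkeeping of which ambient monoid ($\Sigma^*$, $\text{M}(\mathcal{A})$, or the Rees quotient $\mathcal{A}^*$) the ideals $\Syn(\mathcal{A})$, $\rad(\mathcal{A})$, $\rad(\mathcal{A}/\sigma)$ and $I(\sigma)$ are being compared in, so that all the inclusions invoked from Proposition~\ref{345} and Proposition~\ref{morfpos} are applied consistently.
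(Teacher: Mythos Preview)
Your proof is correct and follows essentially the same approach as the paper: argue by contradiction, invoke Lemma~\ref{lemrhocong} to get $\rho$ non-trivial so that $\sigma\subseteq\rho$, and then use the Galois connection (your parenthetical alternative is exactly the paper's argument) to get $\rad(\mathcal{A})\subseteq I(\rho)\subseteq I(\sigma)=\Syn(\mathcal{A})$. Your detour through Proposition~\ref{345} is a harmless variant, and your explicit check that $\rho\neq\nabla_\mathcal{A}$ is extra care the paper folds into the statement of Lemma~\ref{lemrhocong}.
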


\begin{proof}
    Let $\sigma$ be the monolith of  $\A$ and $\rho:=\rho(\rad(\A))$. Assume $\A$ non-semisimple, by Lemma ~\ref{lemrhocong}, $\rho$ is non-trivial and thus $\sigma\subseteq \rho$. From this and the property of the Galois connection, we obtain:
    $$\rad(\A)\subseteq I(\rho)\subseteq I(\sigma) = \Syn(\A)$$
    where the last equality follows by the definition of quasi-simplicity. This shows that $\rad(\A)=\Syn(\A)$ and thus that $\A$ is semisimple.
\end{proof}

In general, we lack an efficient algorithm for producing ``short'' words $u \in I(\sigma)$. 
In the quasi-simple case, such a construction would be particularly useful in view of addressing 
the \v{C}ern\'y conjecture for this class. \\
The following example shows a quasi-simple synchronizing automaton. 
\begin{ese}
    Let $\A$ be the following DFA:
    \begin{center}\begin{tikzpicture}[shorten >=1pt,node distance=3cm,on grid,auto] 
        \node[state] (q_1)   {$q_1$}; 
        \node[state] (q_2) [above left=of q_1] {$q_2$}; 
        \node[state] (q_3) [above right=of q_1] {$q_3$}; 
        \path[->] 
            (q_1) edge [loop below] node {a,b} (q_1)
            (q_2) edge node {b} (q_3)
                  edge [swap] node {a} (q_1) 
            (q_3) edge node {} (q_2)
                  edge node {a} (q_1);
    \end{tikzpicture}\end{center}
    Observe that the only non-trivial congruence is $\sigma:=\{\{q_1\},\{q_2,q_3\}\}$ and $\A$ is semisimple with $I(\sigma) = \text{M}(\A)\pi(a)\text{M}(\A) = \Syn(\A)$.
\end{ese}
We were not able to find a more complex example of a quasi-simple automaton, and we note that the known example is not even strongly connected. This naturally raises the question of whether such automata exist at all. In general, constructing quasi-simple automata appears to be a challenging task, 
and it would be interesting to discover at least an infinite family of such automata.  This difficulty suggests that the class of quasi-simple automata might in fact be very small. 
\\
One of the main difficulties we encountered in addressing the semisimple case is that, in general, it is not known  whether the quotient of a semisimple automaton remains semisimple.  In light of our reduction result, we attempted to resolve this question in the quasi-simple case; 
however, a complete resolution eluded us. This motivates the following open problem:
\begin{opbm}\label{prob: quotient of semisimple}
Let $\A$ be a quasi-simple DFA and $\sigma\in\text{Cong}(\A)$ be its monolith. Is it true that $\A/\sigma$ is semisimple?
\end{opbm}
Returning to quasi-simple automata and their peculiar structure, 
the following remark provides further evidence of the rigidity of their behavior with respect to the Galois connection.

\begin{oss}
    Let $\A$ be a quasi-simple automaton, and let 
    $\sigma \in \text{Cong}(\A)$ denote its unique non-trivial minimal congruence. 
    Then, for every non-trivial $\tau \in \text{Cong}(\A)$, we have $I(\tau)=\Syn(\A)$. 
    Indeed, by the Galois connection we have:
    \[
        \sigma \subseteq \tau 
        \ \Rightarrow \ 
        \Syn(\A) \subseteq I(\tau) \subseteq I(\sigma) = \Syn(\A),
    \]
    which forces $I(\tau)=\Syn(\A)$. 
    Furthermore, if $I \neq \Syn(\A)$ is an ideal of $\M(\A)$, then $\rho(I)=\Delta_\A$. 
    Together, these facts show that the only fixed points of the Galois connection are 
    $\Syn(\A)$ and $\M(\A)$ in $\mathcal{I}(\A)$, 
    and $\Delta_\A$ and $\nabla_\A$ in $\text{Cong}(\A)$.
\end{oss}

This observation highlights the highly constrained nature of quasi-simple automata. Such rigidity strongly suggests a close parallel with the simple case, a connection that is also motivated by the following theorem.

\begin{teo}\label{propidminsemisimp}
    Let $\A$ be a simple or quasi-simple DFA. Then $\A^\star$ admits a unique 0-minimal ideal.
\end{teo}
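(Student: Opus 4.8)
The plan is to read off both cases from Lemma~\ref{lem2congmin} together with the Galois connection of Proposition~\ref{morfpos}, so that essentially all the work has already been done. First I would dispose of \emph{existence}: since $\mathcal{A}$ is synchronizing with at least two states, $\mathcal{A}^*=\M(\mathcal{A})/\Syn(\mathcal{A})$ is a finite $0$-monoid properly containing $\{0\}$, so the non-empty family of ideals $I\neq\{0\}$ has a minimal element by finiteness; hence $\mathcal{A}^*$ has at least one $0$-minimal ideal. It therefore remains only to prove that it cannot possess two distinct ones.

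So suppose $I_1\neq I_2$ are both $0$-minimal ideals of $\mathcal{A}^*$. By Lemma~\ref{lem2congmin}, both $\rho(I_1)$ and $\rho(I_2)$ are non-trivial congruences of $\mathcal{A}$. If $\mathcal{A}$ is simple, this is already impossible, since $\text{Cong}(\mathcal{A})=\{\Delta_\mathcal{A},\nabla_\mathcal{A}\}$ admits no non-trivial congruence — exactly the final assertion of Lemma~\ref{lem2congmin}, which I would simply cite. If $\mathcal{A}$ is quasi-simple, let $\sigma$ be the non-trivial minimum of $\text{Cong}(\mathcal{A})$; since $\rho(I_1)\neq\Delta_\mathcal{A}$, minimality of $\sigma$ forces $\sigma\subseteq\rho(I_1)$. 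Passing to $\M(\mathcal{A})$ through the Rees morphism $\theta$ — recalling the convention $\rho(I_1)=\rho(\theta^{-1}(I_1))$ introduced at the start of Section~\ref{sec: radical ideal and cong}, and that $\theta^{-1}(I_1)$ is an ideal of $\M(\mathcal{A})$ properly containing $\Syn(\mathcal{A})$ — the Galois connection gives $\theta^{-1}(I_1)\subseteq I(\rho(I_1))$, while antitonicity of $I(-)$ applied to $\sigma\subseteq\rho(I_1)$ gives $I(\rho(I_1))\subseteq I(\sigma)$. But $I(\sigma)=\Syn(\mathcal{A})$ by the very definition of quasi-simplicity, hence $\theta^{-1}(I_1)\subseteq\Syn(\mathcal{A})$, i.e.\ $\theta^{-1}(I_1)=\Syn(\mathcal{A})$ and $I_1=0$ in $\mathcal{A}^*$, contradicting $0$-minimality of $I_1$. (The identical computation could be run with $I_2$.)

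I do not expect a genuine obstacle: the statement is, in effect, a corollary of the machinery already assembled. The only points deserving care are the bookkeeping between ideals of $\M(\mathcal{A})$ and ideals of the Rees quotient $\mathcal{A}^*$ — in particular the identity $\rho(\theta^{-1}(I))=\rho(I)$ and the fact that $\Syn(\mathcal{A})\subseteq I(\sigma)$ always, so that $I(\sigma)$ does descend to an ideal of $\mathcal{A}^*$ — and the precise reading of ``non-trivial minimum'' in $\text{Cong}(\mathcal{A})$. It is worth emphasising in the write-up that the conclusion genuinely fails for arbitrary semisimple automata: the Double Černý automaton of Example~\ref{semisimpnonsimp}, with its two incomparable minimal congruences $\sigma_1,\sigma_2$ and $\sigma_1\cap\sigma_2=\Delta_\mathcal{A}$, is exactly the kind of configuration that quasi-simplicity forbids, since there $\text{Cong}(\mathcal{A})\setminus\{\Delta_\mathcal{A}\}$ has no minimum and the requirement $I(\sigma)=\Syn(\mathcal{A})$ cannot even be formulated.
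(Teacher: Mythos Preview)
Your proposal is correct and follows essentially the same route as the paper: invoke Lemma~\ref{lem2congmin} to obtain a non-trivial $\rho(I_i)$, then in the quasi-simple case use $\sigma\subseteq\rho(I_i)$ and the Galois connection to force $I_i\subseteq I(\sigma)=\Syn(\mathcal{A})$, a contradiction. Your write-up is in fact slightly more careful than the paper's about the Rees-morphism bookkeeping and explicitly addresses existence, but the underlying argument is identical.
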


\begin{proof}
    If $\A$ is simple, we are done by Lemma~\ref{lem2congmin}. Assume then $\A$ to be quasi-simple. Let $I_1,I_2\subseteq\A^\star$ be two minimal non-zero ideals and assume $I_1\neq I_2$. Again by Lemma~\ref{lem2congmin}, we have $\rho(I_2)\neq \Delta_\A$: again if $\sigma$ is the monolith we get $I_2\subseteq I(\sigma) = 0$ and thus we can conclude that $I_2 = 0$, which is a contradiction.
\end{proof}

By some computation, it is not difficult to check that, for the class of \v{C}ern\'y automata $\mathcal{C}_n$, we have $\mathcal{R}(\mathcal{C}_n)\cong\mathbb{M}_{n-1}(\mathbb{C})$.  However, a complete result for the full class of simple automata is missing. This fact together with the above theorem and Remark~\ref{rmk: connessione ideali e W.A.} leads us to state the following conjecture

\begin{conj}
    Let $\A$ be either a simple or quasi-simple automata, $\mathcal{R}(\A)$ its associated synchronized $\mathbb{C}$-algebra. Then $\mathcal{R}(\A)\cong\mathbb{M}_{m}(\mathbb{C})$ for some $m\leq n-1$.
\end{conj}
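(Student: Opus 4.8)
The bound $m\le n-1$ is the easy half. By construction $\mathcal{R}(\A)$ is a unital subalgebra of $\mathbb{M}_{n-1}(\mathbb{C})=\operatorname{End}(\mathbb{C}^{n-1})$, so if $\mathcal{R}(\A)\cong\mathbb{M}_m(\mathbb{C})$ then $\mathbb{C}^{n-1}$ is a faithful $\mathbb{M}_m(\mathbb{C})$-module, hence a direct sum of copies of the unique $m$-dimensional simple module; thus $m\mid(n-1)$ and in particular $m\le n-1$. The real content is therefore that $\mathcal{R}(\A)$ is a \emph{single} matrix block. Since $\A$ is simple or quasi-simple it is semisimple (proved earlier), so Wedderburn--Artin gives $\mathcal{R}(\A)\cong\prod_{i=1}^k\mathbb{M}_{n_i}(\mathbb{C})$, and the claim is exactly $k=1$; equivalently, writing the faithful $(n-1)$-dimensional module of $\A^*$ as $\bigoplus_i V_i^{a_i}$, it must be isotypic.

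My plan is to squeeze Theorem~\ref{propidminsemisimp} and Remark~\ref{rmk: connessione ideali e W.A.} as far as they will go. Let $\mathcal{I}$ be the unique $0$-minimal ideal of $\A^*$. It is $0$-simple, so by Rees' theorem $\mathcal{I}\cong\mathcal{M}^{0}(G;A,B;P)$, and in the semisimple regime its contracted algebra is $\mathbb{C}_0[\mathcal{I}]\cong\mathbb{M}_{|A|}(\mathbb{C}G)$. The first sub-goal is to prove that $G$ is trivial: then $\mathbb{C}_0[\mathcal{I}]\cong\mathbb{M}_{|A|}(\mathbb{C})$ is a simple algebra, so its image inside $\mathcal{R}(\A)$ — a two-sided ideal — is a simple subalgebra, i.e. exactly one Wedderburn block $\mathbb{M}_{n_{i_0}}(\mathbb{C})$, where $i_0$ is the index supplied by Remark~\ref{rmk: connessione ideali e W.A.} ($\overline{\varphi}_{i_0}(\mathcal{I})=\mathcal{I}_{i_0}\neq 0$). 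The second sub-goal is then to show that this block is \emph{all} of $\mathcal{R}(\A)$, i.e. that $\mathcal{I}$ touches every component: if $\overline{\varphi}_{i}(\mathcal{I})=0$ for some $i$, then $\ker\overline{\varphi}_{i}$ is a proper nonzero ideal of $\A^*$, and I would try to extract from the pair $\bigl(\overline{\varphi}_{i},\ker\overline{\varphi}_{i}\bigr)$, via the Galois map $\rho(-)$, a nontrivial congruence of $\A$ — contradicting simplicity — or, in the quasi-simple case, a fixed point of the Galois connection other than $\Syn(\A)$ and $\M(\A)$, again a contradiction.

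A parallel, purely module-theoretic line is worth running alongside. Assume $\A$ strongly connected (Proposition~\ref{prop: str conn} allows this reduction). One first checks that $\mathbb{C}^{Q}$, as an $\M(\A)$-module, has \emph{simple socle} equal to the line $\mathbb{C}\mathbf{1}$ of constant functions: every nonzero submodule is hit nontrivially by $\Syn(\A)$ — the reset targets exhaust $Q$ — and $\Syn(\A)$ collapses everything into $\mathbb{C}\mathbf{1}$. Hence $\mathbb{C}^{Q}$ is indecomposable of Loewy length two, with $\operatorname{rad}(\mathbb{C}^{Q})=\operatorname{soc}(\mathbb{C}^{Q})=\mathbb{C}\mathbf{1}$ and head $\mathbb{C}^{Q}/\mathbb{C}\mathbf{1}\cong\mathbb{C}^{n-1}=\bigoplus_i V_i^{a_i}$. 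Indecomposability forces no copy of any $V_i$ to split off, which translates into $a_i\le\dim_{\mathbb{C}}\operatorname{Ext}^{1}_{\M(\A)}(V_i,\mathbb{C}\mathbf{1})$; together with $\sum_i a_i\dim V_i=n-1$ this constrains the possible decompositions, and the conjecture becomes the assertion that simplicity (resp. quasi-simplicity) forces all composition factors of $\mathbb{C}^{n-1}$ to lie in one block of $\mathbb{C}_0[\A^*]$ — i.e. to share one apex (necessarily $\mathcal{I}$, since $\mathcal{I}$ acts nonzero on $\mathbb{C}^{n-1}$) and one maximal-subgroup character.

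The main obstacle — and the reason the statement is left as a conjecture — is that neither last step closes with the tools at hand: ``semisimple $+$ unique $0$-minimal ideal'' does not by itself imply a single Wedderburn block, since the maximal subgroup $G$ of $\mathcal{I}$ need not be trivial and $\mathcal{I}$ need not surject onto every component, while on the module side the submodules of $\mathbb{C}^{Q}$ that witness the extra $\operatorname{Ext}$-classes need not be \emph{partition} submodules, so the absence of nontrivial congruences does not visibly kill them. What appears to be missing is a genuinely new constraint that simplicity or quasi-simplicity places on the abstract algebra $\mathcal{R}(\A)$ beyond the combinatorics of $\text{Cong}(\A)$ together with Theorem~\ref{propidminsemisimp}; note that $\mathcal{R}(\mathcal{C}_n)\cong\mathbb{M}_{n-1}(\mathbb{C})$ is the ``generic'' case, in which even $\mathbb{C}^{n-1}$ is irreducible, and the conjecture predicts that the non-generic cases remain isotypic.
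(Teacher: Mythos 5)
You have not proved this statement, and you say so yourself; note that the paper does not prove it either — it is stated as a conjecture, supported only by the computation $\mathcal{R}(\mathcal{C}_n)\cong\mathbb{M}_{n-1}(\mathbb{C})$ for the \v{C}ern\'y family, Theorem~\ref{propidminsemisimp}, and Remark~\ref{rmk: connessione ideali e W.A.}. So the verdict is that there is a genuine gap, and your own diagnosis of where it lies is accurate. The easy half ($m\le n-1$, and indeed $m\mid n-1$ if the conclusion holds) is fine. But semisimplicity together with the uniqueness of the $0$-minimal ideal $I$ of $\A^*$ only tells you that the linear span of $I$ inside $\mathcal{R}(\A)$ is the product of those Wedderburn blocks $\mathbb{M}_{n_i}(\mathbb{C})$ with $\overline{\varphi}_i(I)\neq 0$; it does not force $k=1$. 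Your two sub-goals — triviality of the maximal subgroup $G$ of $I$, and $\overline{\varphi}_i(I)\neq 0$ for \emph{every} $i$ — are exactly the open content, and neither follows from the Galois-connection machinery of the paper: simplicity and quasi-simplicity constrain only the data visible through kernels of the action on $Q$ (congruences, fixed points of $\rho(-)$ and $I(-)$), whereas the failure of $\overline{\varphi}_i(I)\neq 0$ would be a purely linear-representation phenomenon that need not be witnessed by any partition of the state set. This is the obstruction you name at the end, and it is precisely why the statement remains a conjecture rather than a corollary of Theorem~\ref{propidminsemisimp}.

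Two technical points in the sketch deserve a flag. First, the identification $\mathbb{C}_0[I]\cong\mathbb{M}_{|A|}(\mathbb{C}G)$ is not automatic: by Munn's theorem the contracted algebra of a Rees matrix semigroup $\mathcal{M}^0(G;A,B;P)$ is the Munn algebra determined by the sandwich matrix $P$, and it is a full matrix algebra over $\mathbb{C}G$ only when $|A|=|B|$ and $P$ is invertible over $\mathbb{C}G$; the semisimplicity of $\mathcal{R}(\A)$ controls only the \emph{image} of $\mathbb{C}_0[I]$ in $\mathcal{R}(\A)$, not the abstract contracted algebra, so even granting $G$ trivial you would still have to argue about that image rather than about $\mathbb{C}_0[I]$ itself. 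Second, your module-theoretic observations (for the action on functions, with $\A$ strongly connected and synchronizing: every nonzero submodule of $\mathbb{C}^Q$ contains the constants, hence the socle is the line of constants and $\mathbb{C}^Q$ is indecomposable, of Loewy length two in the semisimple case) are correct and potentially useful, but, as you note, the resulting $\operatorname{Ext}$-inequalities constrain multiplicities without forcing all composition factors of $\mathbb{C}^{n-1}$ to share the apex $I$, which is the isotypicity the conjecture actually asserts. In short: the reduction is sound, the plan is reasonable, but the decisive step is missing both in your proposal and in the paper.
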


We move now to address the hereditariness problem in the semisimple case, and we start by considering situations where more than one non-trivial congruence arises. We begin with the following lemma:

\begin{lemma}\label{335}
    Let $\A$ be a semisimple DFA which is not simple, and let $\mathcal{C} = \{\sigma_1,\ldots,\sigma_s\}\subseteq \text{Cong}(\A)$ be the set of its non-trivial minimal congruences. Assume that $s\geq 2$ and $\Syn(\A/\sigma)\neq \Syn(\A)$ for every $\sigma\in\mathcal{C}$. Then, there exist distinct 0-minimal ideals $I_1, \ldots, I_s$ in $\A^\star$ such that for every $i\in [1,s]$ we have:
    \[
        I_i \subseteq \Syn(\A/\sigma_i)\cap \left(\bigcap_{i\neq j} I (\sigma_j)\right)
    \]
    and $\sigma_j\subseteq \rho(I_i)$ for all $j\neq i$.
\end{lemma}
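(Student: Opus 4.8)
The plan is to produce the ideals $I_i$ directly from the standing hypotheses and then transport the information through the Galois connection of Proposition~\ref{morfpos}, using crucially that an atom of $\text{Cong}(\mathcal{A})$ coincides with the congruence generated by any single non-diagonal pair it contains (the remark following Lemma~\ref{lemma122}). Since for each $i$ we have $\Syn(\mathcal{A}/\sigma_i)\neq\Syn(\mathcal{A})$, the image of $\Syn(\mathcal{A}/\sigma_i)$ in the finite $0$-monoid $\mathcal{A}^*$ is a non-zero ideal, so I would fix a $0$-minimal ideal $I_i\subseteq\mathcal{A}^*$ with $I_i\subseteq\Syn(\mathcal{A}/\sigma_i)$. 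These $I_i$ are automatically pairwise distinct: if $I_i=I_j$ with $i\neq j$, then $I_i\subseteq\Syn(\mathcal{A}/\sigma_i)\cap\Syn(\mathcal{A}/\sigma_j)=\Syn\bigl(\mathcal{A}/(\sigma_i\cap\sigma_j)\bigr)=\Syn(\mathcal{A})$ by Proposition~\ref{intsin}, since distinct atoms satisfy $\sigma_i\cap\sigma_j=\Delta_{\mathcal{A}}$; this forces $I_i=0$ in $\mathcal{A}^*$, a contradiction.

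Next I would fix $i$ and an index $j\neq i$ and show $\sigma_j\subseteq\rho(I_i)$, since then $I_i\subseteq I(\rho(I_i))\subseteq I(\sigma_j)$ follows from the Galois connection. As $I_i\neq I_j$ are $0$-minimal ideals of $\mathcal{A}^*$ their intersection is $0$, hence $I_iI_j=I_jI_i=0$; lifting along the Rees morphism $\theta:\M(\mathcal{A})\rightarrow\mathcal{A}^*$ this reads $\theta^{-1}(I_i)\,\theta^{-1}(I_j)\subseteq\Syn(\mathcal{A})$ and symmetrically. Pick $u\in\theta^{-1}(I_j)\setminus\Syn(\mathcal{A})$, which exists because $I_j\neq 0$. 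Then $|Q\cdot u|\geq 2$, and since $\theta^{-1}(I_j)\subseteq\Syn(\mathcal{A}/\sigma_j)$ we have $Q\cdot u\subseteq[\bar p]_{\sigma_j}$ for some state $\bar p$; choose distinct $p,q\in Q\cdot u$, so $(p,q)\in\sigma_j$ with $p\neq q$. For every $v\in\theta^{-1}(I_i)$ the word $uv$ lies in $\Syn(\mathcal{A})$, so $|Q\cdot uv|=1$ and hence $p\cdot v=q\cdot v$; therefore $(p,q)\in\bigcap_{v\in I_i}\ker(v)=\rho(I_i)$. Because $\sigma_j$ is an atom containing the non-diagonal pair $(p,q)$ it equals $\langle\{p,q\}\rangle$, and since $\rho(I_i)$ is a congruence containing $(p,q)$ we obtain $\sigma_j\subseteq\rho(I_i)$. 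Letting $j$ range over all indices $\neq i$ yields $I_i\subseteq\bigcap_{j\neq i}I(\sigma_j)$, which together with $I_i\subseteq\Syn(\mathcal{A}/\sigma_i)$ and the containments $\sigma_j\subseteq\rho(I_i)$ is exactly the assertion.

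The step I expect to be the real content—and would write most carefully—is passing from the generic conclusion of Lemma~\ref{lem2congmin}, that $\rho(I_i)$ is merely \emph{some} non-trivial congruence, to the sharper statement that $\rho(I_i)$ contains the \emph{prescribed} atom $\sigma_j$. This is precisely where the hypothesis $I_j\subseteq\Syn(\mathcal{A}/\sigma_j)$ enters: it confines the witnessing image $Q\cdot u$ to a single $\sigma_j$-class, so the non-diagonal pair it supplies lies inside $\sigma_j$, and only then does the ``atoms are generated by one pair'' remark after Lemma~\ref{lemma122} upgrade a single pair to the full containment $\sigma_j\subseteq\rho(I_i)$. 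The remainder—keeping track of whether one works in $\M(\mathcal{A})$ or in the Rees quotient $\mathcal{A}^*$, and noting that $\ker(v)=\nabla_{\mathcal{A}}$ whenever $v\in\Syn(\mathcal{A})$, so that $\rho$ is insensitive to the choice of lift—is routine bookkeeping.
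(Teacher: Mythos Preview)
Your proof is correct and follows essentially the same route as the paper's: choose a $0$-minimal ideal $I_i$ inside each $\Syn(\mathcal{A}/\sigma_i)$, use $I_jI_i=0$ to produce a non-diagonal pair lying in $\sigma_j\cap\rho(I_i)$, and invoke the atomicity of $\sigma_j$ to conclude $\sigma_j\subseteq\rho(I_i)$ and hence $I_i\subseteq I(\sigma_j)$ via the Galois connection. Your explicit verification that the $I_i$ are pairwise distinct (via Proposition~\ref{intsin} and $\sigma_i\cap\sigma_j=\Delta_{\mathcal{A}}$ for distinct atoms) is a detail the paper's proof uses but does not spell out.
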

\begin{proof}
Let $\sigma_i, \sigma_j \in \mathcal{C}$ with $i \neq j$. Since $\Syn(\A/\sigma) \neq \Syn(\A)$ for every $\sigma \in \mathcal{C}$, for each $i \in [1,s]$ there exists a non-trivial minimal ideal $I_i$ such that 
$$I_i \subseteq \Syn(\A/\sigma_i).$$ 
Observe that for every $u \in I_i \setminus \{0\}$ we have $Q \cdot u \subseteq [p]_{\sigma_i}$ for some $p \in Q$. Moreover, since $I_i I_j = 0$ and $u$ is non-zero, there must exist distinct elements $p, q \in Q \cdot u$ such that 
$$p \cdot v = q \cdot v \quad \text{for all } v \in I_j.$$ 
Hence, $(p,q) \in \rho(I_j) \cap \sigma_i$, so in particular $\rho(I_j) \cap \sigma_i \neq \Delta_\A$. By the minimality of $\sigma_i$, it follows that $\sigma_i \subseteq \rho(I_j)$.  
Now, by Proposition~\ref{morfpos} and the inclusion $\sigma_i \subseteq \rho(I_j)$, we deduce that 
$$I_j \subseteq I(\rho(I_j)) \subseteq I(\sigma_i) \quad \text{for all } i \neq j.$$  
Therefore, for every $i \in [1,s]$ we obtain the claim:  
$$I_i \subseteq \Syn(\A/\sigma_i) \cap \Big( \bigcap_{j \neq i} I(\sigma_j) \Big) 
\quad \text{and} \quad \sigma_j \subseteq \rho(I_i) \text{ for all } j \neq i.$$
\end{proof}

We henceforth set $\tau_i := \rho(I_i)$ where $I_i$ are the non-zero minimal ideals of the previous lemma.
\begin{lemma}\label{336}
With the above conditions: 
    $$\tau:=\bigcap_{i=1}^s \tau_i = \Delta_\A$$
\end{lemma}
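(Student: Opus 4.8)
The plan is to argue by contradiction. Suppose $\tau = \bigcap_{i=1}^s \tau_i$ is a non-trivial congruence. Since $\text{Cong}(\mathcal{A})$ is a finite lattice with least element $\Delta_{\mathcal{A}}$, there is an atom lying below $\tau$; because $\mathcal{C} = \{\sigma_1,\ldots,\sigma_s\}$ is exactly the set of atoms of $\text{Cong}(\mathcal{A})$, this atom is some $\sigma_k$, so that $\sigma_k \subseteq \tau \subseteq \tau_k = \rho(I_k)$.

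Next I would feed the inclusion $\sigma_k \subseteq \rho(I_k)$ into the Galois connection of Proposition~\ref{morfpos}: since $I(-)$ reverses order and $I_k \subseteq I(\rho(I_k))$ by the unit of the connection, this yields $I_k \subseteq I(\rho(I_k)) \subseteq I(\sigma_k)$. Combining this with the containment $I_k \subseteq \Syn(\mathcal{A}/\sigma_k)$ already furnished by Lemma~\ref{335}, I obtain
\[
I_k \subseteq \Syn(\mathcal{A}/\sigma_k) \cap I(\sigma_k),
\]
reading $I_k$ as an ideal of $\text{M}(\mathcal{A})$ through the Rees morphism.

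The final step is to invoke Lemma~\ref{lem: annh}: the projection of $\Syn(\mathcal{A}/\sigma_k) \cap I(\sigma_k)$ into $\mathcal{A}^*$ is a nilpotent ideal, hence lies in $\rad(\mathcal{A}^*)$, which is $\{0\}$ because $\mathcal{A}$ is semisimple. Therefore $I_k = 0$ in $\mathcal{A}^*$, contradicting the fact that $I_k$ is a non-trivial minimal ideal. This contradiction forces $\tau = \Delta_{\mathcal{A}}$.

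I do not expect this lemma to hide a genuine obstacle: the substantive work — producing the ideals $I_i$ together with the containments $\sigma_j \subseteq \rho(I_i)$ for $j \neq i$ — was carried out in Lemma~\ref{335}, and everything here is an immediate consequence of it, of the standard Galois-connection identities, and of Lemma~\ref{lem: annh}. The only points that need care are the bookkeeping between ideals of $\mathcal{A}^*$ and of $\text{M}(\mathcal{A})$ (and the consistent use of the Rees morphism), and the small observation that an atom of $\text{Cong}(\mathcal{A})$ below $\tau$ is itself one of the minimal congruences in $\mathcal{C}$.
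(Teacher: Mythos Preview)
Your proof is correct and follows essentially the same line as the paper's: assume $\tau \neq \Delta_\mathcal{A}$, pick an atom $\sigma_k \subseteq \tau \subseteq \tau_k = \rho(I_k)$, use the Galois connection to get $I_k \subseteq I(\sigma_k)$, and combine with $I_k \subseteq \Syn(\mathcal{A}/\sigma_k)$ and Lemma~\ref{lem: annh} plus semisimplicity to force $I_k = 0$. The paper's version is just a terser rendering of the same argument.
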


\begin{proof}
Assume that $\tau \neq \Delta_{\A}$. Then, there must exist some $i \in \{1,\ldots,s\}$ such that $\sigma_i \subseteq \tau$.  
Applying once more the Galois connection, we obtain  
\[
\begin{split}
\sigma_i \subseteq \tau \subseteq \tau_i = \rho(I_i) 
& \;\Rightarrow\; I_i \subseteq I(\rho(I_i)) \subseteq I(\sigma_i) \\[4pt]
& \;\Rightarrow\; I_i \subseteq \Syn(\A/\sigma_i) \cap I(\sigma_i) = 0,
\end{split}
\]
where the last equality follows from Lemma~\ref{lem: annh}.  
This yields a contradiction.
\end{proof}
The following lemma is a step toward a positive solution to Problem~\ref{prob: quotient of semisimple}.
\begin{lemma}\label{337}
With the above conditions, $\A/\tau_i$ is semisimple for every $i\in\{1,\ldots,s\}$.
\end{lemma}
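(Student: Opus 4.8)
The plan is to show that $\Syn(\mathcal{A}/\tau_i) = \Syn(\mathcal{A})$ for each $i$, which—since $\mathcal{A}$ is semisimple—forces $\rad(\mathcal{A}/\tau_i) = \rad(\mathcal{A}) = \Syn(\mathcal{A}) = \Syn(\mathcal{A}/\tau_i)$ by Lemma~\ref{synquo}, hence semisimplicity. Fix $i \in \{1,\ldots,s\}$. The key leverage is the family of ideals $I_j$ ($j \neq i$) produced by Lemma~\ref{335}, together with the fact from Lemma~\ref{336} that $\bigcap_{j=1}^s \tau_j = \Delta_\mathcal{A}$, so in particular $\bigcap_{j \neq i} \tau_j =: \tau_i'$ is a congruence that meets $\tau_i$ trivially: $\tau_i \cap \tau_i' = \Delta_\mathcal{A}$.

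First I would observe that, by Proposition~\ref{intsin} applied to the family $\{\tau_j : j \neq i\}$, we have $\Syn(\mathcal{A}/\tau_i') = \bigcap_{j \neq i}\Syn(\mathcal{A}/\tau_j)$, and similarly $\rad(\mathcal{A}/\tau_i') = \bigcap_{j \neq i}\rad(\mathcal{A}/\tau_j)$. Next, for each $j \neq i$, note that $I_j \subseteq I(\sigma_i)$ fails to help directly; instead I would use the inclusion from Lemma~\ref{335} in the form $I_j \subseteq \bigcap_{k \neq j} I(\sigma_k)$, and the fact that $\sigma_j \subseteq \rho(I_j) = \tau_j$. The crucial point is to argue that each quotient $\mathcal{A}/\tau_j$ is semisimple — but this is exactly what we are trying to prove, so we cannot assume it for $j$ and deduce it for $i$. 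Instead, the argument must be uniform across all $i$ simultaneously: using $\bigcap_j \tau_j = \Delta_\mathcal{A}$ and Proposition~\ref{intsems}, if we knew each $\mathcal{A}/\tau_j$ semisimple we would recover semisimplicity of $\mathcal{A}$, which is given — so the logic runs the other way. The honest route is: show directly that $\rho(I_j) = \tau_j$ is a \emph{fixed point} of the Galois connection (since $I_j$ is a minimal ideal, hence of the form $I(\rho(I_j))$ up to the radical by the fixed-point property in Proposition after Proposition~\ref{morfpos}), and then that $I(\tau_j) \supseteq I_j$ gives $\Syn(\mathcal{A}/\tau_j) \cap I(\tau_j) = 0$ only when combined with $I_j \subseteq \Syn(\mathcal{A}/\sigma_j)$; pushing this through, any minimal ideal inside $\Syn(\mathcal{A}/\tau_j) \setminus \Syn(\mathcal{A})$ would have to coincide with one of the $I_k$ or generate a new non-trivial congruence below $\tau_j$, both of which are excluded by minimality of the $\sigma_k$ and by $\tau_j \cap \tau_j' = \Delta_\mathcal{A}$.

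Concretely, the steps in order: (1) suppose for contradiction $\Syn(\mathcal{A}/\tau_i) \neq \Syn(\mathcal{A})$, so there is a $0$-minimal ideal $J \subseteq \Syn(\mathcal{A}/\tau_i)$ in $\mathcal{A}^*$; (2) if $J = I_i$, then $J \subseteq \Syn(\mathcal{A}/\sigma_i) \cap I(\sigma_i) = 0$ by Lemma~\ref{335} and Lemma~\ref{lem: annh}, contradiction, so $J \neq I_i$; (3) apply Lemma~\ref{lem2congmin} to the distinct minimal ideals $J$ and $I_i$ to get $\rho(I_i) = \tau_i$ non-trivial — already known — and, more usefully, $\rho(J)$ non-trivial, so some $\sigma_\ell \subseteq \rho(J)$; (4) then $J \subseteq I(\rho(J)) \subseteq I(\sigma_\ell)$ by the Galois connection, so $J \subseteq \Syn(\mathcal{A}/\tau_i) \cap I(\sigma_\ell)$; (5) now split on whether $\ell = i$ or $\ell \neq i$: if $\ell \neq i$, combine with the Lemma~\ref{335} inclusion $I_i \subseteq I(\sigma_\ell)$ and a comparison of $J$ with $I_\ell$ (using $\sigma_i \subseteq \rho(J)$ forces $J \subseteq I(\sigma_i)$, hence $J \subseteq \Syn(\mathcal{A}/\tau_i) \cap I(\sigma_i)$, and since $\tau_i \supseteq \sigma_i$ one gets $J \subseteq I(\sigma_i)$, and then $J \subseteq \Syn(\mathcal{A}/\sigma_i) \cap I(\sigma_i) = 0$ provided $\Syn(\mathcal{A}/\tau_i) \subseteq \Syn(\mathcal{A}/\sigma_i)$, which holds since $\sigma_i \subseteq \tau_i$); if $\ell = i$, then directly $J \subseteq I(\sigma_i)$ and the same computation gives $J = 0$. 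Either way we reach the contradiction $J = 0$, so $\Syn(\mathcal{A}/\tau_i) = \Syn(\mathcal{A})$ and $\mathcal{A}/\tau_i$ is semisimple.

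The main obstacle I anticipate is step (5): making sure that $\sigma_i \subseteq \rho(J)$ genuinely holds. This requires that $J$, being a non-trivial ideal, together with $I_i$ being minimal, forces $\rho(I_i) \cap (\text{something}) \neq \Delta_\mathcal{A}$ — the argument of Lemma~\ref{lem2congmin} gives $\rho(J) \neq \Delta_\mathcal{A}$ but not immediately that $\sigma_i$ itself sits below $\rho(J)$; one needs the minimality of $\sigma_i$ and the fact that $J \subseteq \Syn(\mathcal{A}/\tau_i)$ with $\tau_i \supseteq \sigma_i$, so that the pair of states witnessing non-triviality of $\rho(J)$ can be taken inside a single $\tau_i$-class and then, via minimality, inside a $\sigma$-relation for some minimal $\sigma$ below $\tau_i$. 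Pinning down that this $\sigma$ is exactly $\sigma_i$ (or can be taken to be, after relabeling) is the delicate combinatorial point; everything else is a routine chase through the Galois connection and Lemma~\ref{lem: annh}.
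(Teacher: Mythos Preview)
Your overall strategy is to prove the stronger assertion $\Syn(\mathcal{A}/\tau_i)=\Syn(\mathcal{A})$ and deduce semisimplicity from it. Unfortunately, under the standing hypotheses this stronger assertion is \emph{false}, so the plan cannot work. Indeed, by Lemma~\ref{335} we have $\sigma_j\subseteq\tau_i$ for every $j\neq i$, hence $\Syn(\mathcal{A}/\sigma_j)\subseteq\Syn(\mathcal{A}/\tau_i)$; but the hypothesis of that lemma is precisely that $\Syn(\mathcal{A}/\sigma_j)\neq\Syn(\mathcal{A})$ for all $j$, so $\Syn(\mathcal{A}/\tau_i)\supsetneq\Syn(\mathcal{A})$ always holds when $s\ge 2$. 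Thus the minimal ideal $J\subseteq\Syn(\mathcal{A}/\tau_i)$ you posit in step~(1) genuinely exists (e.g.\ one can take $J=I_j$ for any $j\neq i$), and no contradiction can be derived from its mere existence.

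There are also two local errors worth flagging. First, you assert $\sigma_i\subseteq\tau_i$; Lemma~\ref{335} gives $\sigma_j\subseteq\tau_i$ only for $j\neq i$, and in fact the paper's proof hinges on showing $\sigma_i\cap\tau_i=\Delta_\mathcal{A}$. Second, the implication ``$\sigma_i\subseteq\tau_i$ gives $\Syn(\mathcal{A}/\tau_i)\subseteq\Syn(\mathcal{A}/\sigma_i)$'' has the wrong direction: smaller congruences yield \emph{smaller} synchronizing ideals, since a word collapsing $Q$ into a single $\sigma_i$-class certainly collapses it into a single $\tau_i$-class. The paper's argument avoids all of this by not aiming at $\Syn(\mathcal{A}/\tau_i)=\Syn(\mathcal{A})$: it assumes $\mathcal{A}/\tau_i$ is not semisimple and splits on whether $\rad(\mathcal{A}/\tau_i)\cdot I_i$ is zero; in one branch $I_i$ is pushed into $\Syn(\mathcal{A}/\tau_i)$ and then, using $\sigma_i\cap\tau_i=\Delta_\mathcal{A}$ together with Proposition~\ref{intsin}, into $\Syn(\mathcal{A})$; in the other branch a pair in $\rho(I_i)\setminus\tau_i$ is produced, contradicting $\tau_i=\rho(I_i)$. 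The key leverage is the identity $\tau_i=\rho(I_i)$ for the \emph{specific} ideal $I_i$, not an arbitrary minimal $J$.
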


\begin{proof}
    Assume that $\A/\tau_i$ is not semisimple, and let $I_i$ be the minimal non-zero ideal as in Lemma~\ref{335}. 
    We consider two mutually exclusive cases:
    \begin{itemize}
        \item Assume that $\rad(\A/\tau_i) \cdot I_i \neq 0$.  
By the minimality of $I_i$, it follows that $I_i \subseteq \rad(\A/\tau_i)$.  
Since $\A$ is semisimple and $I_i$ is $0$-minimal, we deduce that 
\[
I_i = I_i^{m} \subseteq \rad(\A/\tau_i)^{m} \subseteq \Syn(\A/\tau_i)
\]
for every integer $m$ greater than or equal to the nilpotency order of $\A/\tau_i$. Suppose that $\Delta_\A \neq \sigma_i\cap\tau_i$, by the minimality of $\sigma_i$ we deduce $\sigma_i\subseteq \tau_i$ which together with Proposition~\ref{morfpos} we conclude that 
    \[I(\sigma_i) \supseteq I(\tau_i) = I(\rho(I_i)) \supseteq I_i\]
  Hence, $I_i\subseteq \Syn(\A/\sigma_i)\cap I(\sigma_i) = 0$ by Lemma~\ref{lem: annh} and the fact that $\A$ is semisimple. This, however, contradicts the fact that $I_i$ is non-trivial. Therefore, we deduce $\tau_i\cap\sigma_i = \Delta_\A$. Now, using the previous inclusion $I_i\subseteq \Syn(\A/\tau_i)$, Lemma~\ref{335} and Proposition~\ref{intsin}, we have:
$$I_i\subseteq \Syn(\A/\sigma_i)\cap\Syn(\A/\tau_i) = \Syn(\A/\tau_i\cap\sigma_i) = \Syn(\A)$$
where the last equality follows from the condition $\tau_i\cap\sigma_i = \Delta_\A$. The previous inclusion, however, contradicts the fact that $I_i$ is non-trivial.
    
    \item \ Thus, we may assume $\rad(\A/\tau_i) \cdot I_i = 0$. In this case, consider a word $u\in\rad(\A/\tau_i)\setminus\Syn(\A/\tau_i)$ ($\A/{\tau_i}$ is not semisimple!). Then, we have $u\cdot I_i = 0$, and $Q\cdot u \not\subseteq [p]_{\tau_i}$ for every $p\in Q$, since $u\notin \Syn(\A/\tau_i)$. Therefore, there are distinct states $p,q\in Q$ belonging to different $\tau_i$-classes $ [p]_{\tau_i}\neq [q]_{\tau_i}$ with the property that $p\cdot v=q\cdot v$ for all $v\in I_i$, hence $(p,q)\in \rho(I_i)\setminus\tau_i$, that is $ \tau_i \neq \rho(I_i)$, a contradiction.
    \end{itemize}
The previous cases have all led to a contradiction, which arises from our initial assumption that $\A/\tau_i$ is not semisimple. Therefore, this assumption must be false, which proves that $\A/\tau_i$ is semisimple and completes the proof of the lemma.
\end{proof}
We are now in position to prove the following  main result.
\begin{teo}\label{simsemquas}
    Assume that the Černý conjecture is solved for simple and strongly connected quasi-simple DFA. Then it holds in general for strongly connected semisimple DFA.
\end{teo}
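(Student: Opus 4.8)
The plan is to argue by induction on $n=|Q|$, working entirely inside the class of strongly connected semisimple DFAs (so no appeal to Proposition~\ref{prop: str conn} is needed here). The base case $n=2$ is immediate, since such an automaton is simple and hence covered by the hypothesis. For the inductive step let $\mathcal{A}$ be strongly connected and semisimple with $n>2$; if $\mathcal{A}$ is simple or quasi-simple we are again done by hypothesis, so assume it is neither. The negation of quasi-simplicity splits into the case where $\text{Cong}(\mathcal{A})\setminus\{\Delta_\mathcal{A}\}$ has a minimum and the case where it has at least two minimal non-trivial congruences. Throughout, the guiding principle is to reduce each branch either to the hypothesis or to a strictly smaller strongly connected semisimple automaton (recalling that any quotient of a strongly connected automaton is strongly connected), while keeping the reset words produced shorter than $(n-1)^2$; the combinatorial input that makes the length bounds work is that a congruence all of whose classes have at least $3$ elements has at most $n/3$ classes.

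First I would dispose of the case where $\text{Cong}(\mathcal{A})\setminus\{\Delta_\mathcal{A}\}$ has a minimum $\sigma$. Since $\mathcal{A}$ is not quasi-simple we have $I(\sigma)\neq\Syn(\mathcal{A})$, so Proposition~\ref{congminsem} yields $\Syn(\mathcal{A})=\Syn(\mathcal{A}/\sigma)$ with $\mathcal{A}/\sigma$ semisimple; applying the inductive hypothesis to $\mathcal{A}/\sigma$ gives a \v{C}ern\'y reset word of length at most $(n-2)^2<(n-1)^2$, which is then also a reset word for $\mathcal{A}$. Otherwise $\text{Cong}(\mathcal{A})$ has minimal non-trivial congruences $\sigma_1,\dots,\sigma_s$ with $s\geq 2$. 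If $\Syn(\mathcal{A}/\sigma_i)=\Syn(\mathcal{A})$ for some $i$, then, using $\rad(\mathcal{A})=\Syn(\mathcal{A})$, Lemma~\ref{synquo} forces $\rad(\mathcal{A}/\sigma_i)=\Syn(\mathcal{A}/\sigma_i)$, i.e.\ $\mathcal{A}/\sigma_i$ is semisimple with the same synchronizing ideal, and induction again concludes. Hence we may assume $\Syn(\mathcal{A}/\sigma_i)\neq\Syn(\mathcal{A})$ for every $i$, which is precisely the standing hypothesis of Lemmas~\ref{335}, \ref{336} and~\ref{337}.

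In this remaining case I would invoke those lemmas to obtain non-trivial congruences $\tau_i:=\rho(I_i)$ with $\bigcap_{i=1}^s\tau_i=\Delta_\mathcal{A}$ and each $\mathcal{A}/\tau_i$ semisimple. Put $\alpha:=\tau_1$ and $\beta:=\bigcap_{i=2}^s\tau_i$; then $\mathcal{A}/\beta$ is semisimple by Proposition~\ref{intsems}, both $\alpha$ and $\beta$ are non-trivial (each $\tau_i$ is, and $\beta\supseteq\sigma_1$), and $\alpha\wedge\beta=\Delta_\mathcal{A}$. If $\alpha$ or $\beta$ possesses a $1$- or $2$-class, I apply the inductive hypothesis to the corresponding (strictly smaller, strongly connected, semisimple) quotient and then Proposition~\ref{prop12} to lift a \v{C}ern\'y word up to $\mathcal{A}$. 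Otherwise every class of $\alpha$ and of $\beta$ has at least $3$ elements, so $a:=|Q/\alpha|\leq n/3$ and $b:=|Q/\beta|\leq n/3$; by induction pick \v{C}ern\'y reset words $w_\alpha\in\Syn(\mathcal{A}/\alpha)$, $w_\beta\in\Syn(\mathcal{A}/\beta)$ with $|w_\alpha|\leq(a-1)^2$ and $|w_\beta|\leq(b-1)^2$. The key point is that $Q\cdot w_\alpha w_\beta$ lies in a single $\alpha$-class (since $\alpha$ is a congruence and $Q\cdot w_\alpha$ already did) and in a single $\beta$-class, hence is contained in $[t]_\alpha\cap[t]_\beta=[t]_{\alpha\wedge\beta}=\{t\}$ for any of its elements $t$, by Lemma~\ref{intcong}. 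Thus $w_\alpha w_\beta\in\Syn(\mathcal{A})$ with $|w_\alpha w_\beta|\leq 2(n/3-1)^2<(n-1)^2$, closing the induction.

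Since the substantial algebraic content — producing the $0$-minimal ideals $I_i$, the congruences $\tau_i$ with trivial total intersection, and the semisimplicity of the quotients $\mathcal{A}/\tau_i$ — has already been isolated in Lemmas~\ref{335}--\ref{337}, the expected obstacle in assembling this proof is organizational rather than conceptual: one must verify in each branch that the auxiliary congruence used is genuinely non-trivial (so that the quotient is strictly smaller), and that after removing, via Proposition~\ref{prop12}, all quotients having small classes, the two congruences $\alpha,\beta$ that remain both have classes of size at least $3$, which is exactly what keeps the concatenated word within the \v{C}ern\'y bound.
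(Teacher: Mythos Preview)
Your proposal is correct and follows essentially the same route as the paper: the same induction, the same use of Proposition~\ref{congminsem} when $\text{Cong}(\mathcal{A})\setminus\{\Delta_\mathcal{A}\}$ has a minimum, and the same splitting of $\{\tau_1,\dots,\tau_s\}$ into two blocks whose intersections give congruences with trivial meet (the paper takes $\tau=\bigcap_{i<k}\tau_i$, $\rho=\tau_k$ instead of your $\alpha=\tau_1$, $\beta=\bigcap_{i>1}\tau_i$). The only cosmetic difference is that the paper treats just $1$-classes by hand and then uses the cruder bound $|Q/\tau|,|Q/\rho|\le n/2$, whereas you invoke Proposition~\ref{prop12} to dispose of $1$- and $2$-classes and obtain $n/3$; both estimates suffice for the final inequality.
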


\begin{proof}
    Let $\A$ be a semisimple DFA. Assume that $\A$ is not simple or quasi-simple. We prove the statement by induction on the number of states $n:=|Q|$. 
    \begin{itemize}
        \item If $\A$ admits a monolith $\sigma$, then by Proposition~\ref{congminsem} we have that $\A/\sigma$ is semisimple and $\Syn(\A)=\Syn(\A/\sigma)$. Then by induction hypothesis we have that $\exists u \in\Syn(\A/\sigma) = \Syn(\A)$ such that $|u|\leq (n-2)^2 < (n-1)^2$.
        \item Assume that $\A$ admits $k$ minimal non-trivial atoms, namely $\mathcal{C}=\{\sigma_1,\ldots,\sigma_k\}$ with $k>1$. If $\Syn(\A/\sigma_i) = \Syn(\A)$ for some $i \in \{1,\ldots,k\}$, then by Lemma~\ref{synquo} together with the semisimplicity of $\A$ we have:
        $$\rad(A/\sigma_i) = \rad(\A) = \Syn(\A) = \Syn(\A/\sigma_i)$$
        and thus the quotient $\A/\sigma_i$ is semisimple. By the induction hypothesis, there exists 
$u \in \Syn(\A/\sigma_i) = \Syn(\A)$ such that 
$|u| \leq (n-2)^2$, and we are done.  
Hence, we may assume that $\Syn(\A/\sigma_i) \neq \Syn(\A)$ for every $i$. Consider the congruences
\[
\tau := \bigcap_{i=1}^{k-1} \tau_i, \qquad \rho := \tau_k,
\]
where $\tau_i = \rho(I_i)$ as defined in Lemma~\ref{335}.  
By that lemma, we have $\tau_i \supseteq \sigma_k$ for every $i \in \{1,\ldots,k-1\}$, and thus $\tau \neq \Delta_\A$.  
Furthermore, Proposition~\ref{intsems} combined with Lemma~\ref{337} implies that $\A/\tau$ is semisimple; the same lemma also guarantees that $\A/\rho = \A/\tau_k$ is semisimple.  
By the induction hypothesis, let $u \in \Syn(\A/\rho)$ and $v \in \Syn(\A/\tau)$ be a Černý reset words in their respective quotients.  
We have $Q \cdot v \subseteq [p]_\tau$ for some $p \in Q$.  
Assume that $\tau$ admits a $1$-class, i.e., $[q]_\tau = \{q\}$ for some $q \in Q$.  
By the strong connectedness of $\A/\tau$ (which comes from the same property for $\A$), there exists $z \in \Sigma^*$ with $|z| \leq n-2$ such that 
$[p]_\tau \cdot z = [q]_\tau = \{q\}$.  
Hence $vz \in \Syn(\A)$ with 
\[
|vz| \leq (n-2)^2 + (n-2) < (n-1)^2.
\]
A similar argument excludes the case in which $\rho$ admits a $1$-class. Therefore, we may assume that every $\tau$-class and every $\rho$-class contains at least two elements.  
This yields $|Q/\tau|, |Q/\rho| \leq n/2$, and hence 
$|u|, |v| \leq (n/2 - 1)^2$, which implies 
\[
|uv| \leq 2 \bigl(n/2 - 1\bigr)^2 < (n-1)^2.
\]  

Finally, by Lemma~\ref{336} we have $\tau \cap \rho = \Delta_\A$, and hence by Proposition~\ref{intsin} we obtain 
\[
uv \in \Syn(\A/\tau) \cap \Syn(\A/\rho) = \Syn(\A),
\]
which completes the inductive step and thus the proof.
    \end{itemize}
\end{proof}

We can now conclude by proving our main result:

\begin{teo}\label{corfinale}
    If the Černý conjecture holds for simple, strongly connected quasi-simple and strongly connected radical automata, then it holds in general. Furthermore, any extremal automaton lies in one of the above mentioned classes.
\end{teo}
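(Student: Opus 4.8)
The plan is to obtain the first assertion by composing the three reduction theorems already established, and to derive the clause on extremal automata from the conditional validity thereby produced.

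First I would settle the reduction. Assume the Černý conjecture holds for every simple automaton, every strongly connected quasi-simple automaton, and every strongly connected radical automaton. Since every quotient of a strongly connected automaton is again strongly connected, the hypotheses of Theorem~\ref{simsemquas} are met, so the conjecture holds for all strongly connected semisimple automata. Combining this with the hypothesis for strongly connected radical automata and invoking Theorem~\ref{teosemrad}, we obtain the conjecture for all strongly connected synchronizing automata; Proposition~\ref{prop: str conn} then removes the strong-connectedness restriction, and the non-synchronizing case is vacuous. This part of the argument is purely formal — just a chain of implications, with the only thing to check being that the quotients appearing in Theorems~\ref{simsemquas} and~\ref{teosemrad} remain strongly connected, which is immediate.

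For the clause on extremal automata I would argue as follows. Under the standing hypothesis, the previous paragraph shows the conjecture holds for \emph{all} automata, hence in particular for every proper quotient of any automaton. Let $\mathcal{A}$ be extremal with $n$ states, so the shortest reset word of $\mathcal{A}$ has length exactly $(n-1)^2$. I would first recall the classical fact that an extremal automaton is strongly connected: a synchronizing automaton has a unique sink strongly connected component $C$, and if $C\neq Q$ one synchronizes inside $C$ after a short preliminary word, producing a reset word of length $<(n-1)^2$. So assume $\mathcal{A}$ is strongly connected and not simple, and suppose, for contradiction, that $\mathcal{A}$ is neither quasi-simple nor radical. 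Then $\mathcal{A}$ falls into one of the situations handled, in the proof of Theorem~\ref{teosemrad}, for a non-semisimple automaton that is not radical ($\text{Cong}(\mathcal{A})\setminus\{\Delta_\mathcal{A}\}$ either has a minimum $\rho$ with $\rad(\mathcal{A})=\rad(\mathcal{A}/\rho)$, or has at least two minimal elements), or, in the proof of Theorem~\ref{simsemquas}, for a semisimple automaton that is neither simple nor quasi-simple (a minimum congruence $\sigma$ with $\Syn(\mathcal{A})=\Syn(\mathcal{A}/\sigma)$ by Proposition~\ref{congminsem}, or several minimal atoms). Re-running those proofs — whose inductive input ``the conjecture holds for all smaller automata'' is now supplied by the first part — produces in each branch a reset word for $\mathcal{A}$ (of the form $u^2$, or $u_1u_2$, or a word extended linearly from a small congruence class, or the word furnished by Proposition~\ref{prop12}) whose length is \emph{strictly} below $(n-1)^2$. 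This contradicts extremality, so $\mathcal{A}$ must lie in one of the three classes.

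The delicate point — and the step I would be most careful with — is the bookkeeping in the last paragraph: one must verify that in \emph{every} branch of the proofs of Theorems~\ref{teosemrad} and~\ref{simsemquas} the length estimate for the lifted reset word is strictly less than $(n-1)^2$, not merely $\le (n-1)^2$, so that extremality is genuinely contradicted. The branches yield words of length at most $2(n/3-1)^2$, $2(n/2-1)^2$, $(n-2)^2$, $(n-2)^2+(n-2)$, or the bound of Proposition~\ref{prop12}, and in each case a one-line comparison shows this quantity is $<(n-1)^2$ for all $n\geq 2$. It is worth noting that this method gives no \emph{unconditional} classification of extremal automata: the argument needs the conjecture for the quotients $\mathcal{A}/\sigma$, which is precisely why the clause is stated — and can only be established — under the hypothesis that the conjecture holds for the three classes.
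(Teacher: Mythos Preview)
Your proposal is correct and follows essentially the same route as the paper: combine Theorem~\ref{simsemquas}, Theorem~\ref{teosemrad}, and Proposition~\ref{prop: str conn} for the reduction, and then observe that in every branch of those two inductive proofs the lifted reset word has length \emph{strictly} below $(n-1)^2$, so an extremal automaton cannot fall into any of those branches. Your write-up is actually more careful than the paper's one-line justification --- you explicitly list the branch-by-branch length estimates, verify that extremal implies strongly connected, and correctly point out that the extremal clause is established only under the standing hypothesis (since the quotient bounds come from the induction), a caveat the paper's phrasing leaves ambiguous.
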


\begin{proof}
    Combining Theorem~\ref{teosemrad}, Theorem~\ref{simsemquas} and Proposition~\ref{prop: str conn} we immediately deduce the reduction in the proof of the Černý conjecture. Now, observe that in the proofs of Theorem~\ref{teosemrad} and Theorem~\ref{simsemquas} we are always able to construct synchronizing words that strictly satisfy the Černý bound, and thus any extremal automata must lie in one of our classes.  
\end{proof}

Recall that the only known examples of extremal automata are the \v{C}ern\'y automata and a few sporadic ones. It is straightforward to verify that all of these belong to the class of simple automata, which leads us to the following conjecture:

\begin{conj}
    Any extremal automaton is simple.
\end{conj}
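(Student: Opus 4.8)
The plan is to bootstrap from the three--class reduction. By Theorem~\ref{corfinale} every extremal automaton is either simple, strongly connected quasi-simple, or strongly connected radical, and these three classes are pairwise disjoint (simple and quasi-simple automata are semisimple, radical ones are not; a quasi-simple automaton has a nontrivial minimum congruence whereas a simple one has none). Hence it suffices to prove that no strongly connected quasi-simple automaton and no strongly connected radical automaton is extremal, i.e.\ that each such automaton on $n$ states admits a reset word of length \emph{strictly} less than $(n-1)^2$. I would organise the whole argument as an induction on $n$: for non-simple $\mathcal{A}$ the reduction of Theorem~\ref{corfinale} supplies the inductive step, so all the content sits in the two base classes.

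\textbf{The radical case.} Let $\mathcal{A}$ be strongly connected radical with minimum nontrivial congruence $\sigma$; let $m$ be the smallest size of a $\sigma$-class and $\ell:=|Q/\sigma|\le n-1$, so that $\ell m\le n$. If $m\le 2$, Proposition~\ref{prop12} lifts a Černý reset word of $\mathcal{A}/\sigma$ (available by the inductive hypothesis, since $\mathcal{A}/\sigma$ has fewer states) to a strictly sub-Černý reset word of $\mathcal{A}$, using strong connectedness for the short repositioning factor; if $m\ge 3$ and $\ell$ is not too close to $2$, Proposition~\ref{finqs} does the same. What remains — and this is the crux — is the regime $\ell=2$ (two $\sigma$-classes of size $\approx n/2$), where the Pin--Frankl collapsing step in Proposition~\ref{finqs} costs $\Theta(n^3)$ and overshoots $(n-1)^2$. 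Here one must replace Pin--Frankl by an argument using the radical ideal itself: by Proposition~\ref{345} one has $\rad(\mathcal{A})=I(\sigma)\cap\rad(\mathcal{A}/\sigma)$; by Theorem~\ref{theo: bound on index} the nilpotency index of $\rad(\mathcal{A})$ is bounded by the height of $\text{Cong}(\mathcal{A})$ (which also drives a second induction, on that height, since a radical automaton's quotient is again non-semisimple); and by Proposition~\ref{prop: structure radical} the ideal $\Syn(\mathcal{A}/\sigma)$ contains a unique $0$-minimal ideal of $\M(\mathcal{A})/\rad(\mathcal{A})$. The goal is to turn this rigidity into a short word of $I(\sigma)$ that collapses the large target class far more efficiently than the generic bound allows; extracting such a quantitative statement from the algebra is precisely the step that is currently missing.

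\textbf{The quasi-simple case.} Let $\mathcal{A}$ be strongly connected quasi-simple with minimum nontrivial congruence $\sigma$, so $I(\sigma)=\Syn(\mathcal{A})$ and $\mathcal{A}$ is semisimple. No congruence-reduction is available here, so one argues directly through the algebra. By Theorem~\ref{propidminsemisimp}, $\mathcal{A}^*$ has a unique $0$-minimal ideal; combined with the rigidity of the Galois connection (only $\Syn(\mathcal{A})$, $\M(\mathcal{A})$, $\Delta_\mathcal{A}$, $\nabla_\mathcal{A}$ are fixed points) and Remark~\ref{rmk: connessione ideali e W.A.}, the plan is to show that the synchronized algebra is forced into the Černý-like shape $\mathcal{R}(\mathcal{A})\cong\mathbb{M}_m(\mathbb{C})$. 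When $m=n-1$ the transition monoid acts irreducibly, hence $\mathcal{A}$ is simple (Rystsov), contradicting quasi-simplicity; when $m<n-1$ one would relate the decomposition of the standard $(n-1)$-dimensional representation to $\sigma$ and use $I(\sigma)=\Syn(\mathcal{A})$ to bound reset-word length through the action on the unique $0$-minimal ideal — for which one first needs a positive answer to Open Problem~\ref{prob: quotient of semisimple}, so that $\mathcal{A}/\sigma$ is a usable smaller semisimple witness. The cleanest possible outcome, consistent with all the computational evidence in the paper, would instead be to prove that a strongly connected quasi-simple automaton is necessarily already simple, making this class vacuous and closing the case outright.

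\textbf{Main obstacle.} The genuine difficulty, common to the $\ell=2$ radical subcase and the quasi-simple case, is that the combinatorial lifting toolkit is provably too weak there, so one needs a new algebraic input converting the structural facts available here — unique $0$-minimal ideal, $\rad(\mathcal{A})=I(\sigma)\cap\rad(\mathcal{A}/\sigma)$, bounded nilpotency index — into an \emph{effective} bound on reset-word length. That input amounts to a strict form of the Černý conjecture for these two classes, and without it the conjecture stays open; the above is a research program rather than a proof.
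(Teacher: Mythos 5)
This statement is posed in the paper as an open conjecture: the paper offers no proof, only the empirical observation that all known extremal automata (the \v{C}ern\'y family and the sporadic examples) happen to be simple, together with the reduction of Theorem~\ref{corfinale}. Your proposal, as you yourself say in the last paragraph, is a research program rather than a proof, so it does not establish the statement. The genuine gap is exactly where you locate it: to conclude that every extremal automaton is simple you must show that strongly connected radical and strongly connected quasi-simple automata admit reset words of length \emph{strictly} below $(n-1)^2$, and this is not implied by anything in the paper --- it is a strict strengthening of the \v{C}ern\'y conjecture restricted to the two classes that the paper explicitly leaves open. In the radical case with two large $\sigma$-classes, Propositions~\ref{prop12} and~\ref{finqs} provably overshoot the bound, and neither Proposition~\ref{345}, nor Theorem~\ref{theo: bound on index}, nor Proposition~\ref{prop: structure radical} yields any quantitative control on word length; in the quasi-simple case your route additionally assumes a positive answer to Open Problem~\ref{prob: quotient of semisimple} and an unproved Wedderburn--Artin rigidity statement (itself only a conjecture in the paper). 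Turning Theorem~\ref{propidminsemisimp} or the Galois-connection rigidity into length bounds is precisely the missing algebraic input.

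There is also a subtler circularity to flag: the ``moreover'' clause of Theorem~\ref{corfinale} (extremal automata lie in the three classes) is obtained from proofs whose inductive steps lift \v{C}ern\'y words from quotients, so it is available under the hypothesis that the conjecture holds for the three classes; your bootstrap uses that clause as if it were unconditional and then tries to empty two of the three classes of extremal automata, which again requires the very bounds that are open. As a plan of attack your decomposition is sensible and consistent with the paper's framework, but no step of the crucial quantitative core is carried out, so the statement remains exactly as open after your argument as before it.
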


We conclude this work by presenting a diagram depicted in Fig.~\ref{fig:inclusion-diagram} showing how the different classes of automata introduced in this paper are related by inclusion. We also included references to examples that show these inclusions are proper and non-trivial. The only missing example (which, as far as we know, is still unknown) is that of a simple but non-irreducible automaton, which suggests that these two classes might actually coincide. We also believe that the class of simple (respectively, irreducible) automata has measure one with respect to the uniform distribution. More precisely, we pose the following conjecture.

\begin{opbm}
Let $n \in \mathbb{N}$, and let $\mathcal{A}_n$ be a deterministic finite automaton with $n$ states, chosen uniformly at random among all complete deterministic automata over a fixed finite alphabet of size greater than $1$. Is it true that the probability that $\mathcal{A}_n$ is simple tends to $1$ as $n \to \infty$? What can be said about the corresponding statement for irreducibility?
\end{opbm}

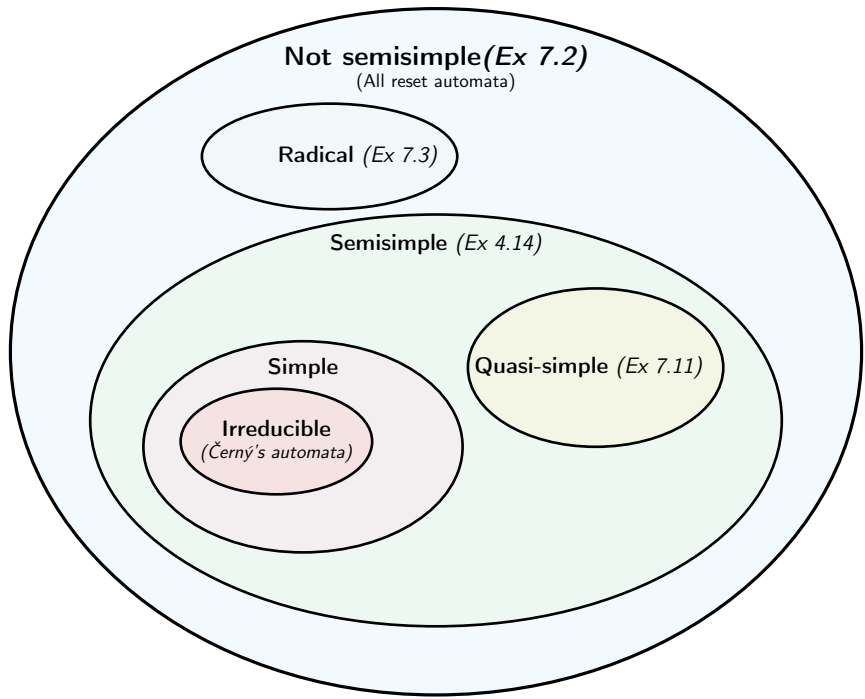
\begin{figure}[ht]
    \centering
    \begin{tikzpicture}[scale=0.7, transform shape, every node/.style={font=\sffamily,align=center}]

\definecolor{sync}{RGB}{210,235,255}
\definecolor{semi}{RGB}{225,245,215}
\definecolor{quasi}{RGB}{255,240,205}
\definecolor{simple}{RGB}{255,225,235}
\definecolor{irred}{RGB}{250,210,210}
\definecolor{radical}{RGB}{240,240,240}

\draw[fill=sync, fill opacity=0.25, line width=1.2pt]
  (0,1.3) ellipse [x radius=8cm, y radius=6.5cm];

\draw[fill=semi, fill opacity=0.3, line width=1pt]
  (0,0) ellipse [x radius=6.5cm, y radius=3.9cm];

\draw[fill=quasi, fill opacity=0.35, line width=1pt]
  (3,1) ellipse [x radius=2.4cm, y radius=1.5cm];

\draw[fill=simple, fill opacity=0.35, line width=1pt]
  (-2.5,-0.5) ellipse [x radius=3cm, y radius=2cm];

\draw[fill=irred, fill opacity=0.4, line width=1pt]
  (-3,-0.4) ellipse [x radius=1.8cm, y radius=1cm];

\draw[fill=radical, fill opacity=0.3, line width=1pt]
  (-2,5) ellipse [x radius=2.4cm, y radius=1cm];

\node[above=1mm] at (0,6) {\Large\textbf{Not semisimple\textit{(Ex 7.2) }}\\(All reset automata)};
\node[above=1mm] at (0,2.9) {\large\textbf{Semisimple} \textit{(Ex 4.14)}};
\node[right=2mm] at (-3.3,5) {\large\textbf{Radical} \textit{(Ex 7.3)}};
\node[right=1mm] at (0.5,1) {\large\textbf{Quasi-simple} \textit{(Ex 7.11)}};
\node at (-2.5,1) {\large\textbf{Simple}};
\node at (-3,-0.4) {\large\textbf{Irreducible}\\ \textit{(Černý's automata)}};

\end{tikzpicture}
    \caption{Venn diagram of the main automata classes considered in the paper.}
    \label{fig:inclusion-diagram}
\end{figure}

\section*{Acknowledgments}
We are grateful to the anonymous referees for their careful reading and valuable feedback, which helped us improve the clarity of the paper.

\printbibliography[title={References}]

@article{AlMaStVo,
 ISSN = {00029947},
 author = {Almeida, J. and Margolis, S. and Steinberg, B. and Volkov, M. V.},
 journal = {Transactions of the American Mathematical Society},
 number = {3},
 pages = {1429--1461},
 publisher = {American Mathematical Society},
 title = {Representation Theory of Finite Semigroups, Semigroup Radicals and Formal Language Theory},
 volume = {361},
 year = {2009}
}

@article{AlRo,
author = {Almeida, J. and Rodaro, E.},
title = {Semisimple Synchronizing Automata and the Wedderburn-Artin Theory},
journal = {International Journal of Foundations of Computer Science},
volume = {27},
number = {02},
pages = {127-145},
year = {2016},
doi = {10.1142/S0129054116400037}
}

@article{Araujo_Cameron,
author = {Araújo, J. and Bentz, W. and Cameron, P. J. and Royle, G. and Schaefer, A.},
title = {Primitive groups, graph endomorphisms and synchronization},
journal = {Proceedings of the London Mathematical Society},
volume = {113},
number = {6},
pages = {829-867},
doi = {https://doi.org/10.1112/plms/pdw040},
year = {2016}
}

@article{ArnSteinb,
title = {Synchronizing groups and automata},
journal = {Theoretical Computer Science},
volume = {359},
number = {1},
pages = {101-110},
year = {2006},
issn = {0304-3975},
doi = {https://doi.org/10.1016/j.tcs.2006.02.003},
author = {Arnold, F. and Steinberg, B.}
}

@article{BehJoh,
author = {Behague, N. and Johnson, R.},
year = {2022},
pages = {},
title = {Synchronizing Times for $k$-sets in Automata},
volume = {29},
journal = {The Electronic Journal of Combinatorics},
doi = {10.37236/9819}
}

@article{BeBePe,
author = {B\'{e}al, M.-P. and Berlinkov, M. V. and Perrin, D.},
title = {A quadratic upper bound on the size of a synchronizing word  in one-cluster automata},
journal = {International Journal of Foundations of Computer Science},
volume = {22},
number = {02},
pages = {277-288},
year = {2011},
doi = {10.1142/S0129054111008039}
}

@article{Perrin_unamb_coded_shifts,
title = {Unambiguously coded shifts},
journal = {European Journal of Combinatorics},
volume = {119},
pages = {103812},
year = {2024},
note = {Dedicated to the memory of Pierre Rosenstiehl},
issn = {0195-6698},
doi = {https://doi.org/10.1016/j.ejc.2023.103812},
author = {Béal, M.-P. and Perrin, D. and Restivo, A.}
}

@article{Ce64,
author = {Černý, J.},
journal = {Matematicko-fyzikálny časopis},
language = {slo},
number = {3},
pages = {208-216},
publisher = {Mathematical Institute of the Slovak Academy of Sciences},
title = {Poznámka k homogénnym experimentom s konečnými automatmi},
volume = {14},
year = {1964},
}

@article{Don,
  title={The \v{C}ern{\'y} Conjecture and 1-Contracting Automata},
  author={Don, H.},
  journal={Electron. J. Comb.},
  year={2015},
  volume={23},
  pages={3}
}

@article{Dubuc,
	author = {Dubuc, L.},
	title = {Sur les automates circulaires et la conjecture de Černý},
	DOI= "10.1051/ita/1998321-300211",
	journal = {RAIRO-Theor. Inf. Appl.},
	year = 1998,
	volume = 32,
	number ="1-3",
	pages = "21-34",
}

@article{Epp,
author = {Eppstein, D.},
title = {Reset Sequences for Monotonic Automata},
journal = {SIAM Journal on Computing},
volume = {19},
number = {3},
pages = {500-510},
year = {1990},
doi = {10.1137/0219033}
}

@article{GaloisC,
author = {Erné, M. and Koslowski, J. and Melton, A. and Strecker, G. E.},
title = {A Primer on Galois Connections},
journal = {Annals of the New York Academy of Sciences},
volume = {704},
number = {1},
pages = {103-125},
doi = {https://doi.org/10.1111/j.1749-6632.1993.tb52513.x},
year = {1993}
}

@article{Frankl,
title = {An Extremal Problem for two Families of Sets},
journal = {European Journal of Combinatorics},
volume = {3},
number = {2},
pages = {125-127},
year = {1982},
issn = {0195-6698},
doi = {https://doi.org/10.1016/S0195-6698(82)80025-5},
author = {Frankl, P.}
}

@book{Hopcroft,
 author = {Hopcroft, J. E. and Ullman, J. D. and Motwani, R.},
 title = {Introduction to automata theory, languages, and computation.},
 edition = {2nd ed.},
 isbn = {0-201-44124-1},
 year = {2001},
 publisher = {Reading, MA: Addison-Wesley},
 language = {English}
}

@book{Howie,
    author = {Howie, J. M},
    title = {Fundamentals of Semigroup Theory},
    publisher = {Oxford University Press},
    year = {1995},
    month = {12},
    isbn = {9780198511946},
    doi = {10.1093/oso/9780198511946.001.0001}
}

@article{Kari,
title = {Synchronizing finite automata on Eulerian digraphs},
journal = {Theoretical Computer Science},
volume = {295},
number = {1},
pages = {223-232},
year = {2003},
note = {Mathematical Foundations of Computer Science},
issn = {0304-3975},
doi = {https://doi.org/10.1016/S0304-3975(02)00405-X},
author = {Kari, J.}
}

@incollection{KaVo,
 author = {Kari, J. and Volkov, M. V.},
 title = {{\v{C}}ern{\'y}'s conjecture and the road colouring problem},
 booktitle = {Handbook of automata theory. Volume I. Theoretical foundations},
 isbn = {978-3-98547-002-0; 978-3-98547-006-8; 978-3-98547-506-3},
 pages = {525--565},
 year = {2021},
 publisher = {Berlin: European Mathematical Society (EMS)},
 language = {English},
 doi = {10.4171/Automata-1/15}
}

@book{Lam,
 author = {Lam, T. Y.},
 title = {A first course in noncommutative rings},
 fseries = {Graduate Texts in Mathematics},
 series = {Grad. Texts Math.},
 issn = {0072-5285},
 volume = {131},
 isbn = {0-387-97523-3},
 year = {1991},
 publisher = {New York etc.: Springer-Verlag},
 language = {English}
}

@incollection{Pin,
title = {On two Combinatorial Problems Arising from Automata Theory},
editor = {C. Berge and D. Bresson and P. Camion and J.F. Maurras and F. Sterboul},
series = {North-Holland Mathematics Studies},
publisher = {North-Holland},
volume = {75},
pages = {535-548},
year = {1983},
booktitle = {Combinatorial Mathematics},
issn = {0304-0208},
doi = {https://doi.org/10.1016/S0304-0208(08)73432-7},
author = {Pin, J. E.}
}

@article{RoPack18,
 author = {Rodaro, E.},
 title = {A bound for the length of the shortest reset words for semisimple synchronizing automata via the packing number},
 fjournal = {Journal of Algebraic Combinatorics},
 journal = {J. Algebr. Comb.},
 issn = {0925-9899},
 volume = {50},
 number = {3},
 pages = {237--253},
 year = {2019},
 language = {English},
 doi = {10.1007/s10801-018-0851-1}
}

@article{Ro18Adv,
 author = {Rodaro, E.},
 title = {Strongly connected synchronizing automata and the language of minimal reset words},
 fjournal = {Advances in Applied Mathematics},
 journal = {Adv. Appl. Math.},
 issn = {0196-8858},
 volume = {99},
 pages = {158--173},
 year = {2018},
 language = {English},
 doi = {10.1016/j.aam.2018.04.006}
}

@article{Ryst_Prim,
 author = {Rystsov, I. K.},
 title = {Primitive and irreducible automata},
 fjournal = {Cybernetics and Systems Analysis},
 journal = {Cybern. Syst. Anal.},
 issn = {1060-0396},
 volume = {51},
 number = {4},
 pages = {506--513},
 year = {2015},
 language = {English},
 doi = {10.1007/s10559-015-9742-9}
}

@article{Shitov,
 author = {Shitov, Y.},
 title = {An improvement to a recent upper bound for synchronizing words of finite automata},
 fjournal = {Journal of Automata, Languages and Combinatorics},
 journal = {J. Autom. Lang. Comb.},
 issn = {1430-189X},
 volume = {24},
 number = {2-4},
 pages = {367--373},
 year = {2019},
 language = {English},
 doi = {10.25596/jalc-2019-367}
}

@article{Steinb,
 author = {Steinberg, B.},
 title = {The {\v{C}}ern{\'y} conjecture for one-cluster automata with prime length cycle},
 fjournal = {Theoretical Computer Science},
 journal = {Theor. Comput. Sci.},
 issn = {0304-3975},
 volume = {412},
 number = {39},
 pages = {5487--5491},
 year = {2011},
 language = {English},
 doi = {10.1016/j.tcs.2011.06.012}
}

@article{SteinbEJC,
 author = {Steinberg, B.},
 title = {A theory of transformation monoids: combinatorics and representation theory},
 fjournal = {The Electronic Journal of Combinatorics},
 journal = {Electron. J. Comb.},
 issn = {1077-8926},
 volume = {17},
 number = {1},
 pages = {research paper r164, 56},
 year = {2010},
 language = {English}
}

@incollection{Szy17,
 author = {Szyku{\l}a, M.},
 title = {Improving the upper bound on the length of the shortest reset word},
 booktitle = {35th symposium on theoretical aspects of computer science, STACS 2018, Caen, France, February 28 -- March 3, 2018},
 isbn = {978-3-95977-062-0},
 pages = {13},
 note = {Id/No 56},
 year = {2018},
 publisher = {Wadern: Schloss Dagstuhl -- Leibniz Zentrum f{\"u}r Informatik},
 language = {English},
 doi = {10.4230/LIPIcs.STACS.2018.56}
}

@article{Trah,
 author = {Trahtman, A. N.},
 title = {The {\v{C}}ern{\'y} conjecture for aperiodic automata},
 fjournal = {Discrete Mathematics and Theoretical Computer Science. DMTCS},
 journal = {Discrete Math. Theor. Comput. Sci.},
 issn = {1365-8050},
 volume = {9},
 number = {2},
 pages = {3--10},
 year = {2007},
 language = {English}
}

@article{Trah_road_coloring,
    author = {Trahtman, A. N.},
    title = {The road coloring problem},
    journal = {Israel Journal of Mathematics},
    year = {2009},
    doi = {10.1007/s11856-009-0062-5},
    pages = {51-60},
    volume = {172},
    issn = {1565-8511},
    language = {english}
}

@incollection{Vo_Survey,
 author = {Volkov, M. V.},
 title = {Synchronizing automata and the {\v{C}}ern{\'y} conjecture},
 booktitle = {Language and automata theory and applications. Second international conference, LATA 2008, Tarragona, Spain, March 13--19, 2008. Revised papers},
 isbn = {978-3-540-88281-7},
 pages = {11--27},
 year = {2008},
 publisher = {Berlin: Springer},
 language = {English},
 doi = {10.1007/978-3-540-88282-4_4}
}

@incollection{Vo_CIAA07,
 author = {Volkov, M. V.},
 title = {Synchronizing automata preserving a chain of partial orders},
 booktitle = {Implementation and application of automata. 12th international conference, CIAA 2007, Prague, Czech Republic, July 16--18, 2007. Revised selected papers},
 isbn = {978-3-540-76335-2},
 pages = {27--37},
 year = {2007},
 publisher = {Berlin: Springer},
 language = {English},
 doi = {10.1007/978-3-540-76336-9_5}
}

@article{VolkSurv2,
 author = {Volkov, M. V.},
 title = {Synchronization of finite automata},
 fjournal = {Russian Mathematical Surveys},
 journal = {Russ. Math. Surv.},
 issn = {0036-0279},
 volume = {77},
 number = {5},
 pages = {819--891},
 year = {2022},
 language = {English},
 doi = {10.4213/rm10005e}
}

@article{Vol_Syn_Prim,
 author = {Volkov, M. V.},
 title = {Synchronization of primitive automata},
 fjournal = {RAIRO. Theoretical Informatics and Applications},
 journal = {RAIRO, Theor. Inform. Appl.},
 issn = {0988-3754},
 volume = {58},
 pages = {7},
 note = {Id/No 3},
 year = {2024},
 language = {English},
 doi = {10.1051/ita/2024003}
}

@misc{survey_volkov_2025,
      title={List of Results on the \v{C}ern\'y Conjecture and Reset Thresholds for Synchronizing Automata}, 
      author={Volkov, M. V.},
      year={2025},
      eprint={2508.15655},
      archivePrefix={arXiv},
      primaryClass={cs.FL},
      url={https://arxiv.org/abs/2508.15655}, 
}

@book{Universal,
 author = {Burris, S. and Sankappanavar, H. P.},
 title = {A course in universal algebra},
 fseries = {Graduate Texts in Mathematics},
 series = {Grad. Texts Math.},
 issn = {0072-5285},
 volume = {78},
 year = {1981},
 publisher = {Springer, Cham},
 language = {English}
}

@article{DDK85,
 author = {Demel, J. and Demlov{\'a}, M. and Koubek, V.},
 title = {Fast algorithms constructing minimal subalgebras, congruences, and ideals in a finite algebra},
 fjournal = {Theoretical Computer Science},
 journal = {Theor. Comput. Sci.},
 issn = {0304-3975},
 volume = {36},
 pages = {203--216},
 year = {1985},
 language = {English},
 doi = {10.1016/0304-3975(85)90042-8}
}

@article{Freese,
 author = {Freese, R.},
 title = {Computing congruences efficiently},
 fjournal = {Algebra Universalis},
 journal = {Algebra Univers.},
 issn = {0002-5240},
 volume = {59},
 number = {3-4},
 pages = {337--343},
 year = {2008},
 language = {English},
 doi = {10.1007/s00012-008-2073-1}
}

\end{document}